\newif\ifhighlightchanges
\newtheorem{theorem}{Theorem}
\newtheorem{example}{Example}
\newtheorem{lemma}[theorem]{Lemma}
\newtheorem{definition}[theorem]{Definition}
\newtheorem{corollary}[theorem]{Corollary}
\newtheorem{construction}{Construction}
\newtheorem{fact}[theorem]{Fact}
\newtheorem{remark}[theorem]{Remark}
\newcommand{\ceil}[1]{\left\lceil #1 \right\rceil}
\newcommand{\floor}[1]{\left\lfloor #1 \right\rfloor}
\DeclareMathOperator{\poly}{poly}
\DeclareMathOperator{\nei}{N}
\DeclareMathOperator{\Dec}{Dec}
\DeclareMathOperator{\Rep}{Rep}
\newcommand{\alg}[1]{\textsc{#1}\xspace}
\newcommand{\cJ}{\mathcal{J}}
\newcommand{\cC}{\mathcal{C}}
\newcommand{\cE}{\mathcal{E}}
\begin{document}

\title{Constructing Low-Redundancy Codes via Distributed Graph Coloring}

\author{%
Yuting~Li,  Ryan~Gabrys,~\IEEEmembership{Member,~IEEE,} Farzad~Farnoud,~\IEEEmembership{Member,~IEEE}%
\thanks{Yuting Li is with the Department of Computer Science at the University of Virginia, USA, \texttt{mzy8rp@virginia.edu}.}
\thanks{Ryan Gabrys is with Calit2 at the University of California-San Diego, USA, \texttt{rgabrys@ucsd.edu}.}
\thanks{Farzad Farnoud is with the Department of Electrical and Computer Engineering and the Department of Computer Science at the University of Virginia, USA, \texttt{farzad@virginia.edu}.}
\thanks{This paper was presented in part at ISIT 2023, Taipei, Taiwan.}
\thanks{This work was supported in part by NSF grants CIF-2312871, CIF-2312873, and CIF-2144974.}
}

\maketitle

\begin{abstract}
We present a general framework for constructing error-correcting codes using distributed graph coloring under the LOCAL model. Building on the correspondence between independent sets in the confusion graph and valid codes, we show that the color of a single vertex - consistent with a global proper coloring - can be computed in polynomial time using a modified version of Linial’s coloring algorithm, leading to efficient encoding and decoding. Our results include: i) uniquely decodable code constructions for a constant number of errors of any type with redundancy twice the Gilbert–Varshamov bound; ii) list-decodable codes via a proposed extension of graph coloring, namely, hypergraph labeling; iii) an incremental synchronization scheme with reduced average-case communication when the edit distance is not precisely known; and iv) the first asymptotically optimal codes (up to a factor of 8) for correcting bursts of unbounded-length edits. Compared to syndrome compression, our approach is more flexible and generalizable, does not rely on a good base code, and achieves improved redundancy across a range of parameters.
\end{abstract}

\section{Introduction}\label{sec:intro}

Certain types of errors, such as insertions and deletions, present unique challenges in the construction of error-correcting codes, even for a constant number of errors. These errors change the positions of the symbols in the input sequence, making some of the common approaches, such as algebraic coding, difficult to utilize. This challenge has led to disparate progress in the design of error-correcting codes for edit errors versus substitution and erasure errors. 
In particular, it took more than half a century after Levenshtein's optimal single-deletion-correcting codes~\cite{levenshtein1966binary} to construct codes with redundancy within a known constant factor of optimal~\cite{sima_two_2018,gabrys2018codes,guruswami2021explicit} to correct two deletions. 

In this work, we present a general framework for constructing error-correcting codes, including edit errors, with low redundancy and polynomial-time encoding and decoding based on distributed graph coloring. The redundancy of the resulting codes is twice the Gilbert-Varshamov bound (\added{Levenshtein bound in the edit distance case}). We accomplish this by translating insights from distributed graph coloring in the LOCAL model into our code-construction technique. The LOCAL model~\cite{linial87} is designed to study how local graph information can be transformed into a global solution, and how much do we lose by limiting our information to a specific radius. This perspective is useful for correcting codes for a constant number of errors. Although the encoding problem is global, as we will demonstrate, it is possible to solve it locally in a way that is consistent with the global solution with a small loss, when the number of errors is bounded. 

To construct codes in this framework, we will utilize Linial's coloring algorithm~\cite{linial87}. This algorithm, like other LOCAL algorithms, assumes an infinite amount of computational power at each vertex, as the focus is on the role of information in solving the problem. For our problem, however, computational complexity is critical, as our goal is to construct codes with polynomial encoding and decoding. We thus develop a modified version of the algorithm that is computationally efficient under mild conditions. 

To our knowledge, the only other general code construction method for a constant number of errors (for essentially any channel) is the celebrated work~\cite{sima2020syndrome},  introducing syndrome compression. We show that syndrome compression can be viewed in this framework as utilizing a specific distributed graph coloring method. Compared to the method proposed here, syndrome compression requires us to start with a reasonably good code, which may be difficult for certain channels. As a consequence, we are able to construct new codes with improved parameters, as detailed in Section~\ref{subsec:sc}.
 
The graph coloring framework is also more generalizable. In particular, we will present an alternative problem, called \emph{labeling} and show that distributed labeling algorithms can be used to construct list-decodable codes. Specifically, we present the first general construction method for list-decodable codes correcting a constant number of errors with a constant list size. Our technique hinges on specific combinatorial structures. We first prove they exist and then show that the existential proof  produces them with high probability. Using this approach, we construct codes with list size $\ell$ that can correct $k$ edits, with redundancy approximately $(3+1/\ell)k\log n$, which is strictly less than twice the GV bound. In the graph coloring framework, we also construct the first codes for correcting bursts of edits of unbounded length with redundancy approximately twice the GV bound. 

Finally, we construct codes for incremental synchronization. This problem arises in data synchronization when the two parties are uncertain about the edit distance between the two versions of the data. Here, assuming the largest possible distance will lead to inefficient synchronization. The proposed codes, constructed using the graph coloring framework, provide a lower expected communication load through an iterative encoding.

\subsection{Our Approach}\label{subsec:app}
We will now describe an overview of our approach, postponing the details to the following sections. In our framework, we will make use of the confusion  graph $G$ for a communication channel, introduced by Shannon~\cite{shannon_zero_1956}. The vertex set $V$ of this graph is the set of all possible input vectors of length $n$, e.g., $\{0,1\}^n$, and two vertices are adjacent precisely when they can produce the same channel output. It is well known and easy to see that an error-correcting code for the channel is equivalent to an \emph{independent set} in this graph, i.e., a set of vertices no two of which are adjacent. Now, if one finds a proper \(M\)-coloring
\[
\Phi\colon V \to \{1,\dots,M\},
\]
so that adjacent vertices receive distinct colors, then each color class
\[
\{x\in V : \Phi(x)=i\},\quad i=1,\dots,M,
\]
is an independent set, and hence a code. Among these, there exists a code with redundancy at most $\log M$.  However, computing such a coloring is rarely feasible in practice, as the number of vertices is exponential in $n$. Furthermore, even an efficiently computable \(\Phi(x)\) does not directly yield an efficient encoding algorithm. 

To overcome this difficulty, we propose to utilize the coloring in a different way. Consider the code 
\[
\cC = \{(x,\Phi(x)):x\in V\}.
\]
For the moment, suppose in each use of the channel, we send $x$ and $\Phi(x)$, but only $x$ is subject to errors (in accordance with $G$) and $\Phi(x)$ is received error-free. \added{We call these channels \emph{synchronization} channels. The motivation for this terminology is discussed in Remark~\ref{rem:err-free}.} Under this \replaced{channel}{model}, $\cC$ is an error-correcting code. \added{For general channels, protecting $\Phi(x)$ requires a negligible amount of redundancy as we discuss in Remark~\ref{rem:err-free} and the Appendix. Our focus on synchronization channels is merely to simplify the presentation and highlight the main points of the construction.} As this code is systematic, if one can efficiently evaluate $\Phi(x)$, then encoding is straightforward. For decoding, when $y$ is received, we first construct the set of all $x'$ that can produce $y$. Note that each element of this set is either equal to $x$ or one of its neighbors. So only one element in this set will have color $\Phi(x)$, thus allowing us to uniquely identify $x$.

We thus observe that for encoding, we only need to compute the color $\Phi$ of one vertex,  
and do not need to know the color of every vertex. This leads to the following question: Is there an algorithm/function $\Phi$ for coloring $G$ that is efficiently computable for a single vertex? There is, in fact, a class of algorithms with this property, namely, \emph{distributed graph coloring} algorithms. 

We now describe the LOCAL model for distributed graph algorithms, introduced by Linial~\cite{linial92}, and then we will present how LOCAL distributed graph coloring can be used for efficient code construction. A LOCAL algorithm is executed in rounds, where in each round, each node in the graph can perform unbounded computation and send messages of unbounded size to its neighbors. All communication is performed synchronously and reliably. Note that each vertex in this model has an infinite amount of computational power, as the original goal of the model is to study how local information can yield a global solution.

Consider now a LOCAL coloring algorithm $\alg{alg}$ such that the amount of computation performed by each vertex in each round would take time at most $T$ if it were to be performed on a ``regular'' processor. Furthermore, suppose that this algorithm takes $R$ rounds. At the end of the execution, each node $x$ will possess $\Phi(x)$, where $\Phi$ is a specific proper coloring of $G$.

Suppose now that $G$ is a confusion graph with $2^{\poly(n)}$ vertices (vectors), and we would like to encode $x$ as $(x,\Phi(x))$. As $\alg{alg}$ takes $R$ rounds, and in each round each node only communicates with its neighbors, the value of $\Phi(x)$ for any vertex $x$ is independent from any vertex $x'$ at distance larger than $R$. So it suffices to perform the computation only at nodes within a distance $R$ of $x$ (this statement will be made precise in the proof of \Cref{thm:rt}). This means that the total amount of computation to compute $\Phi(x)$ for a given $x$ on a regular processor is at most $O(\Delta^{O(R)}R^2T)$, where $\Delta$ is the maximum degree of $G$. Hence, $\alg{alg}$ yields an error-correcting code with encoding  complexity $O(\Delta^{O(R)}R^2T)$ (see \Cref{thm:rt}).

Linial~\cite{linial92} provides a LOCAL $O(\Delta^2)$-coloring algorithm that runs in $O(\log^* n)$ rounds on a graph with $2^{\poly n}$ vertices. Linial's algorithm is based on \textit{re-coloring} the vertices in each round. Namely, each vertex starts with its ID as its color (e.g., a binary sequence of length $n$). In each round, each vertex broadcasts its current color to all its neighbors. Then each vertex computes a new color from a smaller set in a way that it does not conflict with any of its neighbors. In particular, it is shown that given a proper coloring with $2^k$ colors, in one round, it is possible to obtain a proper coloring with $5k\Delta^2$ colors (although the proof with these specific constants in~\cite{linial92} is nonconstructive). 

Unfortunately, a direct application of Linial's algorithm leads to super-polynomial encoding complexity, as $R$ is not a constant and $\Delta$ for confusion graphs of interest is polynomial in $n$. Additionally, Linial's algorithm does not guarantee a polynomial $T$. Hence, the complexity $O(\Delta^{O(R)}R^2T)$ is super-polynomial. In other words, while this LOCAL algorithm is useful for us mainly through the limitation it puts on the dependence between the colors of different nodes, it does not limit the amount of computation. To remedy this, we will present a modified version of Linial's algorithm that in fact leads to an  efficient encoding and decoding (See \Cref{thm:GV}).

We have thus outlined a method for efficiently encoding data $x$ as $(x,\Phi(x))$ using distributed graph coloring under the LOCAL model, along with efficient decoding. The redundancy, i.e., the length of $\Phi(x)$, is $2\log\Delta+O(\log \log n)$ symbols. Note that according to the GV bound, there exists a code with redundancy $\log(\Delta+1)$. So the proposed codes achieve twice the redundancy given by the GV bound.

We extend this approach to list decoding (\Cref{thm:list}), utilize for incremental synchronization (\Cref{thm:incremental}), and construct codes that outperform the state of the art (\Cref{cor:kedit}, \Cref{thm:SSE}).

\begin{remark}\label{rem:err-free}
   Recall that we assumed $\Phi(x)$ can be transmitted between the encoder and decoder error-free. In practice, we will also need to protect the information $\Phi(x)$. However, if the length of $\Phi(x)$ is small, say $O(\log n)$, this can be accomplished with redundancy $O(\log \log n)$ for channels such as the deletion channel \added{using Lemma~\ref{lem:rep1} of the Appendix}. Additionally, for data synchronization, where $x$ is the updated document and $y$ is the original document, $\Phi(x)$ is indeed transmitted through a reliable link. For the sake of simplicity, in the rest of the paper, we will refer to a set of the form $\{(x,\Phi(x)):x\in\Sigma_2^n\}$ as an error-correcting code for a \added{synchronization channel}  if for any inputs $x,x'$ that can produce the same output $y$, we have $\Phi(x)\neq \Phi(x')$.
\end{remark}

\begin{figure}
    \centering
    \begin{tikzpicture}[
  scale=1,
  every node/.style={circle,draw=black,inner sep=1pt}]

\tikzset{vt0/.style={circle,draw=black,fill=yellow!70,text=black,postaction={pattern=horizontal lines,pattern color=gray!50}}}
\tikzset{vt1/.style={circle,draw=black,fill=cyan!50,text=black,postaction={pattern=vertical lines,pattern color=cyan!60!black}}}
\tikzset{vt2/.style={circle,draw=black,fill=red!70,text=white,postaction={pattern=north east lines,pattern color=white}}}
\tikzset{vt3/.style={circle,draw=black,fill=blue!80,text=white,postaction={pattern=north west lines,pattern color=white}}}

\node[vt0] (000) at (0.00,0.00) {000};
\node[vt1] (100) at (1,1) {100};
\node[vt3] (001) at (1,-1) {001};
\node[vt2] (010) at (1.5,0) {010};
\node[vt0] (101) at (3,0) {101};
\node[vt3] (110) at (3.5,1) {110};
\node[vt1] (011) at (3.5,-1) {011};
\node[vt2] (111) at (4.5,0) {111};

\draw (000) -- (001);
\draw (000) -- (010);
\draw (000) -- (100);
\draw (001) -- (010);
\draw (001) -- (011);
\draw (001) -- (100);
\draw (001) -- (101);
\draw (010) -- (011);
\draw (010) -- (100);
\draw (010) -- (101);
\draw (010) -- (110);
\draw (011) -- (101);
\draw (011) -- (110);
\draw (011) -- (111);
\draw (100) -- (101);
\draw (100) -- (110);
\draw (101) -- (110);
\draw (101) -- (111);
\draw (110) -- (111);
\end{tikzpicture}
\color{black}
    \caption{The confusion graph for the 1-deletion channel with input $\{0,1\}^3$, along with a proper coloring. The graph coloring shown here is such that each color class corresponds to a VT code.}
    \label{fig:confGraph}
\end{figure}
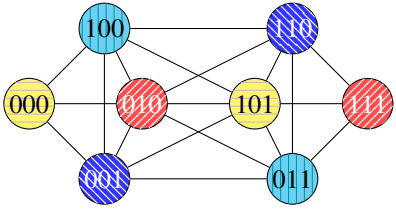

\subsection{Our Results}
In the framework described in the previous subsection, we study four problems; namely i) general code construction, ii) list decoding, iii) incremental synchronization, and iv) codes correcting substring edit errors. In this subsection, we present the main result for each problem.

\subsubsection*{Uniquely Decodable Codes} Our first result pertains to the construction of codes for combinatorial channels and will be proven in Section~\ref{sec:wcerrors}. We represent a combinatorial communication/storage channel, or equivalently, the set of errors that can be applied to the channel input by $\cE$. Specifically, for $x\in\Sigma_q^n$, where $\Sigma_q$ is the set of $q$-ary strings of length $n$, we denote by $\cE(x)$ the set of all possible channel outputs for the input $x$. The confusion graph $G$ for a channel $\cE$ and set of inputs $\Sigma_q^n$ is a graph with vertex set $\Sigma_q^n$, with an edge between $x,x'\in\Sigma_q^n$ if they can both produce some $y$, i.e., $\cE(x)\cap\cE(x')\neq \emptyset$. An example is given in \Cref{fig:confGraph}. Denote by $\deg v$ the degree of a vertex $v$ in $G$ and by $\Delta(G)$ the maximum degree.

Our first result enables the use of LOCAL distributed coloring algorithm for construction of error-correcting codes (see \Cref{rem:err-free}).

\begin{restatable}{theorem}{thmrt}\label{thm:rt}
     Let $G_n$ be the confusion graph for a \added{synchronization} channel $\cE$ with inputs $\Sigma_q^n$. Suppose $\Delta_n$ is the maximum degree of $G_n$. Suppose a LOCAL algorithm takes $R_n$ rounds to compute a $B_n$-coloring $\Phi:\Sigma_q^n\to [B_n]$, and each round takes time at most $T_n$. Then, 
     \[
     \{(x,\Phi(x)): x\in\Sigma_q^n\}
     \]
     is an error-correcting code for $\cE$ with encoding complexity $O(\Delta_n^{R_n}(R_n+1)R_nT_n)$, decoding complexity \added{$O(\Delta_n^{R_n+1}(R_n+1)R_nT_n)$} and redundancy $\log B_n$.
\end{restatable}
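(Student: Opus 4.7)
The plan is to establish the theorem in four steps: correctness of the code, bounding the encoding cost via the locality of LOCAL algorithms, bounding the decoding cost via a clique bound on $G_n$, and reading off the redundancy.

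First I would argue correctness. By Remark~\ref{rem:err-free}, it suffices to show that whenever $x,x' \in \Sigma_q^n$ satisfy $\cE(x)\cap\cE(x')\neq\emptyset$, the pairs $(x,\Phi(x))$ and $(x',\Phi(x'))$ are distinguishable. By definition of the confusion graph such $x,x'$ are adjacent in $G_n$, and since $\Phi$ is a proper coloring we have $\Phi(x)\neq\Phi(x')$, so the two codewords differ in the second coordinate.

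The heart of the proof is the encoding algorithm. The key structural fact about LOCAL algorithms is that the state of a vertex $v$ after $k$ rounds depends only on the initial states (IDs) of vertices in the ball $B_k(v)$ of radius $k$ around $v$ in $G_n$; this is a straightforward induction on $k$, since in each round a vertex only receives information from its immediate neighbors. Therefore $\Phi(x)$ can be computed by simulating $\alg{alg}$ restricted to the ball $B_{R_n}(x)$, using vectors themselves as IDs. To realize this as an actual algorithm I would grow $B_{R_n}(x)$ by breadth-first search from $x$ using a channel-specific oracle that enumerates neighbors (implicitly absorbed into $T_n$), then simulate $R_n$ rounds on the resulting induced subgraph. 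The ball has size at most $\sum_{i=0}^{R_n}\Delta_n^i \le (R_n+1)\Delta_n^{R_n}$, each of the $R_n$ rounds performs at most $T_n$ work per vertex, yielding the claimed encoding bound $O(\Delta_n^{R_n}(R_n+1)R_nT_n)$.

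For decoding, given the received vector $y$ (together with the error-free color, by the synchronization-channel convention of Remark~\ref{rem:err-free}), I would enumerate the candidate set $\cE^{-1}(y)=\{x'\in\Sigma_q^n : y\in\cE(x')\}$. Any two elements of $\cE^{-1}(y)$ share $y$ as a possible output, hence are pairwise adjacent in $G_n$; so $\cE^{-1}(y)$ is a clique and has size at most $\Delta_n+1$. For each candidate I would run the encoder to obtain its color and return the unique candidate whose color matches the received one (unique by the correctness argument). This multiplies the encoding cost by at most $\Delta_n+1$, giving the claimed $O(\Delta_n^{R_n+1}(R_n+1)R_nT_n)$ decoding bound. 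The redundancy claim is immediate since $\Phi(x)\in[B_n]$ can be written with $\log B_n$ symbols.

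The main conceptual obstacle is Step~2 — the locality argument — because everything downstream hinges on the fact that a faithful simulation of $\Phi(x)$ is possible from only a polynomial-in-$\Delta_n^{R_n}$ amount of graph data, even though $G_n$ is exponentially large. The induction itself is routine; the delicate point is that the IDs used by $\alg{alg}$ must be intrinsic to vertices (we use the vector $x$ itself) so that the restricted simulation is guaranteed to produce exactly the same color as the global execution. The remaining obstacles are book-keeping: being careful that the ball-size bound is $(R_n+1)\Delta_n^{R_n}$ rather than only $\Delta_n^{R_n}$, and noting that enumerating $\cE^{-1}(y)$ and the neighborhood oracle are channel-specific but naturally fit within the $T_n$ budget.
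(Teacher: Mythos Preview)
Your proposal is correct and follows essentially the same approach as the paper: the locality-of-LOCAL argument to bound encoding by simulating $R_n$ rounds on the ball $B_{R_n}(x)$ of size at most $(R_n+1)\Delta_n^{R_n}$, and decoding by computing $\Phi$ for every candidate in $\cE^{-1}(y)$. Your clique observation that $|\cE^{-1}(y)|\le \Delta_n+1$ is slightly sharper than the paper's (which just says $\le \Delta_n$), but of course both give the same $O$-bound.
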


To allow the algorithm to satisfy the required complexity upper bounds, we introduce the notion of explicitness.  A graph $G$ is $T$-\textit{explicit} if there is an ordering of vertices and an algorithm such that given a vertex index $v$ and $1\le i\le \deg v$, the algorithm can output $v$'s $i$-th neighbor in this ordering in time $T$. 

In \Cref{sec:wcerrors}, we will present a coloring algorithm that will help establish the following theorem.
 
\begin{restatable}{theorem}{thmGV}\label{thm:GV}
    Let $G_n$ be the confusion graph for a \added{synchronization} channel $\cE$ with inputs $\Sigma_q^n$, and suppose $\Delta_n = \Delta(G_n) = O(\poly(n))$, where $q,n$ are positive integers. If $G_n$ is $\poly(n)$-explicit, then one can construct a  (coloring) function $\Phi: \Sigma_q^n \to \Sigma_2^{\rho}$, where $\rho = 2 \log \Delta_n + O(\log \log n)$, such that 
    \[
        \{(x,\Phi(x)):x\in\Sigma_q^n\}
    \]
    is an error-correcting code for the channel $\cE$, with $\poly(n)$ encoding and decoding complexity and redundancy $2\log \Delta_n+O(\log\log n)$.
\end{restatable}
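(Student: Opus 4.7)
The plan is to reduce to \Cref{thm:rt}: I want to exhibit a LOCAL coloring algorithm that, on the confusion graph $G_n$, produces $B_n = \Delta_n^{2}\operatorname{polylog}(n)$ colors in $R_n = O(1)$ rounds with $T_n = \operatorname{poly}(n)$ per-round computation. With these parameters, and using the hypothesis $\Delta_n = \operatorname{poly}(n)$, \Cref{thm:rt} immediately yields encoding and decoding complexity $\operatorname{poly}(n)$ and redundancy $\log B_n = 2\log\Delta_n + O(\log\log n)$, as claimed.

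To build such an algorithm I follow the color-reduction skeleton of Linial's algorithm but replace the nonconstructive set-system step with an explicit one. Recall Linial's per-round move: given a proper $N$-coloring and a family $\{S_1,\dots,S_N\}$ of subsets of a universe $[U]$ that is $\Delta_n$-cover-free (no $S_i$ is contained in the union of any $\Delta_n$ other sets), each vertex $v$ of current color $i$ and neighbor-colors $j_1,\dots,j_d$ (with $d \le \Delta_n$) re-colors itself by any element of $S_i \setminus \bigcup_k S_{j_k}$; cover-freeness guarantees this set is nonempty. Starting from the trivial $N_0 = q^n$-coloring by vertex IDs, one round reduces the palette to some $N_1 = \operatorname{poly}(\Delta_n, n)$, and a second round collapses it to $N_2 = \Delta_n^{2}\operatorname{polylog}(n)$, because the input color count is now polynomial in $n$. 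Two rounds therefore suffice, and I take $R_n = 2$.

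The crux, and the main obstacle, is producing an explicit and efficiently queryable $\Delta_n$-cover-free family of the right size. For this I intend to use the Reed--Solomon incidence construction: pick a prime power $p = \Theta(\Delta_n \log N)$, let $d = \lceil \log N / \log p \rceil$, identify each color with a polynomial $f_i \in \mathbb{F}_p[X]$ of degree at most $d$, and set $S_i = \{(x, f_i(x)) : x \in \mathbb{F}_p\}$ inside the universe $\mathbb{F}_p \times \mathbb{F}_p$. Since two polynomials of degree at most $d$ agree in at most $d$ points, any $\Delta_n$ of them cover at most $\Delta_n d < p = |S_i|$ values of $f_i$, so the family is $\Delta_n$-cover-free; the universe size $p^2 = O(\Delta_n^2 \log^2 N)$ fits the budget in each round. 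Each set element can be computed in $\operatorname{poly}(n)$ time by a single polynomial evaluation over $\mathbb{F}_p$.

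Finally, I check the per-round cost is $\operatorname{poly}(n)$. Using that $G_n$ is $\operatorname{poly}(n)$-explicit, a vertex can enumerate its at-most-$\Delta_n$ neighbors in $\operatorname{poly}(n)$ time, fetch their current colors (recursively, within the $R_n = 2$ rounds), build the forbidden union $\bigcup_k S_{j_k}$ of size $O(\Delta_n p)$, and scan its own set $S_i$ for an available element --- all in $\operatorname{poly}(n)$. Thus $T_n = \operatorname{poly}(n)$, and \Cref{thm:rt} converts this LOCAL algorithm into the claimed code, where encoding executes the two-round recipe on the radius-$2$ neighborhood of $x$ and decoding enumerates the at-most-$\Delta_n + 1$ inputs consistent with the received word and evaluates $\Phi$ on each to find the unique match.
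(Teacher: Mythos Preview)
Your proposal is correct and essentially identical to the paper's proof. The paper likewise invokes \Cref{thm:rt} after establishing (in \Cref{lem:deltaG2}) a two-round LOCAL recoloring using the same Reed--Solomon/polynomial-evaluation cover-free family you describe (this is their \Cref{lem:constructcff} and \Cref{cor:constructw}); the only cosmetic differences are that the paper packages the ``find an uncovered element'' step as a formal \emph{witness algorithm} (computing $\prod_j (g_{i_0}-g_{i_j})$ and locating a nonroot) rather than a brute-force scan of $S_i$, and it fixes the degree to $\lceil \log N\rceil$ rather than $\lceil \log N/\log p\rceil$, but both choices yield the same $O(\Delta_n^2\log^2 N)$ ground set and $\poly(n)$ per-round cost.
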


As an example of the implications of the previous theorem, suppose that $\cE$ is the $k$-deletion channel, i.e., up to $k$ input symbols may be deleted. Then the confusion graph $G_n$ for $\cE$ with input set $\Sigma_2^n$ has maximum degree $\Delta(G_n)=O(n^{2k})$. Applying Theorem~\ref{thm:GV}, yields a systematic code capable of correcting $k$ deletions with redundancy $4k\log n$. A generalization of this code  appears in a more rigorous manner in Corollary~\ref{cor:kedit}. Compared to syndrome compression, Corollary~\ref{cor:kedit} can be used to construct edit correcting codes over $q$-ary alphabet where $q=\poly(n)$, with a smaller redundancy than those in~\cite{sima2020optimalq}. Additionally, our constructions do not rely on existing codes.

\subsubsection*{List decoding} Our second result generalizes local graph coloring codes, and uses labelings of hypergraphs, a generalization of colorings, to construct list-decodable codes correcting a constant number of errors with constant list size.

For a channel $\cE$, let $\cE^{-1}(y)$ denote the set of all $x$ that can produce a string $y$. The confusion hypergraph $H_n$ for a channel $\cE$ with the set of possible inputs $\Sigma_q^n$ is a hypergraph whose vertex set is $\Sigma_q^n$ and edge set is $\{\cE^{-1}(y):y\in\Sigma_q^*\}$. An example is given in \Cref{fig:confHyperGraph}.

An $\ell$-labeling of a hypergraph, precisely defined in \Cref{def:lcoloring}, is a function $\Phi$ that assigns a label to each vertex of the hypergraph such that each edge has at most $\ell$ vertices of each label. A 2-labeling is given in \Cref{fig:confHyperGraph}. We observe that an $\ell$-labeling of $H_n$ provides a list-decodable code 
\[
\{(x,\Phi(x)):x\in \Sigma_q^n\}
\]
with list size $\ell$ for \added{the synchronization channel}  $\cE$. Specifically, for any $x\in\Sigma_q^n$ and $y\in\cE(x)$, we can find a list $L=\{x_1,\dotsc,x_\ell\}$ using $y$ and $\Phi(x)$ such that $x\in L$.


\begin{figure}
    \centering
\begin{tikzpicture}[
  scale=.66,
  every node/.style={draw=black,inner sep=1pt,fill=white,text=white},
  vtr/.style={circle,draw=black,fill=red!70,text=white,postaction={pattern=north east lines,pattern color=white}},  
  vtb/.style={circle,draw=black,fill=blue!80,text=white,postaction={pattern=north west lines,pattern color=white}}, 
  hyperedge/.style={
    draw=black,
    text=black,
    fill=none,           
    thick,
    ellipse,             
    inner sep=-1pt        
  }
]
  \def\c{1.5}                

  \node[vtr] (000) at ({\c*0.0},{\c*0.0}) {000};
  \node[vtb] (100) at ({\c*1.0},{\c*1.0}) {100};
  \node[vtb] (001) at ({\c*1.0},{-\c*1.0}) {001};
  \node[vtr] (010) at ({\c*1.5},{\c*0.0}) {010};
  \node[vtr] (101) at ({\c*3.0},{\c*0.0}) {101};
  \node[vtb] (110) at ({\c*3.5},{\c*1.0}) {110};
  \node[vtb] (011) at ({\c*3.5},{-\c*1.0}) {011};
  \node[vtr] (111) at ({\c*4.5},{\c*0.0}) {111};

  \begin{scope}[on background layer]
    \node[hyperedge,xscale=.9,label={[black]left:00}] (H1) [fit=(000)(001)(010)(100)] {};
    \node[hyperedge,xscale=.9,yscale=1.1,yshift=-.15cm,label={[black]below:01}] (H2) [fit=(001)(010)(011)(101)] {};
    \node[hyperedge,xscale=.9,yscale=1.1,yshift=.15cm,label={[black]above:10}] (H5) [fit=(010)(100)(101)(110)] {};
    \node[hyperedge,xscale=.9,label={[black]right:11}] (H6) [fit=(011)(101)(110)(111)] {};
  \end{scope}

\end{tikzpicture}
    \caption{The confusion hypergraph for the $1$-deletion channel with input $\{0,1\}^3$, along with a 2-labeling (red and blue). The edge labels, 00, 01, 10, and 11 are the output corresponding to the edge.}
    \label{fig:confHyperGraph}
\end{figure}
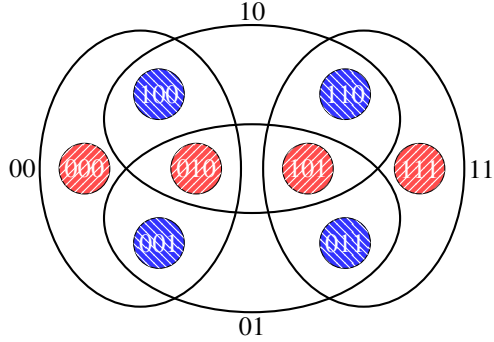

The problem of constructing $\ell$-labelings has not, to the best of our knowledge, been studied before. We will introduce an algorithm that provides an $\ell$-labeling when the hypergraph is explicit. 
For a hypergraph $H$, we use $r(H)$ to denote the maximum degree of $H$ and use $v(H)$ to denote the size of the maximum edge. The concept of explicitness for hypergraphs is a bit different from graphs and is defined precisely in Section~\ref{subsec:not}.

\begin{restatable}{theorem}{thmlist}\label{thm:list}
    Let $H_n$ be the confusion hypergraph for a \added{synchronization} channel $\cE$ with input set $\Sigma_q^n$. 
    Suppose $r(H_n)=O(n^b),v(H_n)=O(n^c)$ for constants $b,c$, and $H_n$ is $\poly(n)$-explicit. Then there exists a ($\ell$-labeling) function $\Phi: \Sigma_q^n \to \Sigma_2^{\rho}$
     where $\rho = (b(1+1/{\ell})+2c) \log n + O(\log \log n)$, such that 
    \[
        \{(x,\Phi(x)):x\in\Sigma_q^n\}
    \]
    is a list decodable code for the channel $\cE$, with list size $\ell$ and $\poly(n)$ encoding and decoding complexity.
\end{restatable}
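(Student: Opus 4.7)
The approach is to extend the framework of Theorem~\ref{thm:GV} from proper coloring of the confusion graph to $\ell$-labeling of the confusion hypergraph, replacing each component of that proof with its hypergraph analogue.

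First, I would generalize Theorem~\ref{thm:rt} from graphs to hypergraphs. A LOCAL algorithm on a hypergraph exchanges information along hyperedges, so the ``radius-$R$ ball'' of a vertex $x$ in $H_n$ has size at most $(r(H_n)\cdot v(H_n))^R = O(n^{(b+c)R})$, which is polynomial for $R=O(\log^* n)$. Using the $\poly(n)$-explicitness of $H_n$, this ball can be enumerated in polynomial time, and the value $\Phi(x)$ can be computed by simulating the LOCAL algorithm locally on the ball. On the decoding side, upon receiving $y$, we recover $\cE^{-1}(y)$; by the $\ell$-labeling property, the set $\{x'\in \cE^{-1}(y) : \Phi(x')=\Phi(x)\}$ has size at most $\ell$ and forms the decoding list, giving a list-decodable code of list size $\ell$.

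Second, I would design a Linial-style $\ell$-labeling algorithm for hypergraphs. The core technical ingredient is a \emph{one-round reduction}: given a current $\ell$-labeling $\Phi_i:V\to[B_i]$, produce a new $\ell$-labeling $\Phi_{i+1}:V\to[B_{i+1}]$ with
\[
B_{i+1} = \tilde{O}\bigl(r(H_n)^{\,1+1/\ell}\,v(H_n)^{2}\bigr),
\]
based only on each vertex's label and the labels of its hyperedge-neighbors. I would instantiate this via a generalized ``$\ell$-cover-free'' set system $\{S_c\subseteq[B_{i+1}]\}_{c\in[B_i]}$: the vertex $v$ with current label $c$ picks $\Phi_{i+1}(v)\in S_c$ according to a deterministic rule applied to the multiset of neighbors' current labels, chosen so that no hyperedge contains more than $\ell$ vertices receiving the same new label. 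A probabilistic argument—sampling each $S_c$ independently at random with appropriate density and applying a union bound over hyperedges, where allowing up to $\ell$ collisions per edge saves a factor $1/\ell$ in the exponent of $r$—shows a valid set system exists with the claimed $B_{i+1}$; derandomization proceeds as in the proof of Theorem~\ref{thm:GV}, exploiting explicitness to make each round polynomial-time.

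Starting from the trivial $\ell$-labeling $B_0 = q^n$ (vertex IDs, each label used once), I would iterate the one-round reduction $O(\log^* n)$ times. The recursion $\log B_{i+1} = (b(1+1/\ell)+2c)\log n + O(\log\log B_i)$ converges in $O(\log^* n)$ rounds to $\log B_{\mathrm{final}} = (b(1+1/\ell)+2c)\log n + O(\log\log n)$, which is exactly the target redundancy $\rho$. Combining with the hypergraph version of Theorem~\ref{thm:rt} yields a list-decodable code with list size $\ell$, redundancy $\rho$, and $\poly(n)$ encoding and decoding.

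The main obstacle is the one-round reduction itself: designing a set family whose local selection rule provably preserves the $\ell$-labeling property across every incident hyperedge, while being computable in polynomial time from only local information. The combinatorial slack afforded by allowing $\ell$ monochromatic vertices per edge (instead of one) must be converted into the $1/\ell$ savings in the exponent of $r$, and the derandomization of this set family—done via an explicit construction analogous to the one behind Theorem~\ref{thm:GV}—is the step that requires the $\poly(n)$-explicitness hypothesis and is the technical heart of the argument.
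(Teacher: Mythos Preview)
Your proposal identifies the right combinatorial object (a generalized cover-free family yielding an $\ell$-labeling with ground set of size $\tilde O(r^{1+1/\ell}v^2)$) and the correct parameter trade-off. However, the route you sketch differs from the paper's in a way that creates a real gap.

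The paper does \emph{not} iterate an $\ell$-labeling reduction starting from the trivial labeling $B_0=q^n$. Instead it first applies the \emph{ordinary} graph-coloring machinery of Theorem~\ref{thm:GV} to the underlying graph $G(H_n)$ (Lemma~\ref{lem:deltaG2}), using the explicit algebraic cover-free families of Lemma~\ref{lem:constructcff}. This brings the number of labels down to a \emph{polynomial} $N'=O((r\,v\,p(n))^2)$ while remaining a $1$-labeling. Only then is a \emph{single} round of the new $(r,v,\ell)$-cover-free family applied (Lemma~\ref{lem:llabeling}), converting the $1$-labeling into an $\ell$-labeling over $[\tilde O(r^{1+1/\ell}v^2)]$. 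Because $N'$ is polynomial, the required family has polynomial size, and the paper is content with the purely existential Lemma~\ref{lem:rvlexist} together with a brute-force witness (Lemma~\ref{lem:hyp}); no derandomization is performed or needed.

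Your plan, by contrast, requires in its first round an $(r,v,\ell)$-type family of size $q^n$ with a $\poly(n)$-time witness. You assert that ``derandomization proceeds as in the proof of Theorem~\ref{thm:GV},'' but Theorem~\ref{thm:GV} does not derandomize anything: it relies on the explicit polynomial-based construction of ordinary $r$-cover-free families (Lemma~\ref{lem:constructcff}), for which no analogue is provided (or known) in the $(r,v,\ell)$ setting. Without such an explicit exponential-size family, your first round cannot be made $\poly(n)$-time, and the argument stalls. The fix is exactly the paper's two-stage structure: use the explicit algebraic families to collapse to polynomially many colors first, then spend the existential $(r,v,\ell)$-family once. (As a side note, even with the right tools two rounds suffice; the $O(\log^* n)$ iteration is unnecessary here.)
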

\added{To compare Theorem~\ref{thm:list} with Theorem~\ref{thm:GV} and see how $\Delta(G_n)$ relates to $r(H_n)$ and $v(H_n)$, we study the $k$-edits channel in Corollary~\ref{cor:keditl}, where $r(H_n)=O(n^k)$, $v(H_n)=O(n^k)$, and $\Delta(G_n)=O(n^{2k})$.
}
\paragraph*{Incremental Synchronization} Our third result is a file synchronization scheme, which we call \emph{incremental synchronization}. Consider the following scenario. Alice has a string $x$, and Bob has a string $y$ that differs from $x$ by a number of edits. Alice wants to send a syndrome $\Phi(x)$ to Bob so that 
Bob can recover $x$ given $y$ and $\Phi(x)$. If Alice knows $d_E(x,y)=k$, then Alice can send a syndrome $\Phi_k(x)$ by applying Theorem~\ref{thm:GV} with $\cE$ being $k$ edits and \added{$\Delta_n=n^{2k}$}. However, in general Alice does not know $k$. In this case, Alice can first send $\Phi_{k_1}(x)$ to Bob for some small $k_1$. If Bob can recover $x$, then Alice can stop here. If Bob cannot recover $x$, then Alice sends $\Phi_{k_2}(x)$ to Bob for some larger $k_2$. Alice can keep sending $\Phi_{t}(x)$ for increasing $t$ until Bob can recover $x$. Note that based on our definition, $\Phi_{k_2}(x)$ can correct $k_2$ errors without Bob having received $\Phi_{k_1}(x)$ already. Hence, a question in the above process is that, after sending $\Phi_{k_1}(x)$, is $\Phi_{k_2}(x)$ necessary for Bob to recover $x$ if $d_E(x,y)\leq k_2$? Intuitively speaking, $\Phi_{k_1}(x)$ already gives Bob some information about $x$. So it is likely that Alice can send fewer bits than $\abs{\Phi_{k_2}(x)}$ for Bob to recover $x$ if $d_E(x,y)\leq k_2$.   The next theorem answers this question affirmatively. This is achievable because given $\Phi_{k_1}(x)$, we can construct a graph with smaller maximum degree and  apply Linial's algorithm to this graph.
\begin{restatable}{theorem}{thmincremental}\label{thm:incremental}
Let $a$ and $b$ be two positive integers such that $b>a$. 
Suppose $\Phi_a$ is a syndrome that can correct $a$ edits. If $y$ is obtained from $x$ by at most $b$ edits, then given $\Phi_a(x)$ one can construct a syndrome $\Phi_{b|a}:\Sigma_2^n \rightarrow [2^{(4b-2a)\log n+O(\log \log n)}]$ such that $x$ can be recovered given $y$, $\Phi_a(x)$, and $\Phi_{b|a}(x)$. 
\end{restatable}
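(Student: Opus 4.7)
The plan is to reduce the construction of $\Phi_{b|a}$ to a graph-coloring problem on a carefully chosen subgraph of the confusion graph $G_b$ for $b$ edits, then invoke Theorem~\ref{thm:GV}. The observation driving this reduction is that, since the decoder already has $\Phi_a(x)$, any $G_b$-edge $\{x, x'\}$ with $\Phi_a(x) \neq \Phi_a(x')$ is already ``covered'' by $\Phi_a$; only the $G_b$-edges whose endpoints happen to agree on $\Phi_a$ need to be distinguished by $\Phi_{b|a}$. Accordingly, I would define the residual graph $G'$ on vertex set $\Sigma_2^n$ with an edge $\{x, x'\}$ iff (i) $d_E(x,x') \leq 2b$, so $x$ and $x'$ are confusable under $b$ edits, and (ii) $\Phi_a(x) = \Phi_a(x')$. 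Any proper coloring $\Phi_{b|a}$ of $G'$ then gives a valid syndrome: the decoder lists $\{x' : d_E(x', y) \leq b\}$, filters by $\Phi_a(x') = \Phi_a(x)$, and selects the unique remaining candidate matching $\Phi_{b|a}(x)$.

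The coloring machinery of Theorem~\ref{thm:GV} then supplies $\Phi_{b|a}$. Since $\Phi_a$ is polynomial-time computable by Theorem~\ref{thm:GV} and $G_b$ is $\poly(n)$-explicit (the neighbors of $x$ in $G_b$ are exactly the strings within edit distance $2b$ of $x$, which can be enumerated in polynomial time), the graph $G'$ is also $\poly(n)$-explicit. Applying Theorem~\ref{thm:GV} to $G'$ yields a coloring $\Phi_{b|a}$ of range size $O(\Delta(G')^2)$ up to polylog factors, with polynomial encoding and decoding and redundancy $2\log \Delta(G') + O(\log\log n)$. To match the claimed redundancy it therefore suffices to show the degree bound $\Delta(G') = O(n^{2b - a})$, which gives exactly $(4b-2a)\log n + O(\log\log n)$.

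The main obstacle is establishing this degree bound. Two $G'$-neighbors share the same $\Phi_a$ color, so they cannot lie within each other's $2a$-edit ball; hence the $G'$-neighborhood of any $x$ sits inside the annulus $\{x' : 2a < d_E(x,x') \leq 2b\}$ and is pairwise $2a$-separated in edit distance. A direct sphere-packing argument (disjoint radius-$a$ balls packed into $B_{2b+a}(x)$) only gives $O(n^{2b})$, so bridging the gap from $n^{2b}$ down to $n^{2b-a}$ requires exploiting how the specific coloring $\Phi_a$ produced by the modified Linial algorithm of Theorem~\ref{thm:GV} distributes its color classes across $B_{2b}(x)$: essentially, the local-view structure of Linial's algorithm (each color depends only on a bounded-radius neighborhood in $G_a$) limits how many vertices at $G_b$-distance but beyond $G_a$-distance can collide on $\Phi_a$. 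This is the step on which I would expect to spend the most effort; once the degree bound is in hand, the rest reduces to a direct invocation of the coloring framework already established earlier in the paper.
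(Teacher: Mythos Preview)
Your reduction to coloring the residual graph $G'$ is exactly the paper's approach, and your identification of the degree bound $\Delta(G') = O(n^{2b-a})$ as the crux is correct. However, your proposed route to that bound is a genuine misstep.

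You suggest that closing the gap from $n^{2b}$ to $n^{2b-a}$ requires exploiting the local-view structure of the specific Linial-based coloring $\Phi_a$. But the theorem hypothesizes only that $\Phi_a$ is \emph{some} syndrome correcting $a$ edits, with no structural assumption beyond that. So an argument leaning on Linial's algorithm would not even prove the statement as written. The only property you actually have is that every color class $C=\{z:\Phi_a(z)=\Phi_a(x)\}$ has minimum edit distance at least $2a+1$, and this alone suffices.

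The paper handles the degree bound with a short injection argument (Lemma~\ref{lem:distance}) rather than sphere packing: for each $c'\in B_{2b}(x)\cap C$, fix a shortest edit path from $c'$ to $x$ and let $f(c')$ be the string reached after the first $a$ steps along that path. Then $f(c')\in B_{2b-a}(x)$, and if $f(c_1')=f(c_2')$ then $d_E(c_1',c_2')\le 2a<2a+1$, forcing $c_1'=c_2'$. Hence $|B_{2b}(x)\cap C|\le |B_{2b-a}(x)|=O(n^{2b-a})$. This is the missing idea; once you have it, the rest of your plan goes through verbatim.
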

Note that in the above theorem $\Phi_{b|a}$ has length approximately $(4b-2a)\log n$ which is less than $4b\log n$, the length of $\Phi_{b}(x)$. 

\paragraph*{Substring edit-correcting code} Our fourth result is  on codes correcting multiple substring edits. An $l$-substring edit in a string $x$ is the operation of replacing a substring $u$ of $x$ with another string $v$, where $|u|,|v|\le l$. An analysis of real and simulated data in~\cite{tang2023} reveals that substring edits are common in file synchronization and DNA data storage, where viewing errors as substring edits leads to smaller redundancies. We consider codes correcting $k$ $l$-substring edits, where $k$ is a constant, and have the following construction of codes with redundancy approximately twice the Gilbert-Varshamov bound. The method we use is to construct a graph with low maximum degree and apply Linial's algorithm.

\begin{theorem}\label{thm:SSE}
    Suppose $k$ is a positive integer, and $l$ is $O(\log n)$ or $\omega(\log n)$.  One can construct  codes of length $n$ correcting $k$ $l$-substring edits with redundancy $4k\log n+4kl +8k\log (l+1)+O(\log\log n)$. Moreover, the codes have polynomial-time encoders and decoders.
\end{theorem}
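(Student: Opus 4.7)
The plan is to invoke Theorem~\ref{thm:GV} on the confusion graph $G_n$ of the $k$-edit $l$-substring channel, splitting into the two regimes permitted by the hypothesis. Two inputs $x,x'$ are adjacent in $G_n$ iff some $y$ is reachable from both by at most $k$ $l$-substring edits; concatenating an edit path from $x$ to $y$ with the reverse of one from $x'$ to $y$ (each reverse is itself an $l$-substring edit) exhibits $x'$ as reachable from $x$ by at most $2k$ $l$-substring edits. A single edit is specified by a starting position (at most $n+2kl$ choices, since intermediate strings have length at most $n+2kl$), a length $|u|\le l$ of the replaced substring ($l+1$ choices), a length $|v|\le l$ of the replacement ($l+1$ choices), and a binary content for $v$ (at most $2^l$ choices). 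Hence
\[
\Delta(G_n)\le \bigl[(n+2kl)(l+1)^2\,2^l\bigr]^{2k},
\]
so $\log \Delta(G_n)\le 2k\log n+4k\log(l+1)+2kl+O(\log\log n)$, and doubling reproduces the target redundancy $4k\log n+4kl+8k\log(l+1)+O(\log\log n)$.

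In the regime $l=O(\log n)$, the above bound yields $\Delta(G_n)=\poly(n)$, and all neighbors of a given $x$ can be listed in $\poly(n)$ time by iterating over the $2k$-fold product of edit operations; hence $G_n$ is $\poly(n)$-explicit and Theorem~\ref{thm:GV} applies directly, delivering both the correct redundancy and polynomial-time encoding and decoding.

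The main obstacle is the regime $l=\omega(\log n)$, where the factor $2^{2kl}$ pushes $\Delta(G_n)$ beyond polynomial and Theorem~\ref{thm:GV} cannot be invoked on $G_n$ itself. The plan is to replace $G_n$ with a coarser auxiliary graph obtained by partitioning each input into consecutive blocks of length $\Theta(l)$: because $k$ $l$-substring edits perturb at most $O(k)$ such blocks, declaring adjacency only when two inputs differ in the \emph{locations} of their perturbed blocks yields polynomial maximum degree, to which Theorem~\ref{thm:GV} is applicable and produces a syndrome of length $O(k\log n+k\log(l+1))$ identifying those block indices. Appending the raw contents and lengths of the affected blocks adds $O(kl)+O(k\log(l+1))$ bits; a careful accounting of constants then matches the redundancy claimed in the statement, and every step (block extraction, coloring on the auxiliary graph, storage of affected blocks, and the decoder's inversion using $y$) remains polynomial time.
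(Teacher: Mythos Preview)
Your treatment of the case $l=O(\log n)$ is correct and coincides with the paper's Corollary~\ref{cor:l=logn}.

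For $l=\omega(\log n)$ the plan has a real gap. First, the auxiliary graph is not well-defined: ``perturbed blocks'' make sense only relative to a particular channel output or a particular neighbor, not as an intrinsic property of a vertex $x$, so there is no coherent way to ``declare adjacency only when two inputs differ in the locations of their perturbed blocks,'' nor is it clear what a coloring of such a graph would mean for decoding. Second, and more fundamentally, the encoder sees only $x$ and therefore cannot ``append the raw contents and lengths of the affected blocks''; which blocks will be affected is unknown at encoding time. Even read charitably as ``store enough parity to recover any $O(k)$ blocks,'' you still face the difficulty that substring edits insert and delete symbols, so the block boundaries in $x$ and in the received $y$ do not align, and you have not explained how the syndrome on the auxiliary graph lets the decoder \emph{locate} the edits in $x$.

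The paper resolves this by first fixing a Reed--Solomon erasure syndrome $\phi_1(x)$ over length-$l$ blocks of $x$ (cost $4kl$ bits), and then applying the coloring machinery to the graph on $\Sigma_2^n$ that keeps only those edges of $G_n$ joining vertices with the \emph{same} $\phi_1$ value. The point is that once the $2k$ edit positions and the $2k$ deletion/insertion lengths are guessed (at most $(n(l+1)^2)^{2k}$ choices), the edit \emph{contents} are forced by the constraint $\phi_1(x')=\phi_1(x)$, so this restricted graph has polynomial maximum degree and Lemma~\ref{lem:deltaG2} supplies a second syndrome $\phi_2$ of length $4k\log n+8k\log(l+1)+O(\log\log n)$. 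Decoding enumerates positions and lengths, uses $\phi_1$ to fill in the erased content for each guess, and then uses $\phi_2$ to single out the correct candidate. Your proposal gestures at the right separation of location from content, but does not supply a mechanism that actually carries it out.
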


\subsection{Related Work}
\subsubsection*{Syndrome Compression}
In this work, we present a general method for constructing error-correcting codes based on distributed graph coloring. The codes have redundancy twice the Gilbert-Varshamov bound (\added{Levenshtein bound in the  edit distance case}) for combinatorial channels satisfying mild conditions. The only other general method we are aware of is  syndrome compression, introduced in a seminal paper by Sima et al.~\cite{sima2020optimal}. Starting from a reasonably good code, syndrome compression yields a code with redundancy twice the GV bound as well. We will discuss syndrome compression next, showing that it can be viewed as a specific method for recoloring a graph. We will also show that distributed graph coloring codes proposed here are simpler, more generalizable, and do not require us to start with a reasonably good code.

Syndrome Compression (SC) starts with a syndrome function $f$ that assigns distinct values to inputs that may produce the same output. In our framework, $f$ can be viewed as a proper coloring. Syndrome compression then produces a shorter syndrome using $f$. We can view this as a recoloring step, using number-theoretic arguments. Specifically, SC produces from a coloring with $2^k$ colors, a new one with $\Delta^22^{3.2k/\log (k/\log e)}$ colors. We may compare this to Linial's recoloring which starting from a $2^k$-coloring produces a new coloring with fewer colors, namely $5k\Delta^2$. For example, starting with $n$ bits to represent the colors ($2^n$ colors), after one round, we will need $2\log \Delta+O(n/\log n)$ and $2\log \Delta+\log n+O(1)$ bits to represent the colors (syndrome), with Syndrome Compression and Linial's recoloring, respectively. Because of this, the syndrome compression algorithm requires a pre-existing proper coloring $f$ with $2^{o(\log n\log\log n)}$ colors. Then, with recoloring, we will have $\Delta^2 2^{o(\log n)}$, or equivalently redundancy $2\log \Delta+o(\log n)$. A typical scenario is $k=O(\log n)$~\cite{sima2020syndrome}, which after recoloring leads to a redundancy of $2\log \Delta_n+O(\log n/\log\log n)$.

We can thus view syndrome compression as a specific distributed graph coloring (DGC) algorithm. However, utilizing Linial's algorithm for recoloring eliminates the need to have a reasonably good code to begin with. This allows us to construct codes for certain channels for a wider range of parameters with lower redundancy compared to the state of the art. Additionally, the DGC framework is conceptually simpler and more general, allowing us to formulate other problems such as list decoding with ease. Finally, as we show in this paper, DGC codes lead to a redundancy of $2\log \Delta_n+O(\log\log n)$, which converges to $2\log \Delta_n$ much faster than syndrome compression with redundancy $2\log \Delta_n+O(\log n/\log\log n)$.
\color{black}

\subsubsection*{Insertion/Deletion-Correcting Codes}
Most of our application is on codes correcting insertions and deletions. Codes correcting insertions and deletions were first introduced by Levenshtein in \cite{levenshtein1966binary}. Levenshtein shows that the Varshamov-Tenengolts
(VT) code can correct one deletion with redundancy $\log (n+1)$, where $n$ is the code length. He also shows that the optimal redundancy of codes correcting $k$ (a constant) deletions is between $k\log n+o(\log n)$ and $2k\log n+o(\log n)$. \added{For two deletions, Brakensiek et al.\cite{brakensiek2017efficient} was the first to achieve redundancy $O(\log n)$}. Gabrys et al.\ \cite{gabrys2018codes} propose codes correcting two deletions with redundancy $8\log n +O(\log \log n)$. Sima et al.~\cite{sima2019two} improve the redundancy to $7\log n +o(\log n)$. In \cite{guruswami2021explicit} Guruswami et al.\  further improve the redundancy to $4\log n+O(\log \log n)$, which matches the existential bound.

For $k$-deletion-correcting codes, where $k$ is a constant, Brakensiek et al.~ \cite{brakensiek2017efficient} construct codes correcting $k$ deletions with redundancy $O(k^2\log k\log n)$ and $O_k(n\log^4 n)$ encoding/decoding complexity. In \cite{sima2020optimal}, Sima et al.\ improve the redundancy to $8k \log n+o(\log n)$ but with encoding/decoding complexity $O(n^{2k+1})$. In \cite{sima2020syndrome}, Sima et al.\ propose syndrome compression  and further improve the redundancy to 
$4k(1+\epsilon)\log n$, where $\epsilon=O(1/k)+O(k^2\log k/\log\log n)$,   with encoding/decoding complexity $O(n^{2k+1})$. Sima et al.\ also use syndrome compression to construct systematic codes correcting $k$ deletions with redundancy $4k\log n+o(\log n)$ \cite{sima2020optimalt}. Our Corollary~\ref{cor:kedit} achieves the state-of-the-art of $k$-edit-correcting codes, which construct binary $k$-edit-correcting codes with redundancy $4k\log n+O(\log \log n)$ and polynomial time encoding and decoding complexity.  
\added{As for $k$-deletion-correcting codes over larger alphabet size, in \cite{sima2020optimalq}, for $q\leq n$, Sima et al. construct codes correcting $k$ deletions with redundancy $4k\log n+2k\log q+ o(\log n)$. For $q>n$, Sima et al. \cite{sima2020optimalq} construct codes correcting $k$ deletions with redundancy $(30k+1)\log q$. In \cite{haeupler2017synchronization}, Haeupler et al.\ propose a  technique called `synchronization strings' and  construct codes correcting edits with rate approaching the Singleton bound over large constant alphabet. In \cite{cheng2018deterministic} and \cite{haeupler2019optimal}, Cheng et al.\ and Haeupler respectively construct binary codes correcting $\epsilon$ fraction of edits with rate $1-O(\epsilon\log^2\frac{1}{\epsilon})$. }

\subsubsection*{List-Decodable Deletion Codes}
For list-decodable codes correcting a constant number of deletions, in \cite{Wachter2018list}, Wachter-Zeh shows that VT code can correct $k$ deletions with list size at most $n^{k-1}$. In \cite{guruswami2021explicit}, Guruswami et al. construct codes correcting 2 deletions with list size 2 and redundancy $3\log n$.

\subsubsection*{A Burst of Deletions} Codes correcting a burst of deletions have been intensively studied.
In 
\cite{Schoeny17}, Schoeny et al.\ 
construct codes correcting a burst of exact $l$ deletions with redundancy $\log n+(l-1)\log\log n+l-\log l$. In 
\cite{lenz20}, Lenz et al.\  construct codes correcting a burst of deletions
of variable length with redundancy $\log n+(l(l+1)/2)\log\log n+c_l$, where 
$c_l$ is a constant that only depends on $l$.  
In 
\cite{bitar21}, Bitar et al.\    construct codes correcting  deletions within a window of length $l$ with redundancy $\log n+O(l\log^2 (l \log n))$. In fact, a simple counting argument shows that optimal codes correcting a burst of at most $l$ deletions have redundancy at least $l+\log n$. So the above codes are all asymptotically optimal when $l$ is a constant. However, when $l$ is large, for example, $l=\omega( \log n)$,  the above codes have large redundancy. In \cite{tang2023} and \cite{yuting2024asymptotically}, Tang et al.\ and Li et al.\ construct codes correcting an $l$-substring edit with redundancy $2\log n+o_l( \log n)$ and $\log n+O_l(\log \log n)$ respectively. However, their method does not work when $l=\omega( \log n)$. In contrast, we will construct asymptotically optimal codes correcting a burst of at most $l$ deletions for $l=\omega(\log n)$.

 \subsection{Basic Definitions and Notations}\label{subsec:not}
Let $\Sigma_q$ be the alphabet $\{0,1,\dotsc,q-1\}$ and $\Sigma_q^n$ denote all strings of length $n$ over $\Sigma_q$.
For a string $x$, $\abs{x}$ denotes its length.
For a set $S$, $\abs{S}$ denotes the cardinality of $S$. Logarithms in the paper are to the base 2. The redundancy of a code $\cC \subset \Sigma_2^n$ is defined as $ n-\log \abs{\cC}$ bits. Suppose $G$ is a graph and $v$ is a vertex. We use $\nei(v)$ to denote the neighborhood of $v$ and use $\deg(v)$ to denote the degree of $v$. We use $d_E$ to denote edit distances. For a string $x$ and an integer $a$, let  $B_a(x):=\{y\in \Sigma_2^*:d_E(x,y)\leq a\}$.
 
\begin{definition}[Channels]
    We characterize a combinatorial communication/storage \emph{channel} by an operator $\cE$, where for $x\in\Sigma_q^n$, $\cE(x)$ denotes the set of all possible channel outputs for the input $x$. Furthermore, for $y\in \Sigma_q^*$, 
 \begin{align*}
 \cE^{-1}(y):=\left\{x\in \Sigma_q^n: y\in \cE(x) \right\},
 \end{align*}
 which consists of all the length $n$ strings that can produce $y$ using the errors in $\cE$.
\end{definition} 
\begin{example}
     Suppose $n=3$, $q=2$ and $\cE$ is one deletion. Let $x=010$, $y=01$. Then $\cE(x)=\{01,00,10\}$, $\cE^{-1}(y)=\{001,101,011,101,010\}$.
\end{example}

\begin{definition}[Confusion Graph and Hypergraph]
    The \emph{confusion graph $G_n$ for the channel $\cE$ with inputs $\Sigma_q^n$} is the graph whose vertex set is $\Sigma_q^n$ and $(x,x')$ is an edge in $G_n$ if there exists $y\in\Sigma_q^*$ such that $x,x'\in\cE^{-1}(y)$. 

    The \emph{confusion hypergraph $H_n$ for the channel $\cE$ with inputs $\Sigma_q^n$} is a hypergraph whose vertex set is $\Sigma_q^n$ and edge set is $\{\cE^{-1}(y):y\in\Sigma_q^*\}$.
\end{definition}
Examples are given in \Cref{fig:confGraph,fig:confHyperGraph}.

For a graph $G$, we use $V(G)$ to denote its vertex set. A graph $G$ is $T$-\textit{explicit} if there is an ordering of vertices and an algorithm such that given a vertex index $v$ and $1\le i\le \deg v$, the algorithm can output $v$'s $i$-th neighbor in this ordering in time $T$. 

For a hypergraph $H$, we use $G(H)$ to denote the graph associated with $H$. Concretely, $G(H)$ and $H$ have the same vertex set.  Suppose $u$ and $v$ are vertices of $H$,  then $(u,v)$ is an edge in $G(H)$ if and only if $u$ and $v$ are in the same edge of $H$. For a hypergraph $H$, we use $r(H)$ to denote the maximum degree of $H$ and use $v(H)$ to denote the size of the maximum edge. For a hypergraph $H$, we use $V(H)$ to denote its vertex set. 

We next define systematic error-correcting codes for a given channel. Note that this definition slightly deviates from the conventional definition as discussed in \Cref{rem:err-free}.
\begin{definition}[c.f. \Cref{rem:err-free}]
    The set
    \[
    \cC = \{(x,\Phi(x)):x\in \Sigma_q^n\},
    \]
    is a \emph{systematic error-correcting code} for a \added{synchronization} channel $\cE$  if for any $x,x'\in \Sigma_q^n,y\in\Sigma_q^*,x,x'\in\cE^{-1}(y)$, we have $\Phi(x)\neq\Phi(x')$. The redundancy of this code is $\max_{x\in\Sigma_q^n}\abs{\Phi(x)}$.
\end{definition}

As discussed before, in the context of LOCAL algorithms, each processor (vertex) is infinitely powerful as the model focuses on the locality of information. In our problem however, we are also interested in computational complexity. Additionally, the fact that LOCAL algorithms are defined for distributed computing systems may be confusing. To avoid ambiguity, we provide the following definition.
\begin{definition}[Time complexity of algorithms]
    Suppose $G$ is a graph (hypergraph), $A$ is a set, and $f: V(G) \rightarrow A$ is a function. We say that $f$ is \emph{individually  computable in time $T$} if it can be determined for a \textit{given} vertex $v$ by performing $T$ computational operations.
\end{definition}


 As stated in the introduction, we will be focused on the regime where  $\cE$ is such that $\abs{\cE(x)} = O(n^a)$ and  $\abs{\cE^{-1}(y)}=O(n^b)$ for some constants $a$ and $b$ and all $x \in \Sigma_q^n,y\in\Sigma_q^*$.

\section{Construction of Codes Correcting a Constant Number of Errors}\label{sec:wcerrors}
In this section, we first propose a general framework of using distributed graph coloring algorithms to construct error-correcting codes. Then we use a specific algorithm --- Linial's algorithm to construct colorings of the confusion graph and construct  codes that leverage cover-free set families capable of recovering from a constant number of errors. Finally, we compare our method with syndrome compression and show that syndrome compression can be viewed as a recoloring process.

\subsection{From Distributed Graph Coloring to Code Construction}
In distributed graph algorithms, each node in an underlying graph $G$ acts as a processor that can only communicate with its immediate neighbors in synchronous rounds. In the LOCAL model, which is of interest in this paper, in each round, every node can send a message to each of its neighbors, receive all messages sent to it, and perform an unbounded amount of local computation before moving on to the next round.

The goal is typically to solve global graph-theoretic problems, such as coloring and finding maximal independent sets, by transforming local data into global solutions. For a given problem, it is of interest to determine the number of rounds needed to arrive at the solution. 

A result of the assumptions made in the LOCAL model is that, for an algorithm that takes $R$ rounds, the part of the solution computed at each node depends only on its neighborhood of radius $R$. This fact is used in the next theorem, which relates LOCAL graph coloring algorithms to the construction of codes with efficient encoding and decoding for any channel.

\thmrt*
\begin{proof}
For any vertex $x$, the $R_n$-th round result $\Phi(x)$ is determined by the computation result of vertices whose distance from $x$ is at most $R_n$. (In fact, from a node at distance $R'<R_n$ from $x$, we only need the first $R_n-R'+1$ messages.)  So the amount of computation needed to compute $\Phi(x)$ is bounded by 
$$(1+\Delta_n+\Delta_n^2+\dotsc +\Delta_n^{R_n})R_nT_n,$$ which is at most $\Delta_n^{R_n}(R_n+1)R_nT_n$.

\added{Moreover, to decode $y$, one needs to compute $\Phi(x)$ for all $x\in \cE^{-1}(y)$. Since $\abs{\cE^{-1}(y)}$ is bounded by $\Delta_n$, the decoding process takes $O(\Delta_n^{R_n+1}(R_n+1)R_nT_n)$. }
\end{proof}
This theorem enables us to benefit from the research on distributed graph coloring under the LOCAL model \added{\protect\cite{kuhn09}\protect\cite{maus2023distributedgraphcoloringeasy}\protect\cite{barenboim2013distributed}}. In particular, if $B_n=O(\Delta_n^2$), then the redundancy is only twice the GV bound. However, there are two challenges that need to be addressed to achieve polynomial-time encoding and decoding with this redundancy. First, the number of rounds must be constant, and second, $\Delta_n,T_n$ must be polynomial in $n$. Unfortunately, existing algorithms with $O(\Delta_n^2)$ colors run in $\log^* n$ rounds, with no limit on $T_n$. These are addressed in the next subsection.

\subsection{Cover-Free Set Families}
 The constructions of our colorings are based on combinatorial objects called cover-free set families. Informally, a cover-free set family is a family of sets in which no set is covered by a limited number of other sets. The formal definition is given next.
\begin{definition}\label{def:cff}
A family $\cJ=\{F_1,F_2,\dotsc,F_N\}$ of subsets  of a ``ground'' set $M$ is called \emph{$r$-cover-free} over $M$ if for any indices $i_0,i_1,\dotsc,i_{r}\in [N]$, where $i_0\not\in \{i_1,\dotsc,i_{r}\}$, $F_{i_0}\setminus\bigcup_{j=1}^{r} F_{i_j}\neq \emptyset$. 
\end{definition}

\added{In the next subsection, we will use cover-free families in a graph recoloring scheme for code construction as follows: Suppose we have a graph $G$ with maximum degree $\Delta$ and an old coloring of size $N$. A $\Delta$-cover-free family of size $N$ can be used to recolor the graph. We associate each old color with a set in the $\Delta$-cover-free family. The new color of a vertex is an element of the set $F\setminus \cup_{i=1}^r F_i$, where $F$ is the set associated with the vertex and $F_i$ is the set associated with the $i$th neighbor of that vertex. Below, we will detail how these families are constructed for our application.}

In the LOCAL model, the goal is usually to minimize the number of rounds of interaction. However, if we want to apply local graph coloring algorithms to generate efficient error-correcting codes, then we need to also account for the time complexity of the computation, which motivates our next definition. 
\begin{definition}\label{def:wit}
Suppose $\cJ=\{F_1,F_2,\dotsc,F_N\}$  is an \emph{$r$-cover-free} family over $M$. We say $\cJ$ can be witnessed in time $T$ if there is a function $W$, which we refer to as a witness algorithm, such that for any input indices $i_0,i_1,\dotsc,i_{r}\in [N]$, where $i_0\not\in \{i_1,\dotsc,i_{r}\}$, $W(i_0,\{i_1,\dotsc,i_{r}\})$ outputs a witness element of $F_{i_0}\setminus\bigcup_{j=1}^{r} F_{i_j}$ in time $T$.
\end{definition}

 The first part of the next lemma is a classical construction of cover-free set families given by Erdős et al.\ \cite{Erds1985FamiliesOF}. For our purposes, we will analyze the construction in terms of potential witness algorithms. 

\begin{lemma}[c.f.~\cite{Erds1985FamiliesOF}]\label{lem:constructcff}
    Let $Q$ be a prime, $b,r$ be non-negative integers, and $M=\mathbb{F}_Q\times  \mathbb{F}_Q.$ Let $\cJ=\{F_1,F_2,\dotsc,F_{Q^{b+1}}\}$, where 
\begin{align*}\label{eq:constFq}
    F_i=\{(\alpha,g_i(\alpha)):\alpha\in \mathbb{F}_Q\},
\end{align*}
where $g_i$ is the $i$-th polynomial in $\mathbb{F}_Q[x]$ in the lexicographic ordering with degree $\leq b$.
 If $br<Q$, then $\cJ$ is an $r$-cover-free set family. Moreover, $\cJ$ has a witness algorithm that runs in time $\poly(Q)$.\label{lem:constructcff1}
\end{lemma}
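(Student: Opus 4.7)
The plan is to verify the construction step by step, relying on the basic fact that two distinct polynomials of degree at most $b$ over $\mathbb{F}_Q$ agree on at most $b$ points in $\mathbb{F}_Q$.

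First, I would check the cardinalities. A polynomial in $\mathbb{F}_Q[x]$ of degree at most $b$ is specified by $b+1$ coefficients, so there are exactly $Q^{b+1}$ such polynomials, matching the size of $\cJ$. Moreover, each $F_i$ has exactly $Q$ elements, one for each $\alpha\in\mathbb{F}_Q$, since the map $\alpha\mapsto(\alpha,g_i(\alpha))$ is injective. Next, for any two distinct indices $i\neq j$, the polynomial $g_i-g_j$ is a nonzero polynomial of degree at most $b$, hence has at most $b$ roots in $\mathbb{F}_Q$. Each root $\alpha$ corresponds to a common point $(\alpha,g_i(\alpha))=(\alpha,g_j(\alpha))\in F_i\cap F_j$, and these are the only common points. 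Therefore $\abs{F_i\cap F_j}\le b$.

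For the cover-free property, fix indices $i_0,i_1,\dots,i_r\in[Q^{b+1}]$ with $i_0\notin\{i_1,\dots,i_r\}$. By the bound above and a union bound,
\[
\Bigl\lvert F_{i_0}\cap\bigcup_{j=1}^{r}F_{i_j}\Bigr\rvert\le\sum_{j=1}^{r}\abs{F_{i_0}\cap F_{i_j}}\le br<Q=\abs{F_{i_0}},
\]
so $F_{i_0}\setminus\bigcup_{j=1}^{r}F_{i_j}$ is nonempty, establishing the $r$-cover-free property.

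For the witness algorithm, I would iterate through $\alpha=0,1,\dots,Q-1$ in order, compute the values $g_{i_0}(\alpha),g_{i_1}(\alpha),\dots,g_{i_r}(\alpha)$ using Horner's rule, and output $(\alpha,g_{i_0}(\alpha))$ for the first $\alpha$ at which $g_{i_0}(\alpha)\neq g_{i_j}(\alpha)$ for all $j\in[r]$. Recovering each polynomial $g_{i_\ell}$ from its lex-index $i_\ell$ is an $O(b\log Q)$-bit decoding of the coefficient vector. Each evaluation takes $O(b)$ field operations, and for each candidate $\alpha$ we perform $O(r)$ comparisons. Since the argument above guarantees a valid $\alpha$ exists and $b,r<Q$, the total work is $O(Qbr)$ field operations plus index-to-polynomial conversions, each costing $\poly(\log Q)$, for a total running time of $\poly(Q)$.

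The potential obstacle is minor: one must be careful that the indexing scheme (lex-order on coefficient tuples) permits constant-overhead conversion between the index $i$ and the coefficient vector of $g_i$, so that polynomial evaluation truly runs in time polynomial in $Q$. This is handled by writing $i-1$ in base $Q$ to read off the coefficients directly.
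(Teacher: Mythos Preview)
Your proof is correct and follows essentially the same approach as the paper. Both arguments rest on the fact that distinct degree-$\le b$ polynomials agree on at most $b$ points of $\mathbb{F}_Q$, and both witness algorithms brute-force over $\alpha\in\mathbb{F}_Q$; the paper merely packages the union bound as the single product polynomial $f(x)=\prod_{j=1}^{r}(g_{i_0}(x)-g_{i_j}(x))$ and searches for a non-root of $f$, which is exactly your per-$\alpha$ test written multiplicatively.
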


\begin{proof}
  It suffices to construct a witness algorithm $W$. Given inputs $i_0,i_1,\dotsc,i_{r}\in [Q^{b+1}]$, $W$ computes 
    \begin{align*}
        f(x)=\prod_{j=1}^{r}(g_{i_0}(x)-g_{i_j}(x)).
    \end{align*}
    Because $0<\deg f\leq br<Q$, $f$ has at least one non-zero point. $W$ searches for a non-zero point $\alpha$ and outputs $(\alpha,g_{i_0}(\alpha))$. It is clear that  
    \begin{align*}
        (\alpha,g_{i_0}(\alpha))\in F_{i_0}\setminus\bigcup_{j=1}^{r} F_{i_j}.
    \end{align*}
    Thus $W$ is a witness algorithm.    Moreover, $W$ runs in time $\poly(Q)$.
\end{proof}

Taking suitable parameters gives us the following corollary. 
\begin{corollary}\label{cor:constructw}
    For any $r$, there exists an $r$ cover-free set family $\cJ=\{F_1,F_2,\dotsc,F_N\}$  over $[t]$, where $t=O(r^2\log ^2N)$. $\cJ$ has a witness algorithm that runs in time $\poly(r,\log N)$.    
\end{corollary}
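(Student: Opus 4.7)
The plan is to obtain the corollary as a direct consequence of Lemma~\ref{lem:constructcff} by picking the prime $Q$ and the degree bound $b$ so that the field $\mathbb{F}_Q$ is as small as possible while still producing at least $N$ polynomials of degree $\le b$ and satisfying the hypothesis $br<Q$. The ground set will then automatically be $\mathbb{F}_Q\times\mathbb{F}_Q$, of size $t=Q^2$.

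Concretely, I would choose a constant $c>0$ and a prime $Q$ with $c\,r\log N\le Q\le 2c\,r\log N$; existence of such a prime is guaranteed by Bertrand's postulate and it can be found in time $\poly(r,\log N)$ by trial division. Next, I would set $b=\lceil \log N/\log Q\rceil$, so that the number of polynomials over $\mathbb{F}_Q$ of degree at most $b$ is
\[
Q^{b+1}\ \ge\ Q\cdot N\ \ge\ N.
\]
Taking any $N$ of these polynomials (say, the first $N$ in the lexicographic ordering used in Lemma~\ref{lem:constructcff}) yields the desired family $\cJ=\{F_1,\dots,F_N\}$ over the ground set $[Q^2]$. Since $b\le \log N/\log Q+1$ and $\log Q\ge \log(c\,r\log N)$, one has
\[
br\ \le\ r\!\left(\frac{\log N}{\log Q}+1\right)\ \le\ \frac{r\log N}{\log(c\,r\log N)}+r,
\]
which is strictly smaller than $Q$ once the constant $c$ is fixed large enough (independent of $r$ and $N$). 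The hypothesis of Lemma~\ref{lem:constructcff} is therefore satisfied, so $\cJ$ is $r$-cover-free.

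Finally, the ground set has size $t=Q^2=O(r^2\log^2 N)$, matching the bound claimed in the corollary, and the witness algorithm inherited from Lemma~\ref{lem:constructcff} runs in time $\poly(Q)=\poly(r,\log N)$, since it amounts to multiplying $r$ polynomials of degree at most $b$ over $\mathbb{F}_Q$ and scanning $\mathbb{F}_Q$ for a non-zero evaluation. There is no real obstacle here beyond balancing the parameters: the only subtlety is ensuring the strict inequality $br<Q$ holds uniformly in $r$ and $N$, which is handled by taking $c$ sufficiently large in the definition of $Q$. In particular, all arithmetic (finding the prime, enumerating polynomials in lexicographic order, polynomial multiplication, and the scan for a non-zero point) is routine and polynomial in $r$ and $\log N$.
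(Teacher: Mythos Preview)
Your proposal is correct and follows essentially the same route as the paper: instantiate Lemma~\ref{lem:constructcff} with parameters chosen so that $Q=O(r\log N)$ and $Q^{b+1}\ge N$, then read off $t=Q^2=O(r^2\log^2 N)$ and the $\poly(Q)=\poly(r,\log N)$ witness time. The only cosmetic difference is that the paper fixes $b=\lceil\log N\rceil$ first and then takes $Q$ to be the smallest prime power exceeding $rb$, whereas you pick $Q\in[cr\log N,2cr\log N]$ first and then set $b=\lceil\log N/\log Q\rceil$; both choices yield the same asymptotics.
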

\begin{proof}
    Apply Lemma~\ref{lem:constructcff} with $b=\ceil{\log N}$ and $Q$ being the smallest prime power that is greater than $rb$.
\end{proof}

In the next section we apply these results to the problem of error-correcting codes.

\subsection{Codes Correcting a Constant Number of Errors}
\color{black}
In this subsection, we will construct a LOCAL graph coloring algorithm for $G_n$ based on Linial's algorithm that runs in two rounds where $T_n$ is a polynomial, thus yielding efficiently encodable and decodable codes for a channel $\cE$. We focus mainly on the case where $\cE$ represents a constant number of errors. 
In this section, we consider graphs with $2^{\poly(n)}$ vertices and maximum degree $\poly(n)$. We show that for this class of graphs, two rounds are sufficient to get a coloring of size $\Tilde{O}(\Delta(G)^2)$. 

If $\Phi :V(G_n)\rightarrow M$ is a coloring that can be individually computed in time $T_{\Phi}$, then we get a coloring that can be used to correct the errors in $\cE$ whose complexity can be bounded as a function of $T_{\Phi}$. Suppose  $G_n$ is $T_e$-explicit. If $y\in\cE(x)$, then given $y$ and $\Phi(x)$, one way to recover $x$ is by computing  $\Phi(z)$ for all $z\in \cE^{-1}(y)$ and compare with $\Phi(x)$, which takes time  $\Delta(G_n)(T_e+T_{\Phi})$. Next we show how to use a $\Delta(G)$-cover-free subset family to get a new coloring of smaller size from an existing coloring.

\begin{lemma}\label{lem:cfftocolor}
    Suppose $G$ is a $T_e$-explicit graph. Suppose $\cJ=\{F_1,F_2,\dotsc,F_N\}$ is a $\Delta(G)$-cover-free subset family over a ground set $M$ that can be witnessed in time $T_w$. If we have an existing coloring $\Phi :V(G)\rightarrow [N]$ that can be individually computed in time $T_{\Phi}$, then we can get a new coloring $\Phi^{*} :V(G)\rightarrow M$, which can be individually computed in time $O(\Delta(G)(T_e+T_{\Phi})+T_w)$.
\end{lemma}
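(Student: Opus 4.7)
The plan is to define $\Phi^{*}$ pointwise via the witness algorithm of the cover-free family, using the old coloring on $v$ and its neighbors to select a label.

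\emph{Construction.} Given a vertex $v \in V(G)$, I would proceed as follows. First compute $\Phi(v)$ in time $T_{\Phi}$. Next, use the $T_e$-explicit representation of $G$ to list the neighbors $u_1, \dotsc, u_{\deg v}$ of $v$ in time $\deg(v)\cdot T_e \le \Delta(G)\cdot T_e$, and evaluate $\Phi(u_j)$ for each $j$ in total time $\deg(v)\cdot T_{\Phi} \le \Delta(G)\cdot T_{\Phi}$. Let $S_v = \{\Phi(u_1), \dotsc, \Phi(u_{\deg v})\} \subseteq [N]$. Since $\Phi$ is a proper coloring, $\Phi(v) \notin S_v$, and $|S_v| \le \deg v \le \Delta(G)$; padding $S_v$ with arbitrary indices from $[N]\setminus\{\Phi(v)\}$ (e.g., by repeating an existing entry) yields a valid input of the form required by Definition~\ref{def:wit}. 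Then invoke the witness algorithm $W$ on $(\Phi(v), S_v)$ in time $T_w$ to obtain an element
\[
\Phi^{*}(v) \in F_{\Phi(v)} \setminus \bigcup_{u \in \nei(v)} F_{\Phi(u)},
\]
which exists precisely because $\cJ$ is $\Delta(G)$-cover-free.

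\emph{Properness.} For any edge $(v, w) \in E(G)$, the defining property gives $\Phi^{*}(v) \in F_{\Phi(v)}$, while $\Phi^{*}(v) \notin F_{\Phi(w)}$ since $w \in \nei(v)$. On the other hand $\Phi^{*}(w) \in F_{\Phi(w)}$. If $\Phi^{*}(v) = \Phi^{*}(w)$ held, then this common element would lie in $F_{\Phi(w)}$, contradicting $\Phi^{*}(v) \notin F_{\Phi(w)}$. Hence $\Phi^{*}$ is proper.

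\emph{Complexity.} Summing the steps above, the individual-computation time for $\Phi^{*}(v)$ is $T_{\Phi} + \Delta(G)\,T_e + \Delta(G)\,T_{\Phi} + T_w = O(\Delta(G)(T_e + T_{\Phi}) + T_w)$, as required.

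The only real subtlety is matching the input format of the witness algorithm $W$, which expects exactly $r = \Delta(G)$ auxiliary indices distinct from $i_0$; as noted above, repeating entries of $S_v$ (or adding arbitrary indices from $[N]\setminus\{\Phi(v)\}$) preserves both the union $\bigcup_{u\in\nei(v)} F_{\Phi(u)}$ and the cover-free guarantee, so no new difficulty arises. Everything else is a direct bookkeeping argument, and I would expect the proof to be short.
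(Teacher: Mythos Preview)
Your proposal is correct and follows essentially the same approach as the paper: define $\Phi^{*}(v) = W(\Phi(v), \Phi(\nei(v)))$, verify properness via $\Phi^{*}(v) \in F_{\Phi(v)} \setminus F_{\Phi(u)}$ for each neighbor $u$, and tally the time as $O(\Delta(G)(T_e + T_{\Phi}) + T_w)$. Your treatment is slightly more careful than the paper's in explicitly addressing the padding of $S_v$ to match the witness algorithm's input format, but otherwise the arguments are the same.
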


\begin{proof}
Let $W$ denote the witness algorithm of $\cJ$.
    For a vertex $v$, we first compute $\Phi(v)$ and $\Phi(\nei(v))$, which takes time $(\Delta(G)+1)(T_\Phi + T_e)$.
   Note that $\abs{\Phi(\nei(v))}\leq \Delta(G)$. Let $\Phi^{*}(v)$ be $W(\Phi(v),\Phi(\nei(v)))$. The whole process takes time $(\Delta(G)+1)(T_\Phi + T_e) + T_w$.
   

    We now prove that $\Phi^{*}$ is a coloring. Assume $u\in \nei(v)$. We will show that $\Phi^{*}(v)\neq \Phi^{*}(u)$.    
    Since $\Phi$ is a coloring, we have $\Phi(v)\not \in \Phi(\nei(v)) $.  Since  $W$ is the witness algorithm defined over the family of sets $\cJ$, it holds that  $\Phi^{*}(v)=W(\Phi(v),\Phi(\nei(v)))\in F_{\Phi(v)}\setminus F_{\Phi(u)}$. Similarly we have $\Phi^{*}(u)\in F_{\Phi(u)}\setminus F_{\Phi(v)}$. Thus $\Phi^{*}(v)\neq \Phi^{*}(u)$.
\end{proof}

Next we prove several useful properties pertaining to coloring a graph with $2^{\poly(n)}$ vertices and maximum degree $\poly(n)$ that use Lemma~\ref{lem:cfftocolor}. The first statement in the next lemma follows from a single application of Lemma~\ref{lem:cfftocolor} whereas the second statement follows from two iterations of the same lemma.


\begin{lemma}\label{lem:deltaG2}
    Suppose $p(n)$ is upper bounded by a polynomial in $n$ and $G$ has $2^{p(n)}$ vertices. Suppose $\Delta (G)=\poly(n)$ and $G$ is $\poly(n)$-explicit. Then one can construct
    \begin{enumerate}
        \item a coloring $\Phi:V(G)\rightarrow [O(\Delta(G)^2 p(n)^2)]$, which can be individually computed in time $\poly(n)$. \label{lem:deltaG21}
        \item a coloring $\Phi:V(G)\rightarrow [\tilde{O}(\Delta(G)^2)]$, which can be individually computed in time $\poly(n)$.\label{lem:deltaG22}
    \end{enumerate}
    
\end{lemma}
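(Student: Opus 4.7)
The plan is to apply Lemma~\ref{lem:cfftocolor} iteratively, starting from a trivial proper coloring of $G$. The idea is that each application of Lemma~\ref{lem:cfftocolor} contracts the color palette from $N$ down to roughly $\Delta(G)^2 \log^2 N$, so after one step we shrink $2^{p(n)}$ colors to $\poly(n)$ colors, and after a second step we shrink again to $\tilde{O}(\Delta(G)^2)$ colors.

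For part~\ref{lem:deltaG21}, I would take as the ``existing coloring'' the identity map $\Phi_0:V(G)\to[2^{p(n)}]$ that assigns to each vertex its index. This is trivially a proper coloring, and computing $\Phi_0(v)$ for a given $v$ costs only $O(p(n))=\poly(n)$ time, so $T_{\Phi_0}=\poly(n)$. I then invoke Corollary~\ref{cor:constructw} with $r=\Delta(G)$ and $N=2^{p(n)}$ to obtain a $\Delta(G)$-cover-free family $\cJ_1$ over a ground set of size $t_1=O(\Delta(G)^2 p(n)^2)$, whose witness algorithm runs in time $\poly(\Delta(G),\log N)=\poly(\Delta(G),p(n))=\poly(n)$. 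Feeding $\Phi_0$ and $\cJ_1$ into Lemma~\ref{lem:cfftocolor}, using the explicitness bound $T_e=\poly(n)$, yields a new coloring $\Phi_1:V(G)\to[t_1]$ that is individually computable in time $O(\Delta(G)(T_e+T_{\Phi_0})+T_w)=\poly(n)$. This gives exactly the bound of part~\ref{lem:deltaG21}.

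For part~\ref{lem:deltaG22}, I would iterate once more, using $\Phi_1$ as the existing coloring. Here $N=t_1=O(\Delta(G)^2 p(n)^2)$, so $\log N=O(\log n)$. Applying Corollary~\ref{cor:constructw} again with $r=\Delta(G)$ produces a $\Delta(G)$-cover-free family $\cJ_2$ over a ground set of size
\[
t_2 \;=\; O\!\bigl(\Delta(G)^2 \log^2 t_1\bigr)\;=\; O\!\bigl(\Delta(G)^2 \log^2 n\bigr)\;=\;\tilde{O}(\Delta(G)^2),
\]
with a witness algorithm running in time $\poly(\Delta(G),\log t_1)=\poly(n)$. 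A second application of Lemma~\ref{lem:cfftocolor}, now with $T_{\Phi_1}=\poly(n)$ carried over from the previous step, yields $\Phi_2:V(G)\to[t_2]$ that is individually computable in time $\poly(n)$, as claimed.

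I do not expect a significant obstacle: the only points that need care are (i) checking that the identity assignment really is a proper coloring (it is, since distinct vertices receive distinct indices), and (ii) verifying that the $\poly(n)$ bounds on $T_e$, $T_\Phi$, and $T_w$ compose correctly through two rounds, which they do because polynomials are closed under addition, multiplication, and composition with $\Delta(G)=\poly(n)$. The whole argument is simply a disciplined bookkeeping of color counts and runtimes across two applications of Lemma~\ref{lem:cfftocolor}, with Corollary~\ref{cor:constructw} supplying the cover-free family at each step.
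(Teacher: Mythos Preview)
Your proposal is correct and follows essentially the same approach as the paper: start from the identity coloring on $[2^{p(n)}]$, apply Corollary~\ref{cor:constructw} and Lemma~\ref{lem:cfftocolor} once to obtain part~\ref{lem:deltaG21}, then iterate a second time to obtain part~\ref{lem:deltaG22}. The paper's proof is structured identically, and your bookkeeping of the parameters and runtimes matches.
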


\begin{proof}
\begin{enumerate}
    \item Take $N=2^{p(n)}$ and  $r=\Delta(G)$.  Corollary~\ref{cor:constructw} gives a $\Delta(G)$-cover-free family with $N$ sets over a ground set of size $O(p(n)^2\Delta(G)^2)$, which can be witnessed in time $\poly(n)$. Now, taking the identity of each vertex to be its old color, by Lemma~\ref{lem:cfftocolor}, one can construct a new coloring $\Phi:V(G)\rightarrow [O(\Delta(G)^2 p^2(n))]$, which can be individually computed in time $\poly(n)$.
    \item Let $N'=\poly(n)$. By 1), one can construct a coloring $\Phi:V(G)\rightarrow [N']$, which can be individually computed in time $\poly(n)$.
    With the assumed value of $N'$ and with $r=\Delta(G)$, Corollary~\ref{cor:constructw} gives a $\Delta(G)$-cover-free family with $N'$ sets over a ground set of size $O(\Delta^2(G)\log^2 N')$, which is $\tilde{O}(\Delta^2(G))$ since $N'=\poly(n)$. Moreover, it can be witnessed in time $\poly(n)$.  So by Lemma~\ref{lem:cfftocolor}, one can construct a new coloring $\Phi^{*}:V(G)\rightarrow [\tilde{O}(\Delta^2(G))]$, which can be individually computed in time $\poly(n)$.
\end{enumerate}
\end{proof}

We now restate Theorem~\ref{thm:GV}, which is an immediate consequence of the previous lemma. 

\thmGV*
\begin{proof}
    Lemma~\ref{lem:deltaG2} provides a LOCAL graph coloring algorithm for $G_n$ of size $2^{\rho}$ that runs in two rounds, and the time complexity of both rounds at each node is polynomial in $n$. Hence, by Theorem~\ref{thm:rt}, encoding and decoding takes polynomial time.
\end{proof}

Now we can apply Theorem~\ref{thm:GV} to $G_n$, where $\cE$  represents a constant number of edits (and where each edit can be an insertion or deletion). For example, if we take $\cE$ to be $k$ edits, where $k$ is a constant, then we can get the following corollary.

\begin{corollary}\label{cor:kedit}
 Suppose $k$ is a constant and $q(n)=O(\poly(n))$.    There is a function $\Phi:\Sigma_{q(n)}^n\rightarrow [2^{4k\log n+4k\log q(n)+O(\log\log n)}]$ that can correct $k$ edits. Specifically, if $y$ is obtained by $k$ edits from $x$, then given $y$ and $\Phi(x)$ one can recover $x$. Moreover, computing $\Phi$ and the recovery process takes $\poly(n)$ time.
\end{corollary}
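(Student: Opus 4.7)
The plan is to derive Corollary~\ref{cor:kedit} as a direct application of Theorem~\ref{thm:GV} to the confusion graph for the $k$-edits channel over $\Sigma_{q(n)}^n$. Let $\cE$ denote the channel in which up to $k$ insertions, deletions, or substitutions may be applied, and let $G_n$ be its confusion graph on vertex set $\Sigma_{q(n)}^n$. To invoke Theorem~\ref{thm:GV}, I need to check two ingredients: (i) $\Delta(G_n) = \poly(n)$ with the correct exponent to yield the claimed redundancy, and (ii) $G_n$ is $\poly(n)$-explicit.

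For (i), I would argue as follows. If $x$ and $x'$ share a neighbor $y$ in the confusion graph, i.e., $y \in \cE(x) \cap \cE(x')$, then by the triangle inequality for edit distance, $d_E(x,x') \le 2k$. Hence $\deg(x) \le |B_{2k}(x) \setminus \{x\}|$, where $B_{2k}(x)$ is the edit ball of radius $2k$ around $x$ over the alphabet $\Sigma_{q(n)}$. A standard counting argument (choose at most $2k$ edit operations, each specified by a position among at most $n+2k$ slots and a symbol from $\Sigma_{q(n)}$) shows that $|B_{2k}(x)| = O((n \cdot q(n))^{2k})$, so $\Delta(G_n) = O((n q(n))^{2k})$. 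Consequently,
\[
2\log\Delta_n + O(\log\log n) = 4k\log n + 4k\log q(n) + O(\log\log n),
\]
matching the target redundancy in the statement.

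For (ii), I need a polynomial-time algorithm that, given $x \in \Sigma_{q(n)}^n$ and an index $i \le \deg(x)$, outputs the $i$-th neighbor of $x$ in a fixed ordering. The plan is: fix any canonical ordering on $\Sigma_{q(n)}^n$ (e.g., lexicographic), and then enumerate $B_{2k}(x)$ by iterating over all sequences of at most $2k$ edit operations applied to $x$. Since $k$ is a constant, there are only $\poly(n, q(n)) = \poly(n)$ such sequences; for each candidate $x'$, we test whether $x' \neq x$ and whether $\cE(x') \cap \cE(x) \neq \emptyset$ (which for $k$-edits can be done in polynomial time by a standard dynamic-programming edit-distance check, or by directly checking $d_E(x,x') \le 2k$ together with the existence of a common $y$). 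Sorting the resulting list according to the canonical ordering and returning the $i$-th element takes polynomial time, establishing $\poly(n)$-explicitness.

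With both ingredients in hand, Theorem~\ref{thm:GV} produces a function $\Phi : \Sigma_{q(n)}^n \to \Sigma_2^{\rho}$ with $\rho = 4k\log n + 4k\log q(n) + O(\log\log n)$, polynomial-time encoding, and polynomial-time decoding, and the code $\{(x,\Phi(x)) : x \in \Sigma_{q(n)}^n\}$ corrects $k$ edits; given $y$ and $\Phi(x)$ one recovers $x$ by enumerating $\cE^{-1}(y)$ (polynomially many candidates, since $|\cE^{-1}(y)| = O((nq(n))^k)$) and selecting the unique one matching $\Phi(x)$. The main thing to be careful about is the explicitness verification: one must ensure the edit-ball enumeration and the canonical ordering can genuinely be carried out in polynomial time for $q$-ary alphabets with $q = \poly(n)$, but this is routine since $k$ is constant. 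The rest is a straightforward specialization of Theorem~\ref{thm:GV}.
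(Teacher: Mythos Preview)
Your proposal is correct and follows essentially the same approach as the paper: apply Theorem~\ref{thm:GV} after verifying that the confusion graph for $k$ edits over $\Sigma_{q(n)}^n$ has maximum degree $O((nq(n))^{2k})$ and is $\poly(n)$-explicit. Your argument is in fact more detailed than the paper's own proof, which simply asserts the degree bound and explicitness and invokes Theorem~\ref{thm:GV}.
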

\begin{proof}
    Let $\cE$ be $k$ edits, then $G_{n}$ has $q(n)^n=2^{O(\poly( n))}$ vertices.  
$\Delta(G_{n})=O(n^{2k}q(n)^{2k})=O(\poly(n))$. \added{Moreover, $G_{n}$ is $\poly(n)$-explicit. This is because assuming we are looking for the $j$th neighbor of $x\in\Sigma_2^n$. It takes $O(1)$ time to get the $(i+1)$th neighbor given the $i$th neighbor. In addition, the number of neighbors is bounded by $\Delta(G_n)=\poly(n)$. Thus, it takes $\poly(n)$ time to output the $j$th neighbor. So, $G_{n}$ is $\poly(n)$-explicit. }So we apply Theorem~\ref{thm:GV} and complete the proof.
\end{proof}

\subsection{Syndrome Compression as Recoloring with Cover-free Set Families}\label{subsec:sc}
We now show that syndrome compression \cite{sima2020syndrome} can be viewed as a specific method for constructing cover-free set families, and then using them to recolor to obtain, from a coloring of size $N=2^{o(\log n\log\log n)}$, a new coloring of size $O\left(\Delta^22^{o(\log n)}\right)$. In contrast to the typical use of polynomials to construct cover-free set families for Linial's graph coloring, syndrome compression uses an upper bound on the number of divisors, as described next. 

\begin{fact}[\hspace{1sp}{\cite[Lemma 3]{sima2020syndrome}, cf.  \cite{MR938967}}]\label{fact:divisor}
For a positive integer $N\geq 3$, the number
of divisors of $N$ is upper bounded by 
\[ 2^{1.6\frac{\log N}{\log (\log N/\log e)}}.\]
\end{fact}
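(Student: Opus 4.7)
The plan is to bound $d(N) = \prod_{i=1}^k (a_i+1)$, where $N = \prod_{i=1}^k p_i^{a_i}$ is the prime factorization, by splitting the prime factors according to a threshold and comparing $d(N)$ with $N^\epsilon$ factor by factor for a parameter $\epsilon > 0$ to be optimized at the end.

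First I would handle the ``large'' primes, those with $p_i \geq 2^{1/\epsilon}$, using the chain $a_i+1 \leq 2^{a_i} \leq p_i^{\epsilon a_i}$, valid for every $a_i \geq 0$; these primes contribute at most $1$ to the ratio $d(N)/N^\epsilon$. For the ``small'' primes $p_i < 2^{1/\epsilon}$, a short calculus maximization (differentiating in $a$ on the real line) shows
\[
\max_{a\geq 0}\frac{a+1}{p^{\epsilon a}} \;=\; \frac{p^\epsilon}{e\,\epsilon\,\ln p},
\]
attained near $a^\ast = 1/(\epsilon\ln p)-1$. Taking logarithms and summing the bound over the at most $\pi(2^{1/\epsilon})$ small primes dividing $N$ yields
\[
\log d(N) \;\leq\; \epsilon \log N \;+\; \sum_{p < 2^{1/\epsilon}} \log\!\frac{p^\epsilon}{e\,\epsilon\,\ln p}.
\]

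Finally, I would optimize $\epsilon$, choosing it of order $1/\log(\log N/\log e)$ so that the two terms balance. Treating the sum over small primes via classical prime-counting and Mertens-type estimates (as in \cite{MR938967}) recovers the Wigert-type bound $\log d(N) = O(\log N/\log(\log N/\log e))$; with careful bookkeeping of constants, the target coefficient $1.6$ emerges. The main obstacle is precisely this explicit constant: the $(1+o(1))$ asymptotic version is immediate from the argument above, but pinning down the closed-form $1.6$ requires careful tracking of the $O(1)$ slack in $a+1 \leq 2^a$ and of the lower-order contributions to $\sum_{p < y}\log(1/\ln p)$. For a self-contained paper of this length, the cleanest route is to quote the quantitative estimates in \cite{sima2020syndrome,MR938967} rather than reprove them, presenting the splitting argument above as the conceptual core.
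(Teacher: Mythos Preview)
The paper does not prove this statement at all: it is recorded as a \texttt{fact} environment with a citation to \cite{sima2020syndrome} and \cite{MR938967}, and is immediately followed by Construction~\ref{syndromeCompression} with no intervening argument. So there is no ``paper's own proof'' to compare against; the authors treat the divisor bound as a black box imported from the literature.

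Your sketch is the standard Wigert-type argument and is mathematically sound as far as it goes: the split into primes above and below $2^{1/\epsilon}$, the inequality $a+1\le 2^a\le p^{\epsilon a}$ for the large primes, and the calculus maximization $\max_{a\ge 0}(a+1)/p^{\epsilon a}=p^\epsilon/(e\,\epsilon\ln p)$ are all correct. You are also right that this delivers the $(1+o(1))$ asymptotic immediately, while the explicit constant $1.6$ needs additional numerical work on the small-prime sum. Since the paper's intent is simply to quote the bound, your closing suggestion---present the splitting argument as intuition and defer the constant to the cited references---matches what the authors actually do (minus the intuition). If you want the proof to be truly self-contained with the constant $1.6$, you would need to carry out that bookkeeping explicitly rather than gesture at it; otherwise your write-up is strictly more informative than the paper's treatment.
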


\begin{construction}\label{syndromeCompression}
Let $N$ and $r$ be positive integers and $$A=r2^{1.6\frac{\log N}{\log (\log N/\log e)}}+1.$$ 
Define $$F_i=\{(a,i \bmod a):a\in [A]\}, i\in [N].$$ Construct $\cJ=\{F_1,\dotsc,F_N\}$.
\end{construction}

\begin{lemma}
    The family $\cJ$ created in Construction~\ref{syndromeCompression} is an $r$-cover-free set family over $[A]\times [A]$. The size of $\cJ$ is $N$, and the size of the ground set is $$O\left(r^22^{3.2\frac{\log N}{\log (\log N/\log e)}}\right).$$
\end{lemma}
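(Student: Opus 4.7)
The plan is to verify the three claims in turn (size of $\cJ$, size of the ground set, and the $r$-cover-free property) and to lean on Fact~\ref{fact:divisor} for the only nontrivial step.

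First, by construction $\cJ=\{F_1,\dots,F_N\}$, so $|\cJ|=N$. Each $F_i$ is a subset of $[A]\times[A]$ (the second coordinate $i\bmod a$ lies in $\{0,1,\dots,a-1\}\subseteq[A]$), so the ground set has size $A^2$. Plugging in
\[
A = r\,2^{1.6\,\frac{\log N}{\log(\log N/\log e)}}+1,
\]
gives $A^2 = O\!\left(r^2\,2^{3.2\frac{\log N}{\log(\log N/\log e)}}\right)$, matching the stated bound.

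The core step is the cover-free property. Fix distinct $i_0,i_1,\dots,i_r\in[N]$ with $i_0\notin\{i_1,\dots,i_r\}$; I want to exhibit some $a^\star\in[A]$ such that $(a^\star, i_0\bmod a^\star)\in F_{i_0}\setminus\bigcup_{j=1}^{r}F_{i_j}$. The key observation is that for any $j\in\{1,\dots,r\}$,
\[
(a, i_0\bmod a)\in F_{i_j}\iff i_0\bmod a = i_j\bmod a \iff a\mid (i_0-i_j).
\]
Since $i_0\neq i_j$ and $|i_0-i_j|<N$, Fact~\ref{fact:divisor} bounds the number of positive divisors of $i_0-i_j$ by $2^{1.6\,\log N/\log(\log N/\log e)}$ (for $N\geq 3$; the small cases can be handled by inspection). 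Summing over the $r$ indices $j$, the total number of ``bad'' values $a\in[A]$ — those for which $(a,i_0\bmod a)$ lies in some $F_{i_j}$ — is at most $r\,2^{1.6\,\log N/\log(\log N/\log e)} = A-1 < |[A]|$. By pigeonhole, a good $a^\star$ exists, and the corresponding element $(a^\star, i_0\bmod a^\star)$ witnesses $F_{i_0}\setminus\bigcup_{j=1}^{r}F_{i_j}\neq\emptyset$.

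The only step requiring care is the divisor-count input to the pigeonhole argument: one must match the bound in Fact~\ref{fact:divisor} to the definition of $A$ so that the inequality $r\cdot(\text{divisor bound}) < A$ is tight, and handle the degenerate small-$N$ cases (where the divisor bound in Fact~\ref{fact:divisor} is stated for $N\ge 3$) separately. Beyond that, the argument is a direct application of the pigeonhole principle and no further machinery is needed.
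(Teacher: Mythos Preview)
Your proof is correct and follows essentially the same approach as the paper: both arguments note that $(a,i_0\bmod a)\in F_{i_j}$ iff $a\mid(i_0-i_j)$, bound the number of such $a$ by the divisor count from Fact~\ref{fact:divisor}, and conclude by pigeonhole since $|F_{i_0}|=A$ exceeds $r$ times this bound. The paper phrases the key step as bounding $|F_{i_0}\cap F_{i_j}|$ directly, whereas you count ``bad'' values of $a$, but since the elements of $F_{i_0}$ are in bijection with $[A]$ via the first coordinate, the two formulations are identical.
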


\begin{proof}
For each $F_i, F_j$ ($i\neq j$), $\abs{F_i\cap F_j}$ is the number of divisors of $\abs{i-j}$,  which is at most $$2^{1.6\frac{\log N}{\log (\log N/\log e)}},$$ by Fact \ref{fact:divisor}. 
So,  $r\abs{F_i\cap F_j}<A$ for $i\neq j$. Hence, $$F_{i_0}\setminus\bigcup_{j=1}^{r}F_{i_j}\neq \emptyset$$  if $i_0 \not\in \{i_1,\dotsc,i_r\}$. Therefore, $\cJ$ is an $r$-cover-free set family.
\end{proof}

Theorem~\ref{syndromeCompressionTheorem} is a restatement of syndrome compression obtained by applying Linial's algorithm (\Cref{lem:cfftocolor}) with respect to the cover-free set families given in Construction~\ref{syndromeCompression}.

\begin{theorem}\label{syndromeCompressionTheorem}
 Suppose there exists an existing coloring of $G$ of size $2^{o(\log n \log \log n)}$. Then one can get a new coloring of size $O\left({\Delta(G)}^2 2^{o(\log n)}\right)$.
\end{theorem}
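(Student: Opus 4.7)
The plan is to combine Construction~\ref{syndromeCompression} with Lemma~\ref{lem:cfftocolor}: the given coloring of size $2^{o(\log n \log \log n)}$ serves as the starting coloring required by the lemma, and Construction~\ref{syndromeCompression} supplies the $\Delta(G)$-cover-free family needed to drive the recoloring. In other words, the theorem should drop out once the two existing pieces are assembled and the resulting size bound is simplified asymptotically.

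First I would set $N$ equal to the size of the existing coloring and instantiate Construction~\ref{syndromeCompression} with $r = \Delta(G)$. The lemma preceding this theorem certifies that the resulting $\cJ$ is $\Delta(G)$-cover-free over a ground set of size
\[
O\!\left(\Delta(G)^2 \cdot 2^{3.2 \log N / \log(\log N / \log e)}\right).
\]
Next, applying Lemma~\ref{lem:cfftocolor} with this $\cJ$ and the given coloring yields a new proper coloring of $G$ whose number of colors equals the size of the ground set. (Computational considerations from Lemma~\ref{lem:cfftocolor} are not needed here since the theorem statement only asserts the existence of a new coloring.)

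The remaining work is the asymptotic simplification of the exponent. Writing $L = \log N$, the hypothesis $L = o(\log n \log \log n)$ should give $L/\log L = o(\log n)$, from which $2^{3.2 L / \log(L/\log e)} = 2^{o(\log n)}$ and hence the target bound $O(\Delta(G)^2 \cdot 2^{o(\log n)})$ on the number of new colors. I expect to justify $L/\log L = o(\log n)$ by a short case split: when $L$ is close to its upper bound $\log n \log \log n$, the denominator $\log L$ is $\Theta(\log \log n)$ and the claim follows directly; when $L$ is much smaller, the bound is slack. This asymptotic bookkeeping is the one mildly delicate portion of the argument, but it introduces no new ideas beyond the machinery already in place.
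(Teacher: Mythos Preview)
Your proposal is correct and follows essentially the same route as the paper: instantiate Construction~\ref{syndromeCompression} with $r=\Delta(G)$ and $N$ equal to the size of the given coloring, then invoke Lemma~\ref{lem:cfftocolor}. The paper handles the asymptotic by writing $\log N = w(n)\log n$ with $1<w(n)=o(\log\log n)$ and reading off $\tfrac{w(n)\log n}{\log(w(n)\log n/\log e)}=o(\log n)$ directly; your case split on the size of $L=\log N$ is a slightly more explicit version of the same computation.
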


\begin{proof}
By assumption, $G$ has an existing coloring of size $N$, where $N=2^{w(n)\log n}$ and $1<w(n)=o(\log \log n)$.
By Construction \ref{syndromeCompression}, there exists a $\Delta$-cover-free set family of size $N$ over a ground set of size $$O\left(\Delta^22^{3.2\frac{w(n)\log n}{\log(w(n)\log n/\log e)}}\right)=O\left(\Delta^2 2^{o(\log n)}\right).$$ By \Cref{lem:cfftocolor}, we can get a new coloring of size $O\left(\Delta^2 2^{o(\log n)}\right)$.
\end{proof}

The main difference between our method and syndrome compression is the construction of $r$-cover-free set families. Note that to get an $r$-cover-free set family of size $k$, Construction \ref{syndromeCompression} requires a larger ground set
compared to \Cref{cor:constructw}. In other words, for a specific ground set, Construction \ref{syndromeCompression} allows a smaller $r$-cover-free set family than \Cref{cor:constructw}. Therefore, if one wants to apply Construction \ref{syndromeCompression} and \Cref{lem:cfftocolor} to get a new coloring, one needs an existing coloring of a small size in advance.  This restriction may complicate potential code constructions. By contrast, in our method, we do not need an existing coloring of size $2^{o(\log n\log \log n)}$, and so the code construction provides more flexibility. 

Moreover, our method can be used for parameters that one cannot use syndrome compression. For example, in \cite{sima2020optimalq}, Sima et al.\ study $q$-ary deletion codes. They show that syndrome compression can only be applied if the alphabet size $q(n)$ is smaller than $n$. In contrast, Corollary~\ref{cor:kedit} can deal with the case where $q(n)=\poly(n)$.
Furthermore, in \cite{sima2020optimalq}, Sima et al.\  construct codes correcting $k$ deletions in $q(n)$-ary sequences for $q(n)>n$ with redundancy $(30k+1)\log q(n)$. By contrast, for  $q(n)=\poly(n)$, our redundancy in Corollary~\ref{cor:kedit} is $4k\log n+4k\log q(n)+O(\log\log n)$, which is much smaller.

\section{List-Decodable Codes Correcting a Constant Number of Errors}
In this section, we consider construction of list-decodable codes using methods generalized from local graph coloring.
The notion of a defective coloring, which will be introduced next, will be useful.

At first glance, it appears that a  defective coloring of $G_n$ will be of use. 
\begin{definition}
    Suppose that $G$ is a graph and $A$ is a set. A function $\Phi:V(G)\to A$ is a $d$-defective coloring if for any vertex $u$, there are at most $d$ neighbors $v\in\nei(u)$ satisfying $\Phi(u)=\Phi(v)$. 
\end{definition}
Note that a coloring is just a $0$-defective coloring. If we have an $(\ell-1)$-defective coloring $\Phi:G_n\to A$, then if the sender sends $x\in\Sigma_q^n$ and the receiver receives $y\in\Sigma_q^*$, then $\cE^{-1}(y)\subset \{x\}\cup \nei(x) $. So there are at most $\ell$  sequences $x'\in \cE^{-1}(y)$ satisfying that $\Phi(x)=\Phi(x')$. So one can list decode with list size $l$ using  $\Phi(x)$. If one wants to utilize local graph algorithms for defective
coloring to construct list-decodable codes, one will not get useful results. For example, in \cite{kuhn09}, Kuhn proposes a local  algorithm outputting an $O(\Delta^2/d^2)$-coloring with defect $d$, where $\Delta$ is the maximum degree of the graph. If one uses this result with $G=G_n$ and $d=\ell-1$, where $\ell$ is the list size, then one can get an $\ell$-list-decodable code with redundancy $2\log \Delta(G_n)-2\log (\ell-1)$, which is approximately twice the Gilbert-Varshamov bound (\added{Levenshtein bound in the  edit distance case}) if the list size is a constant. So we do not get much improvement compared to unique decoding. Moreover, polynomial-size lists are not helpful since for any $y$, $\abs{\cE^{-1}(y)}$ is already 
bounded by a polynomial in $n$.

In fact, the notion of a defective coloring is too strong for our purpose. 
Suppose that the sender sends $x\in \Sigma_q^n$ and the receiver receives $y\in\Sigma_q^*$, we do not need to require that $\Phi:V(G_n)\rightarrow A$ is an $(\ell-1)$-defective coloring. What we really need is  that $\Phi(z)=\Phi(x)$ holds for at most $\ell$ sequences 
 $z\in \cE^{-1}(y)$. In other words, for any  $x$ and any $y\in\cE(x)$, we have 
 \begin{align}\label{con:star}
     \abs{\{z\in\cE^{-1}(y):\Phi(z)=\Phi(x)\}}\leq \ell.
 \end{align}  When $\ell=1$, i.e. the unique decoding case,
the two conditions are the same. But when $\ell>1$, \eqref{con:star} is looser than defective coloring.
We now define the notion of $\ell$-labelings, which  captures the precise requirement of constructing list-decodable codes with list size $\ell$.

\begin{definition}\label{def:lcoloring}
Let $H$ be a hypergraph, and $\Phi:V(H)\rightarrow A$ is a function. We say that $\Phi$ is an $\ell$-labeling if for any $a\in A$ and edge $E$ of $H$, the equation $\Phi(x)=a$ has at most $\ell$ solutions in $E$.

An algorithm $\Dec$ is called an $\ell$-decoder of $\Phi$ if it takes an element $a$ of $A$, and an edge $E$ of $H$ as its inputs, and its output $\Dec(a,E)$ is a set of vertices of size at most $\ell$ satisfying that: 
for any  edge $E$ of $H$ and any  vertex $u\in E$, it holds that $u\in \Dec(\Phi(u),E)$.
\end{definition}

Note that $\Phi$ is an $\ell$-labeling if and only if $\Phi$ has an $\ell$-decoder.
If we have an $\ell$-labeling of $H_n$, which is $\Phi$, then $\{(x,\Phi(x)):x\in\Sigma_q^n\}$ is a list-decodable code for $\cE$ with list size $\ell$  (the $\ell$-decoder is the decoder of the code). Conversely, if we have a systematic list-decodable encoder $E(x)=(x,\Phi(x))$ with list size $\ell$, then $\Phi$ is an $\ell$-labeling.

\begin{example}
    Let $H_3$ be the hypergraph in Figure~\ref{fig:confHyperGraph}. Then $\Phi(000)=\Phi(010)=\Phi(101)=\Phi(111)=\text{red (circle)}$ and $\Phi(100)=\Phi(001)=\Phi(110)=\Phi(011)=\text{blue (rectangle)}$ is a $2$-labeling of $H_3$. For example, we have $\Dec(\text{red},00)=\{000,010\}$, $\Dec(\text{red},01)=\{010,101\}$, $\Dec(\text{blue},11)=\{110,011\}$, and $\Dec(\text{blue},10)=\{100,110\}$.
\end{example}

\subsection{$(r,v,l)$ Cover-free Set Families}
We will use a generalization of cover-free set families to construct labelings. 
\begin{definition}\label{def:rvlfree}
    A set family $\cJ=\{F_1,F_2,\dotsc,F_N\}$ over a ground set $M$ is called $(r,v,\ell)$-cover-free set family if for any $u\in [N]$, $U_1,U_2,\dotsc,U_r\subset [N]$ where $u\in U_i$ and $\abs{U_i}\leq v$ for all $i\in[r]$, there exists $x\in F_u$ such that the following holds: For all $i\in [r]$, $x\in F_s$ holds for at most $\ell$ values of $s\in U_i$. 
\end{definition}
In other words, for any set $F_u$ in $\cJ$, and any $r$ groups of sets that each include $F_u$ and contain at most $v$ sets, there is at least one element in $F_u$ that appears in no more than $\ell$ sets within each group. Note that Definition~\ref{def:rvlfree} is a generalization of cover-free set families. $\cJ$ is an $r$-cover-free set family if and only if $\cJ$ is an $(r,2,1)$-cover-free set family. 

Next, we prove the existence of $(r,v,\ell)$-cover-free set family.
\begin{lemma}\label{lem:rvlexist}
    For any positive integers $r,v,\ell,N$, there exists an  $(r,v,\ell)$-cover-free set family of size $N$ over a ground set  $[t]$, where $t=O(r^{1+1/\ell}v^2\log N)$.
\end{lemma}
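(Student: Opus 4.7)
The plan is to apply the probabilistic method: construct each set $F_i$ by including each element of $[t]$ independently with probability $p$, for parameters $p,t$ to be chosen below, and show that with positive probability the resulting family is $(r,v,\ell)$-cover-free.

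Fix a ``configuration'' consisting of an index $u\in[N]$ and subsets $U_1,\dotsc,U_r$ with $u\in U_i$ and $|U_i|\le v$. Because extending any $U_i$ only makes the cover-free condition harder to satisfy, it suffices to bound the probability of failure when $|U_i|=v$ for every $i$. Call an element $x\in[t]$ \emph{good} for this configuration if $x\in F_u$ and, for every $i$, fewer than $\ell$ indices $s\in U_i\setminus\{u\}$ satisfy $x\in F_s$; the configuration violates the $(r,v,\ell)$-cover-free property precisely when no $x$ is good. For fixed $x$, the indicators $[x\in F_s]$ are independent with mean $p$, so the count $Y_i=|\{s\in U_i\setminus\{u\}:x\in F_s\}|$ is $\mathrm{Bin}(v-1,p)$. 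A standard tail bound gives $\Pr[Y_i\ge\ell]\le\binom{v-1}{\ell}p^\ell\le (vp)^\ell/\ell!$, and a union bound over $i$ yields $\Pr[x\text{ not good}\mid x\in F_u]\le r(vp)^\ell/\ell!$. Setting $p=c\,v^{-1}r^{-1/\ell}$ for a sufficiently small absolute constant $c$ reduces this to $\le 1/2$, so $\Pr[x\text{ good}]\ge p/2$.

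The key structural observation is that the events ``$x$ is good'' for distinct $x$ depend on disjoint blocks of the underlying independent bits $\{[x\in F_s]\}_{x,s}$, hence are mutually independent. So the probability that the configuration admits no good witness is at most $(1-p/2)^t\le e^{-pt/2}$. The number of configurations with $|U_i|=v$ is at most $N\cdot\binom{N-1}{v-1}^r\le N^{rv}$, so a union bound over all configurations bounds the overall failure probability by $N^{rv}e^{-pt/2}$. Choosing $t=\lceil(2rv\ln N+1)/p\rceil=O(r^{1+1/\ell}v^2\log N)$ drives this below $1$, which establishes the existence of the required family.

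The main point to watch is the scaling of $p$: the $p^\ell$ that appears in the binomial tail (because we now allow up to $\ell-1$ incidental collisions per group, rather than zero as in the $\ell=1$ case underlying \Cref{cor:constructw}) is exactly what lets us take $p=\Theta(v^{-1}r^{-1/\ell})$ instead of $p=\Theta(v^{-1}r^{-1})$, and this is where the improved exponent $r^{1+1/\ell}$ comes from. Once $p$ is tuned, the remaining ingredients---the binomial tail bound, the configuration count, independence of goodness across distinct $x$, and the final union bound---are routine.
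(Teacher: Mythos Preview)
Your proposal is correct and follows essentially the same argument as the paper: a random construction with inclusion probability $p\asymp v^{-1}r^{-1/\ell}$, the per-element ``good'' analysis via a binomial tail and union bound over the $r$ groups, independence across ground-set elements to get the $(\,\cdot\,)^t$ factor, and a final union bound over the at most $N^{rv}$ configurations. The only differences are cosmetic (the paper picks explicit constants $p=((\ell+1)r)^{-1/\ell}/v$ and $t=\lceil 6r^{1+1/\ell}v^2\log N\rceil$ and computes the conditional bad probability as exactly $1/(\ell+1)$ rather than bounding it by $1/2$).
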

\begin{proof}
Let $t=\ceil{6 r^{1+1/\ell}v^2 \log N}$ and  $p=\left(\frac{1}{(\ell+1)r}\right)^{1/\ell}\frac{1}{v}$. We construct each $F_u,u\in[N],$ by including each element of $[t]$ in it with probability $p$, independent of all other choices. 
We show $\cJ:=\{F_1,F_2,\dotsc,F_N\}$ is an $(r,v,\ell)$-cover-free set family with positive probability.

We call $(u, U_1,\cdots ,U_r)$ an obstruction of $\cJ$ if 
\begin{enumerate}
    \item $u\in[N], U_1,\cdots ,U_r\subset [N]$ \label{obs: 1}
    \item $u\in U_i$ and $\abs{U_i}= v$ for  $i\in [r]$ \label{obs: 2}
    \item for all $x\in [t]$ \label{obs: 3}, either $x\not\in F_u$, or $x\in F_u$ and  $\exists i \in [r]$,  distinct $s_1,s_2,\dotsc,s_{\ell}\in U_i\setminus\{u\}$,  such that $x\in F_{s_j}$  for $j\in[\ell]$.
\end{enumerate}
Then $\cJ$ is an $(r,v,\ell)$-cover-free set family if and only if  $\cJ$ has no obstruction. For any $(u, U_1,\cdots ,U_r)$ that satisfies \ref{obs: 1}) and \ref{obs: 2}) the probability that it satisfies \ref{obs: 3}) is at most $\left( 1-p+ r v^{\ell} p^{\ell+1}\right)^t$. Thus, letting $t=\ceil{6 r^{1+1/{\ell}}v^2 \log N}$, we have
\begin{align*}
    \Pr&[\cJ \text{ is not an }(r,v,\ell)\text{-cover-free family}]\\
    &=\Pr[\cJ \text{ has at least one obstruction}] \\
    & \leq N^{rv}\left( 1-p+ r v^{\ell} p^{\ell+1}\right)^t \\
    & \leq N^{rv}\left( 1-\frac{\ell}{\ell+1}p\right)^t\\
    &<N^{rv}e^{-\frac{\ell}{\ell+1}pt}\\
     &<N^{rv}e^{-2rv\ln N}\\
     &<N^{-rv}\\
    &<1.
\end{align*}
Hence, there exists a choice of $\cJ$ that is an $(r,v,\ell)$-cover-free set family.
\end{proof}
Note that the proof of Lemma~\ref{lem:rvlexist} shows a random set family constructed as described in the lemma is in fact $(r,v,l)$-cover-free with high probability. 

Now we define the witness algorithm of an $(r,v,\ell)$-cover-free set family.
\begin{definition}\label{def:wit_rvl}
Suppose $\cJ=\{F_1,F_2,\dotsc,F_N\}$  is an \emph{$(r,v,\ell)$-cover-free} family over $M$. We say $\cJ$ can be witnessed in time $T_w$ if there is a witness algorithm $W$ such that for any input  $u\in [N]$, $U_1,U_2,\dotsc,U_r\subset [N]$, where $u\in U_i$ and $\abs{U_i}\leq v$ for all $i\in[r]$, $W(u,U_1,U_2,\dotsc,U_r)$ outputs in time $T_w$ a witness element $x$ of $F_u$ such that for all $i\in [r]$, $x\in F_s$ holds for at most $\ell$ values of $s\in U_i$. 
\end{definition}

Next, we apply Lemma~\ref{lem:rvlexist} to the case where $r,v,N$ are polynomials in $n$.

\begin{lemma}\label{lem:hyp}
   Suppose $r(n),v(n),N(n)=\poly(n)$, and $\ell$ is a positive integer. There exists an $(r(n),v(n),\ell)$-cover-free set family of size $N(n)$ over $[t(n)]$ with a $\poly(n)$-time witness algorithm, where $t(n)=O(r(n)^{1+1/{\ell}}v(n)^{2}\log N(n))$.
\end{lemma}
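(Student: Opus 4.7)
The plan is to combine the existential bound of Lemma~\ref{lem:rvlexist} with a brute-force witness search, exploiting the fact that every relevant parameter is polynomial in $n$.

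First, I would apply Lemma~\ref{lem:rvlexist} directly with $r = r(n)$, $v = v(n)$, the given $\ell$, and $N = N(n)$. This immediately yields an $(r(n),v(n),\ell)$-cover-free family $\cJ = \{F_1,\dotsc,F_{N(n)}\}$ over a ground set $[t(n)]$ with
\[
t(n) = O\bigl(r(n)^{1+1/\ell} v(n)^{2} \log N(n)\bigr),
\]
matching the size claimed in the lemma. Since $r(n), v(n), N(n) = \poly(n)$, the ground set size $t(n)$ is also $\poly(n)$, and $\cJ$ can be encoded by an $N(n) \times t(n)$ incidence matrix of polynomial size.

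Next, for the witness algorithm $W$, I would preprocess this incidence matrix so that membership queries ``$x \in F_s$?'' run in $O(1)$ time. Given an input $(u, U_1,\dotsc,U_r)$ with $u \in U_i$ and $|U_i| \le v$ for all $i \in [r]$, $W$ iterates over each $x \in F_u$; for each candidate $x$ and each $i \in [r]$, it computes $|\{s \in U_i : x \in F_s\}|$ and accepts $x$ as soon as all $r$ of these counts are at most $\ell$. The defining property of an $(r,v,\ell)$-cover-free family guarantees that some such $x$ exists, so the search always succeeds. The cost per candidate is $O(r(n) v(n))$, with at most $|F_u| \le t(n)$ candidates, giving total running time $O(t(n) r(n) v(n)) = \poly(n)$, as required by Definition~\ref{def:wit_rvl}.

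The main subtlety, rather than an obstacle, is that Lemma~\ref{lem:rvlexist} is proved by the probabilistic method and does not exhibit $\cJ$ explicitly. This is harmless for Lemma~\ref{lem:hyp} because the statement is purely existential: a family $\cJ$ satisfying the conclusion of Lemma~\ref{lem:rvlexist} together with the algorithm $W$ described above jointly witness the claim. Since $\cJ$ has only $\poly(n)$ bits of description, it can be regarded as hard-wired into $W$ in the usual nonuniform sense, and $W$ is then a $\poly(n)$-time witness algorithm.
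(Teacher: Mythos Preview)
Your proposal is correct and follows essentially the same approach as the paper: invoke Lemma~\ref{lem:rvlexist} for existence, then brute-force over $x\in F_u$ for the witness algorithm, with all parameters being $\poly(n)$. Your treatment is in fact more careful than the paper's, explicitly addressing the nonuniform nature of the construction, whereas the paper simply asserts that the brute-force search runs in $\poly(n)$ time.
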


\begin{proof}
The existence of the $(r(n),v(n),\ell)$-cover-free set family is proved in Lemma~\ref{lem:rvlexist}. Suppose the family is $$\cJ=\{F_1,F_2,\dotsc,F_{N(n)}\}.$$
For $u\in [N(n)]$, $U_1,U_2,\dotsc,U_r\subset [N(n)]$, where $u\in U_i$ and $\abs{U_i}\leq v(n)$ for all $i\in[r(n)]$, we let $W(u,U_1,U_2,\dotsc,U_r)$ brute force all the $x\in F_u$. Since $r(n),v(n),N(n), t(n)$ are all $O(\poly(n))$,  the witness algorithm runs in time $O(\poly(n))$.  
\end{proof}

\subsection{Labeling using Cover-free Set Families}

 We first discuss the relationship between colorings and labelings, starting with one more definition. 
    

\begin{definition}
    A hypergraph $H$ is $T$-explicit if  the following holds.
    \begin{enumerate}
        \item There is an ordering of edges and an algorithm  such that given a vertex index $v$ and $1\le i\le r(H)$, the algorithm can output the $i$-th edge that contains $v$ (in this ordering) in time $T$.
        \item There is an ordering of vertices and an algorithm  such that given an edge index $E$ and $1\le i\le v(H)$, the algorithm can output the $i$-th vertex in $E$ (in this ordering) in time $T$.
    \end{enumerate}
\end{definition}

 Note that if $\Phi:V(H)\to A$ is a coloring of $G(H)$, then $\Phi$ is a $1$-labeling of $H$. If $H$ is $T_e$-explicit, and a coloring of $G(H)$ can be individually computed in time $T$, then as a $1$-labeling of $H$, it can be individually computed in time $T$ and decoded in time $v(H)(T_e+T)$. For $a\in A$ and an edge $E$ of $H$, $\Dec(a,E)$ just compute $\Phi(x)$ for all $x\in E$, and output those $x$ satisfying $\Phi(x)=a$.
 
Next we show that one can construct a new 
$\ell\ell_1$-labeling of $H$ from an existing $\ell_1$-labeling through an $(r(H),v(H),\ell)$-cover-free set family.

\begin{lemma}\label{lem:llabeling}
Suppose $H$ is a $T_e$-explicit hypergraph with $N$ vertices, and  $\cJ=\{F_1,F_2,\dotsc,F_N\}$ is an $(r(H),v(H),\ell)$-cover-free set family of size $ N$ over ground set $[t]$ that has a witness algorithm  that runs in time $T_w$. If $\Phi:V(H)\to [N]$  is an existing $\ell_1$-labeling that can be individually computed in time $T_1$ and decoded in time $T_2$,  then we can get a new $\ell\ell_1$-labeling $\Phi^*:V(H)\rightarrow [t]$. Moreover, $\Phi^*$ can be individually computed in time $T_w+(2T_e+T_1)r(H)v(H)$ and decoded in time $v(H)(T_e+T_1)+\ell T_2$.
\end{lemma}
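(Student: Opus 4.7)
The plan is to define $\Phi^*(u)$ by invoking the witness algorithm $W$ of $\cJ$ on the profile of $\Phi$-labels seen in the edges incident to $u$. Concretely, for each vertex $u$, I use the explicit structure of $H$ to enumerate the (at most $r(H)$) edges $E_1,\dotsc,E_{r_u}$ containing $u$; for each $E_i$ I enumerate its (at most $v(H)$) vertices and compute their $\Phi$-labels, producing a set $U_i:=\Phi(E_i)\subseteq[N]$ of size at most $v(H)$ with $\Phi(u)\in U_i$. Padding with copies of $U_1$ if $r_u<r(H)$, set
\[
\Phi^*(u)\;:=\;W\bigl(\Phi(u),\,U_1,\dotsc,U_{r(H)}\bigr)\in[t].
\]
The individual cost is $r(H)$ edge-lookups, then $v(H)$ vertex-lookups and $\Phi$-evaluations per edge, plus one call to $W$, yielding $T_w+(2T_e+T_1)\,r(H)v(H)$.

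To verify that $\Phi^*$ is an $\ell\ell_1$-labeling, fix an edge $E$ and a label $a\in[t]$, and set $T_a:=\{u\in E:\Phi^*(u)=a\}$. Partition $T_a$ by the value of $\Phi$; since $\Phi$ is an $\ell_1$-labeling each block has size at most $\ell_1$, so it suffices to show that the number $m$ of distinct $\Phi$-values appearing in $T_a$ is at most $\ell$. If $T_a$ is nonempty, pick any $u^*\in T_a$. The edge $E$ contains $u^*$, so it was processed when computing $\Phi^*(u^*)$; hence $\Phi(E)$ appeared as one of the inputs $U_i$ to $W$. By the $(r(H),v(H),\ell)$-cover-free property, the output $a=\Phi^*(u^*)$ lies in $F_s$ for at most $\ell$ values of $s\in\Phi(E)$. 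Every distinct $\Phi(u)$ with $u\in T_a$ lies in $\Phi(E)$ and satisfies $a\in F_{\Phi(u)}$ by the construction of $\Phi^*$, so $m\le\ell$ and $|T_a|\le\ell\ell_1$.

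Decoding, given $a$ and $E$, proceeds in two stages. First, compute $\Phi$ on every vertex of $E$ in time $v(H)(T_e+T_1)$, and form $S:=\{c\in\Phi(E):a\in F_c\}$. If $|S|>\ell$, the argument of the preceding paragraph implies no $u\in E$ has $\Phi^*(u)=a$, so output $\emptyset$. Otherwise, for each $c\in S$ invoke the $\ell_1$-decoder of $\Phi$ on $(c,E)$ to obtain at most $\ell_1$ candidates, and return the union. The resulting set has size at most $\ell\ell_1$ and contains every $u\in E$ with $\Phi^*(u)=a$ (since any such $u$ has $\Phi(u)\in S$). The extra cost is at most $\ell\,T_2$, giving the claimed decoding time $v(H)(T_e+T_1)+\ell T_2$.

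The delicate step is the bound $m\le\ell$: it hinges on tracking how $W$ was called on $u^*$, namely that $E$ itself contributed $\Phi(E)$ as one of the $U_i$, so the cover-free guarantee applies directly to the witness $a$ against the label set $\Phi(E)$. Once this observation is in hand, partitioning $T_a$ by $\Phi$-value and applying the $\ell_1$-labeling property of $\Phi$ gives the labeling bound, and the same observation justifies pruning the $|S|>\ell$ case in the decoder. The rest is bookkeeping about explicitness and the running time of $W$ and the existing decoder of $\Phi$.
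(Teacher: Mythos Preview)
Your proof is correct and follows essentially the same approach as the paper: define $\Phi^*(u)$ via the witness algorithm applied to $\Phi(u)$ and the $\Phi$-images of the edges through $u$ (with padding), and decode by first collecting $S=\{c\in\Phi(E):a\in F_c\}$ and then invoking the $\ell_1$-decoder of $\Phi$ on each $c\in S$. The only cosmetic differences are that the paper pads with the last $U_{r'}$ rather than $U_1$, and it establishes the $\ell\ell_1$-labeling property indirectly by verifying the decoder, whereas you first prove the labeling bound directly and then describe the decoder; the underlying argument (that $|\{s\in\Phi(E):a\in F_s\}|\le\ell$ because $E$ was one of the $U_i$ in the witness call) is identical.
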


\begin{proof} 
 For a vertex $u$ of $H$, suppose $E_1,E_2,\dotsc,E_{r'}$ ($r'\leq r(H)$) are the edges that contain $u$. Since $\cJ$ is an $(r(H),v(H),\ell)$-cover-free set family with witness algorithm $W$, $W(\Phi(u),\Phi(E_1),\Phi(E_2),\dotsc,\Phi(E_r'),\Phi(E_r'),\dotsc,\Phi(E_r'))$ can output  an $x \in [t]$ satisfying that $x\in F_{\Phi(u)}$ and $x\in F_{j}$ holds for at most $\ell$ choices of $j\in\Phi(E_i)$ for $i\in [r']$. We define $\Phi^*(u)$ to be $x$. Note that it takes at most $(2T_e+T_1)r(H)v(H)$ time to compute $\Phi(u),\Phi(E_1),\Phi(E_2),\dotsc,\Phi(E_r')$. So it takes $T_w+(2T_e+T_1)r(H)v(H)$ to compute $\Phi^*$.

We describe the $\ell\ell_1$-decoder of $\Phi^*$. 
    For any $b\in [t]$ and edge $E$ of $H$, 
    let $$D(b,E)=\{j\in \Phi(E):b\in F_{j}\},$$ 
    and
    $$\Dec_{\Phi^*}(b,E)=\begin{cases}
        \Dec_{\Phi}(D(b,E),E) &\abs{D(b,E)}\leq \ell,\\
        \emptyset &\abs{D(b,E)}> \ell.
    \end{cases}$$
    Then $\abs{\Dec_{\Phi^*}(b,E)}\leq \ell\ell_1$. 
    For any vertex $u$ and edge $E$ of $H$, where $u\in E$, by the definition of $\Phi^*(u)$, we have $\Phi(u)\in D(\Phi^*(u),E)$ and $D(\Phi^*(u),E)\leq \ell$. Thus $$u\in \Dec_{\Phi}(D(\Phi^*(u),E),E)=\Dec_{\Phi^*}(\Phi^*(u),E).$$ So, $\Dec_{\Phi^*}$ is an  $\ell\ell_1$-decoder of $\Phi^*$, and $\Phi^*$ is an $\ell\ell_1$-labeling. Moreover, $D(b,E)$ can be computed in time $v(H)(T_e+T_1)$. So $\Dec_{\Phi^*}$ can be computed in time $v(H)(T_e+T_1)+\ell T_2$.
\end{proof}

\begin{lemma}\label{lem:1/l}
    Suppose $H$ has $2^{p(n)}$ vertices, and $r(H)=r(n),v(H)=v(n)$, where $p(n),r(n),v(n)$ are $O(\poly(n))$. Suppose $H$ is $\poly(n)$-explicit.  Then there exists an $\ell$-labeling $\Phi:V(H)\rightarrow\Tilde{O}(r(n)^{1+1/{\ell}}v(n)^2)$, where $\Phi$ can be individually computed and decoded in time $\poly(n)$.
\end{lemma}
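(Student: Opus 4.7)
The plan is to mirror the two-step recoloring strategy used for graphs in Lemma~\ref{lem:deltaG2}, but replacing Linial-style cover-free families with the $(r,v,\ell)$-cover-free families from Lemma~\ref{lem:hyp} and iterating Lemma~\ref{lem:llabeling}.

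First, I would start with the trivial identity $1$-labeling $\Phi_0 \colon V(H) \to [2^{p(n)}]$ that assigns each vertex its own distinct label. This is individually computable and decodable in $\poly(n)$ time (decoding an edge simply enumerates its $v(n)$ vertices, which takes $\poly(n)$ time by the $\poly(n)$-explicitness of $H$). Next, I would invoke Lemma~\ref{lem:hyp} with parameters $r(n)$, $v(n)$, $\ell'=1$, and $N=2^{p(n)}$ to obtain an $(r(n),v(n),1)$-cover-free family of size $2^{p(n)}$ over a ground set of size $t_1=O(r(n)^2 v(n)^2 p(n)) = \poly(n)$, with a witness algorithm running in $\poly(n)$ time. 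Applying Lemma~\ref{lem:llabeling} with $\ell=1$, $\ell_1=1$ to $\Phi_0$ then produces a $1$-labeling $\Phi_1 \colon V(H) \to [t_1]$, individually computable and decodable in $\poly(n)$ time.

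Second, I would re-apply Lemma~\ref{lem:hyp} with parameters $r(n)$, $v(n)$, the desired $\ell$, and $N=t_1=\poly(n)$ to obtain an $(r(n),v(n),\ell)$-cover-free family of size $t_1$ over a ground set of size
\[
t_2 = O\bigl(r(n)^{1+1/\ell}\, v(n)^2 \log t_1\bigr) = \tilde{O}\bigl(r(n)^{1+1/\ell}\, v(n)^2\bigr),
\]
again with a $\poly(n)$-time witness algorithm. Applying Lemma~\ref{lem:llabeling} to the $1$-labeling $\Phi_1$ with this new family yields an $\ell \cdot 1 = \ell$-labeling $\Phi \colon V(H) \to [t_2]$ of the desired size. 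The complexity bounds in Lemma~\ref{lem:llabeling} propagate through both iterations: since $T_e, T_w, T_1, T_2, r(H), v(H)$ are all $\poly(n)$, both individual computation and decoding of $\Phi$ remain $\poly(n)$.

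No step poses a substantial obstacle; the construction is essentially a double application of the machinery already built in Lemmas~\ref{lem:llabeling} and~\ref{lem:hyp}. The only point requiring a little care is verifying that the ground-set size $t_1$ of the intermediate labeling is genuinely $\poly(n)$ (so that Lemma~\ref{lem:hyp} applies in the second step), and that the final $\tilde{O}$ bound swallows the $\log t_1$ factor coming from the second application of Lemma~\ref{lem:hyp}, both of which follow immediately from $p(n), r(n), v(n) = \poly(n)$.
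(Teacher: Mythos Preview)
Your second step is fine and matches the paper exactly, but your first step has a genuine gap. You invoke Lemma~\ref{lem:hyp} with $N=2^{p(n)}$, yet that lemma explicitly requires $N(n)=\poly(n)$. This hypothesis is not cosmetic: the $(r,v,\ell)$-cover-free families of Lemma~\ref{lem:rvlexist} are only shown to \emph{exist} via a probabilistic argument, so the witness algorithm in Lemma~\ref{lem:hyp} works by storing the whole family (of size $N(n)\cdot t(n)$) and brute-forcing. When $N=2^{p(n)}$ this is exponential, and you no longer have a $\poly(n)$-time witness; hence Lemma~\ref{lem:llabeling} does not yield a $\Phi_1$ that is individually computable in $\poly(n)$ time.

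The paper handles the first reduction differently: it observes that $\Delta(G(H))\le r(n)v(n)$ and that $G(H)$ is $\poly(n)$-explicit, then applies Lemma~\ref{lem:deltaG2} (which rests on the \emph{explicit} polynomial-based cover-free families of Corollary~\ref{cor:constructw}) to obtain a proper coloring $\Phi_1:V(G(H))\to[O((r(n)v(n)p(n))^2)]$ computable in $\poly(n)$ time. This coloring is then viewed as a $1$-labeling of $H$, and only at that point---with $N=\poly(n)$---is Lemma~\ref{lem:hyp} invoked, exactly as in your second step. Replacing your first step with this graph-coloring argument fixes the proof; the rest of your proposal goes through unchanged.
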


\begin{proof}
Note that $\Delta(G(H))\leq r(n)v(n)$, and $G(H)$ is $r(n)v(n)$-explicit.    By Lemma~\ref{lem:deltaG2}, one can construct a coloring $\Phi_1:G(H)\to [O((r(n)v(n)p(n))^2)]$, which can be individually computed in time $\poly(n)$. So $\Phi_1:V(H)\to [O((r(n)v(n)p(n))^2)]$ is a $1$-labeling that can be individually computed and decoded in time $\poly(n)$. By Lemma~\ref{lem:llabeling}, viewing $\Phi_1$ as the existing $1$-labeling, using the $(r(n),v(n),\ell)$-cover-free set family in Lemma~\ref{lem:hyp} with $N(n)=O((r(n)v(n)p(n))^2)$, one can construct a new $\ell$-labeling $\Phi:V(H)\to [\Tilde{O}(r(n)^{1+1/{\ell}}v(n)^2)]$. Moreover, $\Phi$ can be individually computed and decoded in time $\poly(n)$.
\end{proof}

\subsection{Construction of List-Decodable Codes}
The following theorem is a direct result of \Cref{lem:1/l}.
\thmlist*

Note that compared to our  explicit construction of unique decodable codes, the  construction of list decodable codes is based on the existential result of the $(r,v,l)$-cover-free set family.
Now, to compare Theorem~\ref{thm:list} with Theorem~\ref{thm:GV}, we can apply Theorem~\ref{thm:list} to $H_n$, where $\cE$  represents a constant number of edits (and where each edit can be an insertion, deletion, or substitution). For example, if we take $\cE$ to be $k$ edits, where $k$ is a constant, then we can get the following corollary.

\begin{corollary}\label{cor:keditl}
 Suppose $k$ is a constant and $q(n)=O(\poly(n))$.    There exists a function \added{$$\Phi:\Sigma_{q(n)}^n\rightarrow [2^{(3+1/{\ell})k\log n+(3+1/{\ell})k\log q(n)+O(\log\log n)}]$$} that can correct $k$ edits with list size $\ell$. Specifically, if $y$ is obtained by $k$ edits from $x$, then given $y$ and $\Phi(x)$ one can recover a list of size $\ell$ that contains $x$. Moreover, computing $\Phi$ and the recovery process takes $\poly(n)$ time.
\end{corollary}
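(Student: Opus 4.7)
The plan is to apply Theorem~\ref{thm:list} directly to the confusion hypergraph $H_n$ of the $k$-edits channel $\cE$ over $\Sigma_{q(n)}^n$. The work will reduce to three ingredients: bounding $r(H_n)$ and $v(H_n)$ as polynomials in $n$, verifying that $H_n$ is $\poly(n)$-explicit, and tracking the $q(n)$ dependence through the label-set size $\tilde O\bigl(r(H_n)^{1+1/\ell} v(H_n)^2\bigr)$ provided by Lemma~\ref{lem:1/l}, rather than invoking only the coarse statement $\rho=(b(1+1/\ell)+2c)\log n+O(\log\log n)$ of Theorem~\ref{thm:list}.

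For the parameters, I would argue that $r(H_n)=\max_x \abs{\cE(x)}$, since each edge of $H_n$ through a vertex $x$ corresponds to an output $y\in\cE(x)$. Each of the $k$ edits is specified by a position (at most $n+1$ choices) and a symbol (at most $q(n)$ choices, or deletion), so $\abs{\cE(x)}=O((n\,q(n))^k)$. A symmetric counting argument on preimages gives $v(H_n)=\max_y \abs{\cE^{-1}(y)}=O((n\,q(n))^k)$. Both are $\poly(n)$ under the hypothesis $q(n)=\poly(n)$. For explicitness, I would note that given $x$ one can enumerate the strings $y\in\cE(x)$ one-by-one in $\poly(n)$ time each, and given an edge $\cE^{-1}(y)$ one can enumerate its length-$n$ preimages under $k$ edits in $\poly(n)$ time per element, so $H_n$ is $\poly(n)$-explicit in the sense of Section~\ref{subsec:not}.

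Substituting $\log r(H_n)=k\log n+k\log q(n)+O(1)$ and the identical bound for $\log v(H_n)$ into the logarithm of the label-set size yields
\begin{align*}
\log \tilde O\bigl(r(H_n)^{1+1/\ell} v(H_n)^2\bigr)
&= (1+1/\ell)\log r(H_n) + 2\log v(H_n) + O(\log\log n)\\
&= (3+1/\ell)\bigl(k\log n + k\log q(n)\bigr) + O(\log\log n),
\end{align*}
which matches the claimed redundancy. The polynomial encoding and list-decoding complexities then come directly from Lemma~\ref{lem:1/l}. I expect no real obstacle: the argument is essentially a parameter substitution, mirroring the proof of Corollary~\ref{cor:kedit} but with the unique-decoding bound $(2+2)\log \Delta(G_n)$ replaced by the list-size-$\ell$ bound $(1+1/\ell)\log r(H_n)+2\log v(H_n)$. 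The only step requiring minor care is checking that the explicitness constants are uniform in $k$ and $q(n)$, which is immediate since $k$ is a fixed constant.
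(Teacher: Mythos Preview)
Your proposal is correct and follows essentially the same approach as the paper: verify that the confusion hypergraph $H_n$ for $k$ edits over $\Sigma_{q(n)}^n$ satisfies the hypotheses of Theorem~\ref{thm:list} (polynomial $r(H_n)$, $v(H_n)$, and $\poly(n)$-explicitness), then read off the redundancy from the label-set size $\tilde O\bigl(r(H_n)^{1+1/\ell}v(H_n)^2\bigr)$ of Lemma~\ref{lem:1/l}. Your bounds $r(H_n),v(H_n)=O\bigl((n\,q(n))^k\bigr)$ are exactly the ones needed to produce the stated exponent $(3+1/\ell)k(\log n+\log q(n))$, and they agree with what the paper asserts in the introduction; the paper's own proof of this corollary records the looser bound $O\bigl((n\,q(n))^{2k}\bigr)$ for both quantities, which appears to be a slip imported from the confusion-\emph{graph} degree $\Delta(G_n)$, since those exponents would yield $(6+2/\ell)k$ rather than $(3+1/\ell)k$.
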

\begin{proof}
    Let $\cE$ be $k$ edits, then $H_{n}$ has $q(n)^n=2^{O(\poly( n))}$ vertices.  
$r(H_{n})=O(n^{2k}q(n)^{2k})=O(\poly(n))$.  $v(H_{n})=O(n^{2k}q(n)^{2k})=O(\poly(n))$. Moreover, $H_{n}$ is $\poly(n)$-explicit. We apply Theorem~\ref{thm:list} and complete the proof.
\end{proof}

\section{Incremental Synchronization}\label{sec:fs}
In this section, we introduce another application of the graph coloring framework, which we call `incremental synchronization.' Consider the following file synchronization scenario, where Alice has a string $x$, and Bob
has a string $y$. Suppose $x$ and $y$ differ by a number of edits. Alice wants to send Bob a syndrome $\Phi(x)$ of $x$ such that Bob can determine $x$ using $\Phi(x)$ and $y$. If Alice and Bob know the number of edits between $x$ and $y$, say $k$, then Alice can simply send the syndrome in Corollary~\ref{cor:kedit} to Bob, denoted here as $\Phi_{k}(x)$, which has a length of about $4k\log n$. However, in practice,  Alice and Bob may not know $k$. Instead, they may know the distribution of $k$. We consider a situation where $k$ is known to be small, say $k\le a$, with probability $1-p$, and large but still bounded, say $a<k\le b$, with probability $p$, for $a<b$. In this scenario, we are interested in synchronization protocols with the smallest communication load. 

In a ``naive'' approach, Alice sends $\Phi_{b}(x)$ to Bob, which costs $\abs{\Phi_b(x)}\simeq 4b\log n$ (the length of $\Phi_b(x)$). On the other hand, in terms of lower bounds, assuming the use of codes in \Cref{cor:kedit}, \added{and assuming that no shorter syndrome than $\Phi$  of \Cref{cor:kedit} exists}, we need at least $(1-p)\abs{\Phi_a(x)}+p\abs{\Phi_b(x)}\simeq 4(a+p(b-a))\log n$ bits on average, as Alice can do no better than when she has access to an Oracle with knowledge of the distance.

\begin{figure}
    \centering
    \def\a{.5}
    \def\b{2}

    \begin{tikzpicture}
      \begin{axis}[
        width=2.5in,
        height=2in,
        domain=0:1,
        samples=10,
        ylabel={$B/(4\log n)$},
        ylabel style={yshift=-0pt,font=\small},
        ymin=0,                         
        ytick={\a,(\a+\b)/2,\b,\a/4+5*\b/4,\a/2+3*\b/2},                  
        yticklabels={$a$,$\frac{a+b}2$,$b$,$\tfrac{a + 5b}{4}$,$\tfrac{a + 3b}{2}$},            
        xtick={0,1},                  
        xticklabels={0,1},
        extra x ticks={0.5},
        extra x tick labels={$p$},
        extra x tick style={tick style={draw=none},yshift=-10pt},
        clip=false,                     
        legend cell align=left,
        legend style={
          font=\scriptsize,
          at={(axis description cs:1.02,0.5)},  
          anchor=west,
          draw=none,
          fill=none,
          cells={anchor=west}
        },
        thick
      ]
        \addplot[black,solid] { \a + x*(\b - \a) };
        \addlegendentry{Oracle LB}

        \addplot[green!60!black,dashed] { \b };
        \addlegendentry{Naive protocol}

        \addplot[blue,dash dot] { (\a + \b)/2 + x*\b };
        \addlegendentry{Test-and-fallback}

        \addplot[red,solid,mark=o] { (\a + \b)/2 + x*(\b - (\a + \b)/4) };
        \addlegendentry{Inc. Synch.}
      \end{axis}
    \end{tikzpicture}

    \caption{The number of bits $B$ that Alice needs to send for synchronization based on the Oracle lower bound and three protocols (divided by $4\log n$ as a function of $p$, where the distribution of the edit distance is $\Pr(d_E(x,y)=a)=1-p,\Pr(d_E(x,y)=b)=p$.}
    \label{fig:inc-synch}
\end{figure}
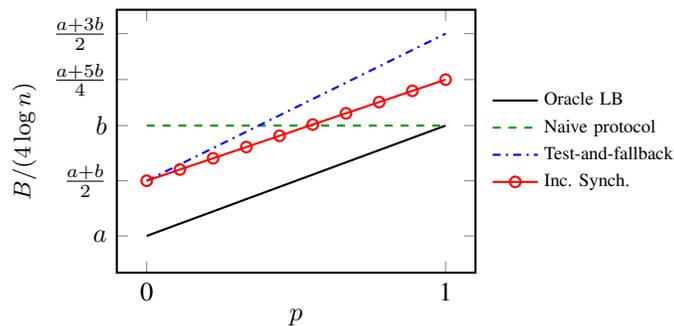

We propose two other approaches to this problem and compare their average communication costs.  In the ``test-and-fallback'' approach, Alice first sends $\Phi_{\ceil{\frac{a+b}{2}}}(x)$ to Bob. Note that $d_E(x,y)\leq a$ if and only if there exists a $z\in B_{a}(y)$ such that $\Phi_{\ceil{\frac{a+b}{2}}}(z)=\Phi_{\ceil{\frac{a+b}{2}}}(x)$. The forward direction is straightforward. We now turn to the backward direction, which needs further explanation. Note that $d_E(z,y)\leq a$ and $d_E(x,y)\leq b$. So $d_E(x,z)\leq a+b$. Since $\Phi_{\ceil{\frac{a+b}{2}}}(z)=\Phi_{\ceil{\frac{a+b}{2}}}(x)$, we have $x=z$. Since $z\in B_{a}(y)$, we have $d_E(x,y)\leq a$. So if there exists such $z$, then Bob knows that $x=z$. Otherwise, Alice sends $\Phi_{b}(x)$ to Bob. The expected communication from Alice to Bob is $\abs{\Phi_{\ceil{\frac{a+b}{2}}}(x)}+p\abs{\Phi_b(x)}$. (We ignore the 1-bit message sent from Bob to Alice.) There exists $0<p_0<1$ such that the second approach has a lower expected communication load than the first when $p<p_0$. An example for the special case of $\Pr(d_E(x,y)=a)=1-p, \Pr(d_E(x,y)=b)=p$, is given in \Cref{fig:inc-synch}.

The test-and-fallback approach leads to an intriguing question. Suppose after the first message sent by Alice, Bob fails to determine $x$. At this point, even though Bob does not know $x$, he has some information about it as he knows $\Phi_{\ceil{\frac{a+b}{2}}}(x)$. Can we take advantage of this fact and send a message shorter than $\abs{\Phi_b(x)}$? The next theorem, provides an affirmative answer and leads to a protocol that we call ``incremental synchronization.'' The key idea of the proof is that using the prior information, we construct a graph with a smaller maximum degree than the graph for correcting $b$ edits.
\thmincremental*

\begin{proof}
    Let  \added{$C:=\{z\in \Sigma_2^n:\Phi_a(z)=\Phi_a(x)\}$}. Then $C$ is a code with edit distance at least $2a+1$. Let $\cE_b$ be the channel that applies at most $b$ edits, and let $G$ be the confusion graph for this channel but with inputs being the set $C$. Note that $(u,v)$ is an edge if the edit distance between $u$ and $v$ is at most $2b$. So the degree of a vertex $u$ in $G$ is at most $\abs{B_{2b}(u)\cap C}$. To bound the maximum degree of $G$, we need Lemma~\ref{lem:distance}, which is a generalization of Lemma 4 in \cite{song22systematic}. By Lemma~\ref{lem:distance}, with $D=2b$ and $d=2a+1$, we have $\Delta (G)\leq O(n^{2b-a})$. 

Now, if $\Phi_a$  can be computed in $\poly(n)$ time, then $G$ is $\poly(n)$-explicit. Thus, by Lemma~\ref{lem:deltaG2}.\ref{lem:deltaG22}, one can construct a coloring $\Phi:V(G)\rightarrow [\Tilde{O}(n^{4b-2a})]$ that can be individually computed in $\poly(n)$ time. This $\Phi$ is our desired $\Phi_{b|a}$. 
\end{proof} 
Note that the maximum degree of the graph $G$ defined in the proof, i.e., $O(n^{2b-a})$, is lower than the maximum degree of the original confusion graph for $\le b$ edits with inputs $\Sigma_2^n$, i.e., $O(n^{2b})$. This reduction is the result of limiting the vertex set to vertices with a given value of $\Phi_a$.

\begin{lemma}\label{lem:distance}
    Suppose that $C$ is a code with edit distance $d$, then for any 
$c\in C$ and $D\geq \floor{\frac{d-1}{2}}$, we have $\abs{B_D(c)\cap C}\leq \abs{B_{D-\floor{\frac{d-1}{2}}}(c)}$. 
\end{lemma}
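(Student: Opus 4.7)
The plan is to build an injection from $B_D(c)\cap C$ into $B_{D-t}(c)$, where $t:=\lfloor (d-1)/2\rfloor$. The picture is the standard sphere-packing one: pull each codeword $c'\in B_D(c)\cap C$ back along an optimal edit path from $c$ toward $c'$, stopping at distance exactly $t$ short of $c'$. The minimum distance of $C$ will force distinct codewords to produce distinct intermediate points.

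Concretely, first I would record the edit-distance \emph{midpoint} fact: whenever $d_E(u,v)=e$ and $0\le k\le e$, there exists a string $w$ with $d_E(u,w)=k$ and $d_E(w,v)=e-k$. This is immediate from any optimal edit script of length $e$ by applying the first $k$ operations and invoking the triangle inequality to upgrade the two resulting inequalities to equalities. Next, given $c'\in B_D(c)\cap C$, I define the map $\phi(c')$ as follows. If $c'=c$, set $\phi(c')=c$; otherwise $c'\ne c$ are both codewords, so $d_E(c,c')\ge d\ge 2t+1>t$, and I apply the midpoint fact with $k=d_E(c,c')-t$ to obtain a string $z$ with $d_E(c,z)=d_E(c,c')-t\le D-t$ and $d_E(z,c')=t$. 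Put $\phi(c')=z$. In both cases $\phi(c')\in B_{D-t}(c)$, so $\phi:B_D(c)\cap C\to B_{D-t}(c)$ is well-defined.

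It remains to verify injectivity. Suppose $\phi(c_1')=\phi(c_2')=z$ with $c_1',c_2'\in B_D(c)\cap C$. If either $c_i'$ equals $c$, then $z=c$, forcing $d_E(c,c_i')=0$ for the other $c_i'$ by construction (since $d_E(c,\phi(c_i'))=d_E(c,c_i')-t$ would have to equal $0$, giving $d_E(c,c_i')=t<d$, which would contradict minimum distance unless $c_i'=c$). Otherwise, both are distinct from $c$ and satisfy $d_E(z,c_i')=t$, so by the triangle inequality $d_E(c_1',c_2')\le 2t\le d-1<d$; minimum distance forces $c_1'=c_2'$. Therefore $\phi$ is injective and $|B_D(c)\cap C|\le |B_{D-t}(c)|$.

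The only potentially subtle step is the midpoint fact about edit distances; everything else is a direct triangle-inequality/sphere-packing argument. I would either prove the midpoint fact in a single line from an optimal edit script or cite the analogous statement that underlies Lemma~4 of \cite{song22systematic}, which the lemma generalizes.
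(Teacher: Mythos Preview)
Your proof is correct and follows essentially the same approach as the paper: both construct an injection by walking $t=\lfloor (d-1)/2\rfloor$ edit steps from each $c'\in B_D(c)\cap C$ along an edit path toward $c$, then use the triangle inequality plus minimum distance to show the map is injective. Your version is in fact slightly more careful (you explicitly state the edit-path midpoint fact and handle the $c'=c$ case), but the argument is the same.
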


\begin{proof}
  We can construct an injection $f$ from   $B_D(c)\cap C$ to $B_{D-\floor{\frac{d-1}{2}}}(c)$ as follows. For any $c'\in B_D(c)\cap C$, there is a sequence of at most $D$ steps of  edits that can change $c'$ to $c$. We let $f(c')$ be the one at the $\floor{\frac{d-1}{2}}$-th step  of this sequence.  Note that $f(c')\in B_{D-\floor{\frac{d-1}{2}}}(c)$. Moreover, if $c_1', c_2'\in B_D(c)\cap C$ and $f(c_1')=f(c_2')$, then $d_E(c_1',c_2')\leq d-1$. Since $C$ has distance $d$, we have $c_1'=c_2'$. Thus $f$ is an injection from $B_D(c)\cap C$ to $B_{D-\floor{\frac{d-1}{2}}}(c)$.
\end{proof}

Suppose Alice has $x$, and Bob has $y$. $\Pr[d_E(x,y)\leq a]=1-p$,  $\Pr[a<d_E(x,y)\leq b]=p$.
We summarize the incremental synchronization protocol as follows. 
\begin{enumerate}
    \item Alice: Sends $\Phi_{\ceil{\frac{a+b}{2}}}(x)$ to Bob.
    \item Bob: If there exists $ z\in B_{a}(y)$ such that  $\Phi_{\ceil{\frac{a+b}{2}}}(z)=\Phi_{\ceil{\frac{a+b}{2}}}(x)$, then Bob outputs $z$, sends $1$ to Alice, and terminates the algorithm. Otherwise Bob sends $0$ to Alice. 
    \item Alice: If receives $0$, then sends $\Phi_{b|\ceil{\frac{a+b}{2}}}(x)$ to Bob.
    \item Bob: Searches for the unique $z\in B_{b}(y)$ such that $\Phi_{\ceil{\frac{a+b}{2}}}(z)=\Phi_{\ceil{\frac{a+b}{2}}}(x)$ and $\Phi_{b|\ceil{\frac{a+b}{2}}}(z)=\Phi_{b|\ceil{\frac{a+b}{2}}}(x)$. Outputs $z$.
\end{enumerate}
The expected communication from Alice to Bob is 
\begin{align*}
    \abs{\Phi_{\ceil{\frac{a+b}{2}}}(x)}+p\abs{\Phi_{b|\ceil{\frac{a+b}{2}}}(x)},
\end{align*}
which is 
\begin{align*}
  \left(4 \ceil{\frac{a+b}{2}}+p\left(4b-2\ceil{\frac{a+b}{2}}\right)\right)\log n+O(\log\log n)
\end{align*}
by applying Corollary~\ref{cor:kedit} and Theorem~\ref{thm:incremental}.
The incremental synchronization scheme has a lower expected communication load than the test-and-fallback approach. Moreover, there exists $p_1>p_0$ such that incremental synchronization has lower communication than the naive approach when $p<p_1$. These are shown for the special case of $\Pr(d_E(x,y)=a)=1-p,\Pr(d_E(x,y)=b)=p$ in \Cref{fig:inc-synch}.

\section{Codes Correcting a Constant Number of Substring 
Edits}\label{sec:multisubstring}
Recall that an $l$-substring edit is the operation of replacing a substring $u$ of $x$ with another string $v$, where $\abs{u},\abs{v}\le l$.
In this section, we will study codes correcting $k$ $l$-substring edits, where $k$ is a constant and $l$ may grow with $n$. First, we derive two bounds on the redundancy. Second, we construct codes using Theorem~\ref{thm:GV} for $l=O(\log n)$.  Third, to extend Theorem~\ref{thm:GV} to $l=\omega(\log n)$, we first consider a simplified case where the substring edits is a  burst of deletions. Then, we generalize this method to correct a constant number of substring edits.
\subsection{Bounds}
In this subsection, we will derive a lower bound and an upper (existential) bound on the redundancy of codes correcting multiple substring edits, in \Cref{lem:hamming} and \Cref{lem:existential}, respectively.
\begin{lemma}[Hamming bound]\label{lem:hamming}
Let $n,k,l$ be positive integers.     A code correcting $k$ $l$-substring edits has redundancy at least $\max\{\log \binom{n}{k},kl\}$.
\end{lemma}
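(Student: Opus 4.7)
The plan is to prove the two lower bounds separately via sphere-packing, exploiting the fact that both a single-symbol substitution (take $|u|=|v|=1$, $u\neq v$) and a full length-$l$ block substitution (take $|u|=|v|=l$ with arbitrary $v$) are permissible $l$-substring edits by the definition $|u|,|v|\le l$. Consequently, a code $C$ correcting $k$ $l$-substring edits must in particular correct $k$ edits of either restricted type, which translates, by the usual error-correction argument, into disjointness of the corresponding ``balls'' around distinct codewords.

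For the bound $\text{redundancy}\geq kl$ (assuming $kl\leq n$, otherwise the statement is vacuous), I would fix any $k$ pairwise disjoint length-$l$ windows in $[n]$ and note that independently replacing each window's content with an arbitrary length-$l$ binary string is a legitimate sequence of at most $k$ $l$-substring edits. This produces $2^{kl}$ distinct length-$n$ outputs from every $x\in\Sigma_2^n$ (distinctness is immediate because the windows are disjoint and determine the bits placed in them). Since $C$ is $k$-$l$-substring-edit correcting, the length-$n$ slices of the edit balls around distinct codewords must be pairwise disjoint subsets of $\Sigma_2^n$, giving $|C|\cdot 2^{kl}\leq 2^n$ and hence redundancy at least $kl$.

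For the bound $\text{redundancy}\geq \log\binom{n}{k}$, I would observe that any single substitution is itself an $l$-substring edit, so $C$ in particular corrects $k$ ordinary substitutions. The classical Hamming (sphere-packing) bound then gives $|C|\cdot\sum_{i=0}^{k}\binom{n}{i}\leq 2^n$, and since $\sum_{i=0}^{k}\binom{n}{i}\geq \binom{n}{k}$, we obtain $|C|\leq 2^n/\binom{n}{k}$, i.e.\ redundancy at least $\log\binom{n}{k}$.

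Combining the two bounds yields the claim $\max\{\log\binom{n}{k},kl\}$. I do not anticipate a substantive obstacle: the only subtleties are (i) checking that the edit sequences used to populate each ball really are $l$-substring edits, and (ii) confirming that the produced outputs are genuinely distinct, both of which are immediate from the definitions.
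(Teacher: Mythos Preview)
Your proposal is correct and follows essentially the same sphere-packing approach as the paper: the paper likewise obtains the $kl$ bound by observing that replacing the first $kl$ bits (i.e., $k$ disjoint length-$l$ blocks) with arbitrary content already yields $2^{kl}$ distinct outputs in $\cE(x)$, and obtains the $\log\binom{n}{k}$ bound by noting that $k$ single-bit substitutions lie in $\cE(x)$, so $|\cE(x)|\ge \binom{n}{k}$. Your explicit remark that the $kl$ bound tacitly assumes $kl\le n$ is a useful clarification the paper omits.
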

\begin{proof}
    Let $\cE$ be $k$ $l$-substring edits.   We  show that for any $x\in \Sigma_2^n$, we have $\abs{\cE(x)}\geq \max\{\binom{n}{k},2^{kl}\}$. 
 First, $\cE(x)$ contains all sequences that can be obtained by substituting $k$ bits in $x$. Note that there are at least $\binom{n}{k}$ such sequences. So $\abs{\cE(x)}\geq \binom{n}{k}$. Similarly, $\cE(x)$ contains all sequences that can be obtained by substituting the first $kl$ bits of $x$. There are $2^{kl}$ such sequences. Thus $\abs{\cE(x)}\geq 2^{kl}.$

 Suppose that $C$ is a code correcting $\cE$, then for any $x_1\neq x_2\in C$, $\cE(x_1)\cap \cE(x_2)=\emptyset$. Thus $\abs{C}\leq \frac{2^n}{\max\{\binom{n}{k},2^{kl}\}}$. Hence, the redundancy is at least $\max\{\log \binom{n}{k},kl\}$.
\end{proof}


\begin{lemma}[Gilbert-Varshamov bound]\label{lem:existential}
    There exist codes correcting $k$ $l$-substring edits with redundancy $2k\log n+2kl+4k\log (l+1)+O(1)$.
\end{lemma}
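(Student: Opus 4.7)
The plan is to establish the existential bound via a standard Gilbert–Varshamov style greedy argument applied to the confusion graph of the $k$ $l$-substring edit channel. Let $\cE$ denote the operation of applying up to $k$ $l$-substring edits, and let $G_n$ be its confusion graph on $\Sigma_2^n$. I will build a code $C$ incrementally: pick any uncovered vertex $x\in\Sigma_2^n$, add it to $C$, remove its closed neighborhood $N[x]$ from the pool, and repeat. This terminates in a code with no two codewords adjacent in $G_n$, and guarantees
\[
|C|\;\geq\;\frac{2^n}{\max_{x\in\Sigma_2^n}|N[x]|}.
\]
The redundancy bound will follow from a counting bound on $|N[x]|$.

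The key step is bounding $|N[x]|$, i.e.\ the number of $x'\in\Sigma_2^n$ with $\cE(x)\cap\cE(x')\neq\emptyset$. By the triangle-inequality flavor of edit operations, any such $x'$ can be obtained from $x$ by a sequence of at most $2k$ $l$-substring edits (the $k$ edits going from $x$ to a common output $y\in\cE(x)\cap\cE(x')$, composed with the $k$ inverse edits going from $y$ to $x'$; each inverse of an $l$-substring edit is itself an $l$-substring edit). So it suffices to count the number of distinct length-$n$ strings reachable from $x$ via any sequence of $2k$ $l$-substring edits.

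A single $l$-substring edit is fully specified by: (i) the starting position of the substring $u$ being replaced (at most $n$ choices); (ii) the length of $u$, an integer in $\{0,1,\ldots,l\}$ (at most $l+1$ choices); (iii) the length of the replacement $v$, an integer in $\{0,1,\ldots,l\}$ (at most $l+1$ choices); and (iv) the binary content of $v$ (at most $2^l$ choices). Hence the number of distinct single edits is at most $n(l+1)^2 2^l$, and the number of sequences reachable in $2k$ edits is at most
\[
\bigl(n(l+1)^2 2^l\bigr)^{2k} \;=\; n^{2k}\,(l+1)^{4k}\,2^{2kl}.
\]
Combining with the greedy lower bound yields
\[
|C|\;\geq\;\frac{2^n}{n^{2k}(l+1)^{4k}2^{2kl}},
\]
so the redundancy $n-\log|C|$ is at most $2k\log n+4k\log(l+1)+2kl$, which after absorbing the trivial rounding slack gives the claimed $2k\log n+2kl+4k\log(l+1)+O(1)$.

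I do not expect any serious obstacle; this is a purely combinatorial ball-size estimate together with the classical greedy construction. The only mildly delicate point is justifying that membership in $N[x]$ implies reachability within $2k$ substring edits, which uses that the inverse of an $l$-substring edit (swap the roles of $u$ and $v$) is itself an $l$-substring edit, so the edit-ball in the ``$\le 2k$ substring edits'' metric contains the full confusion neighborhood.
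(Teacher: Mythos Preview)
Your proposal is correct and follows essentially the same approach as the paper: bound the maximum degree of the confusion graph by counting the at most $(n(l+1)^2 2^l)^{2k}$ strings reachable via $2k$ $l$-substring edits, then apply a greedy argument. The only cosmetic difference is that the paper phrases the greedy step as producing a $(\Delta+1)$-coloring and taking the largest color class, while you directly build a maximal independent set; both yield the same redundancy bound.
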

\begin{proof}
Let $\cE$ be   $k$ $l$-substring edits.  We first bound $\Delta(G_n)$. To this end, we fix a vertex $u$ of $G_n$ and count the number of its neighbors. For a vertex $v$ to be a neighbor of $u$, there exists some $y$ such that $y\in\cE(u)$ and $y\in\cE(v)$. 
This implies that $v$ is obtained by deleting $2k$ bursts of length at most $l$ and inserting $2k$ bursts of length at most $l$ at same positions. There are at most  $(n(l+1))^{2k}$ ways to delete $2k$ bursts of length at most $l$, and  there are at most $(l+1)^{2k}$ ways to choose insertion lengths of the $2k$ bursts insertions of length at most $l$, and there are $(2^{l})^{2k}$ choices for the insertion content. Therefore, 
$\Delta(G_n)=O((n(l+1)^22^l)^{2k})$. So by a greedy argument, there is a coloring $\Phi:G_n\to[O((n(l+1)^22^l)^{2k})]$. Thus, there exist a code correcting $k$ $l$-substring edits,  $\Phi^{-1}(c)$ for some color $c$, with redundancy $2k\log n+2kl+4k\log (l+1)+O(1)$.
\end{proof}

\subsection{Codes Correcting a Constant Number of Short Substring Edits}

Now we apply Theorem~\ref{thm:GV} to $G_n$, where $\cE$ is   $k$ $l$-substring edits,  $k$ is a constant, and $l=O(\log n)$. By Theorem~\ref{thm:GV}, we can get the following corollary.

\begin{corollary}\label{cor:l=logn}
    Suppose $k$ is a positive integer  and $l=O(\log n)$. There is a function  \[\Phi:\Sigma_2^n\rightarrow [2^{4k(\log n+l+2\log (l+1))+O(\log\log n)}]\] such that  $\{(x,\Phi(x)):x\in\Sigma_2^n\}$ can correct $k$ $l$-substring edits (\added{for synchronization channel}), with $\poly(n)$ time encoding and decoding complexity.
\end{corollary}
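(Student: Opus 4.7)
The plan is to apply Theorem~\ref{thm:GV} directly to the confusion graph $G_n$ for the channel $\cE$ consisting of $k$ $l$-substring edits, so the proof reduces to verifying the two hypotheses of that theorem: that $\Delta(G_n) = \poly(n)$, and that $G_n$ is $\poly(n)$-explicit.

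First, I would invoke the degree bound already established in the proof of Lemma~\ref{lem:existential}, namely
\[
\Delta(G_n) \;=\; O\bigl((n(l+1)^2 2^l)^{2k}\bigr).
\]
Since $l = O(\log n)$, we have $2^l = \poly(n)$ and $l+1 = O(\log n)$, so the whole expression is polynomial in $n$. Taking logarithms,
\[
2\log \Delta(G_n) \;=\; 4k\bigl(\log n + l + 2\log(l+1)\bigr) + O(1),
\]
which matches the redundancy claimed in the corollary up to the $O(\log\log n)$ slack allowed by Theorem~\ref{thm:GV}.

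Second, I would check $\poly(n)$-explicitness of $G_n$. Fix any canonical ordering on the neighbors of a vertex $x \in \Sigma_2^n$: for instance, iterate over all tuples specifying the (at most $2k$) substring positions in $x$, the (at most $2k$) insertion positions, the lengths (each at most $l$) of the inserted and deleted bursts, and the contents of the inserted bursts. The total number of such tuples is bounded by $O((n(l+1)^2 2^l)^{2k}) = \poly(n)$, and each tuple can be enumerated and the corresponding candidate neighbor constructed in $\poly(n)$ time. De-duplicating and sorting these candidates, together with a check that each one is actually a neighbor of $x$, can clearly be done in $\poly(n)$ time, so the $i$-th neighbor in the induced ordering is available in polynomial time. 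Hence $G_n$ is $\poly(n)$-explicit.

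With both hypotheses verified, Theorem~\ref{thm:GV} yields a coloring function $\Phi: \Sigma_2^n \to \Sigma_2^{\rho}$ with $\rho = 2\log\Delta(G_n) + O(\log\log n) = 4k(\log n + l + 2\log(l+1)) + O(\log\log n)$, such that $\{(x, \Phi(x)) : x \in \Sigma_2^n\}$ is a systematic error-correcting code for $\cE$ with $\poly(n)$ encoding and decoding. There is no substantial obstacle here; the only thing to be careful about is making the explicitness argument precise, since the definition requires a \emph{fixed} vertex ordering and indexed access to the $i$-th neighbor rather than merely listing all neighbors in aggregate. The bookkeeping above handles that, and the rest is immediate from Theorem~\ref{thm:GV}.
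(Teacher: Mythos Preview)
Your proposal is correct and follows essentially the same approach as the paper: invoke the degree bound $\Delta(G_n)=O((n(l+1)^2 2^l)^{2k})$ from Lemma~\ref{lem:existential}, observe that $l=O(\log n)$ makes this polynomial, argue explicitness via the same enumeration, and apply Theorem~\ref{thm:GV}. The paper's own proof is slightly terser on the explicitness step (it just points back to the counting argument of Lemma~\ref{lem:existential}), but the content is the same.
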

\begin{proof}
    Let $\cE$ be $k$ $l$-substring edits, then $G_n$ has $2^n$ vertices. The argument in the proof of  \Cref{lem:existential} shows that $\Delta(G_n)=O((n(l+1)^22^l)^{2k})$.   Moreover, the  argument in the proof of \Cref{lem:existential} gives us a way to determine the $i$-th neighbor of $u$ for $i\in [\Delta(G_n)]$. So 
 $G_n$ is $O((n(l+1)^22^l)^{2k})$-explicit. When $l=O(\log n)$, $\Delta(G_n)=\poly(n)$ and  $G_n$ is $\poly(n)$-explicit. The desired result follows from Theorem~\ref{thm:GV}.
\end{proof}

\subsection{Codes Correcting a Long Burst of Deletions of Variable Length with Asymptotically Optimal Redundancy}\label{sec:long}
Note that the result in \Cref{cor:l=logn} only works for $l=O(\log n)$. When $l\neq O(\log n)$, the maximum degree of $G_n$ in the proof of \Cref{cor:l=logn} may not be a polynomial in $n$. Thus the encoder and decoder may not be polynomial-time. To construct codes that work for $l= \omega( \log n)$, we first consider a simple case, where the error model is one burst of deletions of length~$l$.

In this subsection, we use Lemma~\ref{lem:deltaG2} to  construct codes correcting a burst of $l$ deletions with redundancy $l+O(\log n)$, which is asymptotically optimal when $l=\omega( \log n)$. Moreover, our codes have polynomial-time encoders and decoders. 
Specifically, let $\varepsilon$ be a burst of at most $l$  deletions. We want to construct a syndrome $\Phi$, such that if $y\in\cE(x)$, then one can recover $x$ with $y$ and $\Phi(x)$. 

If we use the method in the last subsection, we will find that the degree of the graph $G_n$ exceeds $\poly(n)$, therefore we cannot apply  Theorem~\ref{thm:GV} directly to construct a coloring that is computable in $\poly(n)$.
Note that when $l$ is a constant,  $\Phi(x)$ should mainly help recover  the deletion position, and this needs about $\log n$ redundancy. While when $l=\omega( \log n)$, $\Phi(x)$ should mainly help recover  the deleted string, which needs about $l$ redundancy. This is the key difference between the case when $l$ is a constant and when $l=\omega( \log n)$. 
By applying Linial's algorithm, one can `separate' recovering the deleted string and recovering the deletion position. 

Our method is inspired by the precoding method in \cite{song22systematic}. Specifically, we define 
\begin{align*}
    \phi_1(x)=\left(\bigoplus_{i\equiv_l 1}x_i,\bigoplus_{i\equiv_l 2}x_i,\dotsc, \bigoplus_{i\equiv_l l}x_i\right),
\end{align*}
where $i\equiv_l j$ denotes $i\equiv j \mod l$.
Let $G$ be a graph whose vertex set is $\Sigma_2^n$, $(u,v)$ is an edge if $\phi_1(u)=\phi_1(v)$ and there exists some $y$ such that $y\in\cE(u)$ and $y\in\cE(v)$.

We first bound the maximum degree of $G$. To this end, we fix a vertex $u$ of $G$ and count the number of its neighbors. For a vertex $v$ to be a neighbor of $u$, there exists some $y$ such that $y\in\cE(u)$ and $y\in\cE(v)$. There are $O((l+1)n)$ choices for $y$ ($l+1$ for the possible  number of deletion bits and $n$ for the possible deletion positions). For each $y$, there are $O(n)$ choices for $v$. This is because there are $O(n)$ choices for the insertion position to get $v$ from $y$, and once the insertion position is fixed, since $\phi_1(u)=\phi_1(v)$, $v$ is determined. Thus $\Delta(G)=O((l+1)n^2)$. Moreover, the above argument gives us a way to determine the $i$-th neighbor of $u$ for $i\in [\Delta(G)]$. So 
 $G$ is $O((l+1)n^2)$-explicit. Therefore, by Lemma~\ref{lem:deltaG2}.\ref{lem:deltaG22}), one can construct a coloring $\phi_2:V(G)\rightarrow [\Tilde{O}((l+1)^2n^4)]$, which can be computed in $\poly(n)$ time.

 \begin{theorem}
    Let $l$ be a positive integer. 
     If $y$ is obtained from $x$ by a burst of at most $l$ deletions, then one can recover $x$ given $y$, $\phi_1(x)$, and $\phi_2(x)$ in $\poly(n)$ time.
 \end{theorem}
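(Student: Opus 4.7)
The plan is to enumerate all possible burst positions consistent with $y$, reconstruct a unique candidate preimage from $\phi_1(x)$ for each, and then single out $x$ by its $\phi_2$-color. Since the length $n$ of $x$ is known to the decoder, the true burst length equals $d := n - |y| \le l$; the only remaining unknown is the starting position $p \in \{1,\ldots,n-d+1\}$ of the deleted burst. The case $d = 0$ is trivial (then $y = x$), so assume $d \ge 1$.

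For each candidate $p$, construct $\hat{x}(p) \in \Sigma_2^n$ by copying $y_1,\ldots,y_{p-1}$ into positions $1,\ldots,p-1$ and $y_p,\ldots,y_{n-d}$ into positions $p+d,\ldots,n$, leaving positions $p,\ldots,p+d-1$ to be filled. Because $d \le l$, these $d$ consecutive unknown indices lie in $d$ distinct residue classes modulo $l$. For each such residue class $j$, exactly one unknown appears in the parity equation $\phi_1(x)_j = \bigoplus_{i \equiv_l j} \hat{x}(p)_i$, which therefore pins that bit down. The remaining $l-d$ residue classes contain no unknowns; their parity equations serve as consistency checks, and any $p$ for which some check fails is discarded.

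By construction, every surviving $\hat{x}(p)$ satisfies $\phi_1(\hat{x}(p)) = \phi_1(x)$ and $y \in \cE(\hat{x}(p))$, so any two surviving candidates are adjacent in $G$, and the true $x$ is among them. Since $\phi_2$ is a proper coloring of $G$, the unique surviving candidate with $\phi_2(\hat{x}(p)) = \phi_2(x)$ must be $x$. There are at most $n-d+1 = O(n)$ candidates, each built in $O(n)$ time, and $\phi_2$ evaluates in $\poly(n)$ time by Lemma~\ref{lem:deltaG2}.\ref{lem:deltaG22}, so the whole decoder runs in $\poly(n)$ time. The only point that needs careful verification is that $d \le l$ genuinely forces the $d$ consecutive unknown indices to occupy $d$ distinct residues mod $l$, which is precisely why $\phi_1$ was defined with period $l$; everything else then falls out of the coloring property of $\phi_2$ on the graph $G$ defined above.
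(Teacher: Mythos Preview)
Your proposal is correct and follows essentially the same approach as the paper: enumerate the $O(n)$ candidate burst positions, use $\phi_1$ to uniquely fill in the missing bits for each (exploiting that $d\le l$ consecutive positions hit distinct residues mod $l$), observe that the surviving candidates form a clique in $G$, and use the coloring $\phi_2$ to isolate $x$. Your explicit treatment of the $l-d$ consistency checks is in fact slightly more careful than the paper's presentation, which simply asserts a unique $z_i$ per position without noting that some positions may yield no consistent candidate.
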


 \begin{proof}
     Given $y$, we recover $x$ by inserting $\abs{x}-\abs{y}$ consecutive bits. There are $\abs{y}+1$ possible insertion positions in total. For each insertion position $i\in [\abs{y}+1]$, there is only one possible $z_i\in \Sigma_2^n$ (the string after insertion from $y$) satisfying $\phi_1(z_i)=\phi_1(x)$ (one can solve  $z_i$ in $O(l)$ time). Therefore $x\in \{z_i:i\in [\abs{y}+1]\}$. Moreover, note that $\{z_i:i\in [\abs{y}+1]\}$ forms a clique of $G$. So there is a unique $z_i$ such that $\phi_2(z_i)=\phi_2(x)$ since $\phi_2$ is a coloring of $G$. This $z_i$ is $x$. The whole process runs in $\poly(n)$ time since $\phi_2$ can be computed in $\poly(n)$ time.
 \end{proof}
Let 
\begin{align*}
    C:=\{x\in \Sigma_2^n:\phi_1(x)=a, \phi_2(x)=b\}.
\end{align*}
Then $C$ can correct a burst of at most $l$ deletions. The redundancy of $C$ for some $a$ and $b$ is $l+O(\log n)$, which is asymptotically optimal when $l=\omega( \log n)$.

\subsection{Codes Correcting a Constant Number of Long Substring Edits}
In this subsection, we study codes correcting $k$ $l$-substring edits for $l=\omega(\log n)$. Note that the $\phi_1$ in the last subsection serves as an erasure correcting syndrome.
Inspired by the last subsection, we will use a new  $\phi_1$, a syndrome  correcting $2k$ bursts of erasures of length $l$, which we can  construct using Reed-Solomon codes. Specifically, let $f_{q,m,\kappa}$ be such that  $\{(u,f_{q,m,\kappa}(u)):u\in \mathbb{F}_{q}^m\}=\text{RS}_q(m+\kappa,\kappa)$, where $q\geq m+\kappa$. Let $\phi_1(x):=f_{2^l,\frac{n}{2^l},4k}(\tilde{x})$, where $\tilde{x}$ is obtained by dividing the length $n$ string $x$ into blocks of length $l$, and view each block as an element in $\mathbb{F}_{2^l}$. Since $l=\omega(\log n)$, we have $2^l>\frac{n}{2^l}+4k$ when $n$ is sufficiently large. Note that $\phi_1$ is a syndrome that can correct $2k$ bursts of erasures of length $l$. Let $G$ be a graph whose vertex set is $\Sigma_2^n$, $(u,v)$ is an edge if $\phi_1(u)=\phi_1(v)$ and there exists some $y$ such that $y\in\cE(u)$ and $y\in\cE(v)$.

We first bound the maximum degree of $G$. To this end, we fix a vertex $u$ of $G$ and count the number of its neighbors. For a vertex $v$ to be a neighbor of $u$, there exists some $y$ such that $y\in\cE(u)$ and $y\in\cE(v)$. 
This implies that $v$ is obtained by deleting $2k$ bursts of length at most $l$ and inserting $2k$ bursts of length at most $l$ at same positions. There are $(n(l+1))^{2k}$ ways to delete $2k$ bursts of length at most $l$, and  there are $(l+1)^{2k}$ ways to choose insertion lengths of the $2k$ bursts insertions of length at most $l$. Having determined these, $v$ is determined since $\phi_1(v)=\phi_1(u)$. Thus, $\Delta(G)\leq(n(l+1)^2)^{2k}$.

 Moreover, the above argument gives us a way to determine the $i$-th neighbor of $u$ for $i\in [\Delta(G)]$. So 
 $G$ is $O((n(l+1)^2)^{2k})$-explicit  (the decoding of Reed-Solomon codes takes $\poly(n)$ time). Therefore, by Lemma~\ref{lem:deltaG2}.\ref{lem:deltaG22}), one can construct a coloring $\phi_2:V(G)\rightarrow [\Tilde{O}((l+1)^{8k}n^{4k})]$, which can be computed in $\poly(n)$ time.

\begin{theorem}
  Let $k$ be a positive integer and  $l=\omega(\log n)$.
  If $y$ is obtained from $x$ by $k$ $l$-substring edits, then one can recover $x$ given $y$, $\phi_1(x)$, and $\phi_2(x)$ in $\poly(n)$ time.
 \end{theorem}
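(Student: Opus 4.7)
The plan is to mirror the proof of the burst-deletion theorem in Section~\ref{sec:long}: first enumerate a polynomial-size list of candidates using $y$ together with $\phi_1(x)$, then use $\phi_2(x)$, viewed as a proper color in $G$, to pick the unique correct candidate. The graph $G$ and the Reed--Solomon syndrome $\phi_1$ have been set up precisely so that these two pieces fit.

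For the enumeration step, note that an $l$-substring edit is fully described by a triple $(p,a,b)$: the starting position $p$ in the preimage, the length $a\le l$ of the substring being replaced, and the length $b\le l$ of its replacement. I would iterate over every ordered sequence of $k$ such triples, giving at most $(n(l+1)^2)^k=\poly(n)$ \emph{edit patterns}. Fix one such pattern and a hypothetical preimage $z\in\Sigma_2^n$ consistent with it. The non-edited positions of $z$ are forced: they must equal the corresponding substrings of $y$ under the shifts induced by the $b_i-a_i$. The remaining (edited) positions of $z$ form at most $k$ intervals of length at most $l$, so in the partition of $z$ into $n/l$ length-$l$ blocks used to define $\tilde z$, at most $2k$ blocks are affected. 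Treating those blocks as erasures in $\tilde z\in\mathbb{F}_{2^l}^{n/l}$ and using $\phi_1(x)$ as the parity attached to $\tilde z$, Reed--Solomon erasure decoding uniquely recovers $\tilde z$, hence $z$, in $\poly(n)$ time, since $4k$ parity symbols comfortably cover $2k$ erasures.

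Each candidate $z$ produced this way satisfies $\phi_1(z)=\phi_1(x)$ and $y\in\cE(z)$ by construction, so any two candidates are adjacent in $G$; the full candidate list therefore forms a clique. Because $\phi_2$ is a proper coloring of $G$, at most one candidate can satisfy $\phi_2(z)=\phi_2(x)$. The true $x$ is itself produced by the enumerator when the guessed triple sequence matches the actual edits, and $\phi_2(x)=\phi_2(x)$ trivially, so the unique candidate matching $\phi_2(x)$ must be $x$. Complexity: $\poly(n)$ patterns, each costing $\poly(n)$ for RS decoding and $\poly(n)$ for evaluating $\phi_2$ (as established when $\phi_2$ was constructed via Lemma~\ref{lem:deltaG2}), so the total running time is $\poly(n)$.

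The main obstacle I anticipate is the bookkeeping in the enumeration step: I must verify that every legitimate sequence of $k$ $l$-substring edits taking $x$ to $y$ is captured by some enumerated triple sequence (handling adjacent or overlapping edits carefully), and that the edited positions truly fall inside at most $2k$ blocks of the $l$-partition regardless of whether those intervals straddle block boundaries. Once this combinatorial accounting is pinned down, everything else is routine, since $\phi_1$ was tailored to correct exactly the resulting $2k$ block erasures and $G$ was defined so that the Reed--Solomon-consistent candidates are precisely a clique on which $\phi_2$ separates $x$ from all impostors.
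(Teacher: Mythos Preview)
Your proposal is correct and follows essentially the same approach as the paper: enumerate a $\poly(n)$-size list of candidates by guessing the $k$ edit positions and the deletion/insertion lengths, use the Reed--Solomon syndrome $\phi_1(x)$ to fill in the erased blocks for each guess, observe that the resulting candidates form a clique in $G$, and then single out $x$ via the coloring $\phi_2$. Your write-up is in fact slightly more careful than the paper's in explaining why the edited positions touch at most $2k$ length-$l$ blocks (so that the $4k$ RS parity symbols suffice), and the bookkeeping obstacle you flag is indeed routine once one notes that an interval of length at most $l$ can straddle at most two consecutive blocks.
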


 \begin{proof}
     Given $y$, we recover $x$ by performing $k$ $l$-substring edits on $y$. There are  $(\abs{y}+1)^k$ substring edit  positions in total. For each substring edit positions $\boldsymbol{i}=(i_1,i_2,\dotsc,i_k)$, there are $(l+1)^k$ possible deletion lengths $\boldsymbol{l}=(l_1,l_2,\dotsc,l_k)$ and $(l+1)^k$ possible insertion lengths $\boldsymbol{L}=(L_1,L_2,\dotsc,L_k)$. After having determined $\boldsymbol{i}$, $\boldsymbol{l}$ and $\boldsymbol{L}$, 
     there
     is at most one possible $z_{\boldsymbol{i},\boldsymbol{l},\boldsymbol{L}}\in \Sigma_2^n$  satisfying $\phi_1(z_{\boldsymbol{i},\boldsymbol{l},\boldsymbol{L}})=\phi_1(x)$ since $\phi_1$ can correct $k$ burst of $l$ erasures (one can solve $z_{\boldsymbol{i},\boldsymbol{l},\boldsymbol{L}}$ in $\poly(n)$  time). Therefore $x\in \{z_{\boldsymbol{i},\boldsymbol{l},\boldsymbol{L}}:\boldsymbol{i}\in [\abs{y}+1]^k,\boldsymbol{l}\in [l+1]^k,\boldsymbol{L}\in[l+1]^k\}$. 
     
     Moreover, note that $\{z_{\boldsymbol{i},\boldsymbol{l},\boldsymbol{L}}:\boldsymbol{i}\in [\abs{y}+1]^k,\boldsymbol{l}\in [l+1]^k,\boldsymbol{L}\in[l+1]^k\}$ forms a clique of $G$. So there is a unique $z_{\boldsymbol{i},\boldsymbol{l},\boldsymbol{L}}$ such that $\phi_2(z_{\boldsymbol{i},\boldsymbol{l},\boldsymbol{L}})=\phi_2(x)$ since $\phi_2$ is a coloring of $G$. This $z_{\boldsymbol{i},\boldsymbol{l},\boldsymbol{L}}$ is $x$. The whole process runs in $\poly(n)$ time since 
 $\phi_1$ and $\phi_2$ can be computed in $\poly(n)$ time and $\{z_{\boldsymbol{i},\boldsymbol{l},\boldsymbol{L}}:\boldsymbol{i}\in [\abs{y}+1]^k,\boldsymbol{l}\in [l+1]^k,\boldsymbol{L}\in[l+1]^k\}$ has $\poly(n)$ size.
 \end{proof}

Let 
\begin{align*}
    C:=\{x\in \Sigma_2^n:\phi_1(x)=a, \phi_2(x)=b\}.
\end{align*}
Then $C$ can correct $k$ $l$-substring edits. The redundancy of $\phi_1$ is $4kl$, and the redundancy of $\phi_2$ is $4k\log n+4k\log (l+1)+O(\log\log n)$.
Thus, 
the redundancy of $C$ for some $a$ and $b$ is $4kl+4k\log n+8k\log (l+1)+O(\log\log n)$,  which is approximately twice the Gilbert-Varshamov bound and 8 times the Hamming bound.

\section{Conclusion}
In this work, we proposed a framework for constructing error-correcting codes for a constant number of errors of any type, as long as the confusion graph for the channel satisfies mild conditions. This framework is based on distributed graph coloring, powered by Linial's algorithm. The proposed method achieves redundancy twice the GV bound, where in contrast to prior approaches, there is no need to start with a code with reasonably low redundancy. Furthermore, the proposed distributed graph coloring approach is extendable to other coding theory problems. In particular, to construct list-decodable codes, we defined a generalization of coloring, and constructed list-decodable codes with constant list size for a constant number of errors with redundancy less than twice the GV bound based on an existential result of a combinatorial object. We show that the probabilistic construction of these objects succeeds with overwhelming probability. Our code construction is more flexible and can be used to correct edits.   To the best of our knowledge, this is the only general construction method for list decoding with constant list size for a constant number of errors with redundancy less than twice the GV bound. Furthermore, we introduced the problem of incremental synchronization when there is uncertainty about the distance between the two versions of the data, as is almost always the case in practice. Finally, we introduced a degree reduction technique that enabled us to construct the first codes correcting a burst of $\omega(\log n)$ edits.

\ifhighlightchanges
\color{blue}
\fi
\section{appendix}
In the main body, we presented our results by assuming the color (or syndrome) $\Phi(x)$ in codes of the form $\{(x,\Phi(x)):x\in\Sigma_q^n\}$ can be transmitted without errors. Here, we remove this assumption by showing that $\Phi(x)$ can be protected with asymptotically negligible redundancy. 
\begin{lemma}\label{lem:rep1}
    Let $\cE$ be a channel applying $k$ edits to its input. Furthermore, let $G_n$ and $G_m$ be the confusion graphs for $\Sigma_q^n$ and $\Sigma_q^m$, respectively, with respect to $\cE$. Suppose $\Phi:V(G_n)\to \Sigma_q^{m}$ is a coloring of $G_n$, and $\Phi_1:V(G_m)\to \Sigma_q^{r}$ is a coloring of $G_m$. Let $\Rep_{2k+1}$ be an encoder that repeats every symbol $2k+1$ times. Then
    \[ C:=\{(x,\Phi(x),\Rep_{2k+1}(\Phi_{1}(\Phi(x)))):x\in\Sigma_q^n\}
    \] is a systematic code that can correct $k$ edits.
\end{lemma}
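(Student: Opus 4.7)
The plan is to argue uniqueness of decoding by contradiction. Suppose codewords $c=(x,\Phi(x),\Rep_{2k+1}(\Phi_1(\Phi(x))))$ and $c'=(x',\Phi(x'),\Rep_{2k+1}(\Phi_1(\Phi(x'))))$ can both be mapped to a common $y$ by at most $k$ edits. The triangle inequality for edit distance gives $d_E(c,c')\le 2k$, and I will show this forces $x=x'$. The systematic property is then immediate since each codeword begins with $x$.

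The central technical ingredient I would isolate first is a concatenation lemma: if $c=A_1A_2A_3$ and $c'=A_1'A_2'A_3'$ with $|A_i|=|A_i'|$ for each $i$, then $d_E(A_i,A_i')\le d_E(c,c')$ for each $i$. I would prove the two-part version ($c=AB$, $c'=A'B'$) first, by taking an optimal monotonic alignment realizing $d_E(c,c')$ and restricting it to the pair $(B,B')$. Monotonicity is crucial: it forbids the simultaneous presence of ``crossing'' matches $B\leftrightarrow A'$ and $A\leftrightarrow B'$, so one of these two cross-types must be empty. A direct accounting of matched, substituted, deleted, and inserted positions in the restricted alignment then shows its cost does not exceed the original. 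The three-part statement follows by applying the two-part claim twice: once to separate $A_3$ from $A_1A_2$, and again within $A_1A_2$ to isolate $A_1$ and $A_2$.

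With this lemma, the proof peels off the three sections in turn. From $d_E(A_3,A_3')\le 2k$ and the standard fact that distinct $(2k+1)$-repetition codewords are at edit distance at least $2k+1$, I conclude $A_3=A_3'$, i.e., $\Phi_1(\Phi(x))=\Phi_1(\Phi(x'))$. Next, $\Phi(x)$ and $\Phi(x')$ are vertices of $G_m$ receiving the same $\Phi_1$-color; if they were distinct, the bound $d_E(\Phi(x),\Phi(x'))\le 2k$ would make them adjacent in $G_m$ (since adjacency in the confusion graph for $k$ edits is exactly edit distance at most $2k$), contradicting that $\Phi_1$ is a proper coloring. Hence $\Phi(x)=\Phi(x')$. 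The identical argument, now applied to $A_1,A_1'$ using $\Phi$ as a proper coloring of $G_n$, forces $x=x'$.

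The main obstacle I expect is the concatenation lemma; although it is compact once monotonicity is invoked, it requires careful bookkeeping of how the matched/deleted/inserted counts restrict to a single part, and in particular of the role played by cross-matches that are ruled out by non-crossing. Once that lemma is in place, the remaining three-step peel through the repetition code, then $\Phi_1$, then $\Phi$ is essentially mechanical, and relies only on the already-stated minimum-distance properties of each outer code.
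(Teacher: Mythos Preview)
Your proof is correct but takes a genuinely different route from the paper. The paper gives a \emph{constructive decoder}: from a received word $y$ it forms overlapping windows $\hat x=y_{[1,n+\delta]}$, $\hat\Phi=y_{[n+1,n+m+\delta]}$, and $y_{[n+m+1,|y|]}$ (where $\delta=|y|-|c|$), argues each window is within $k$ edits of the corresponding block of $c$, and then decodes back-to-front: recover $\Phi_1(\Phi(x))$ from the repetition block, then $\Phi(x)$, then $x$. You instead prove a \emph{minimum-distance} statement: assuming two codewords $c,c'$ share a $k$-edit descendant, the equal-length concatenation lemma bounds $d_E$ on each block by $d_E(c,c')\le 2k$, and you peel off the repetition layer, then $\Phi_1$, then $\Phi$ to force $x=x'$. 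Both arguments ultimately hinge on the same ``parts are no farther than the whole'' phenomenon, but the paper applies it between $c$ and the received $y$ (unequal lengths, handled by the $\delta$-shift), whereas you apply it between two equal-length codewords, which lets you invoke a clean combinatorial lemma. What the paper's route buys is an explicit polynomial-time decoding procedure, which matters for the complexity claims used elsewhere; what your route buys is a self-contained distance argument whose key lemma (equal-length prefix/suffix edit-distance monotonicity) is reusable and, once the monotone-alignment case split and the counting identity $|M_{AB'}|=u_B-u_{B'}$ (forced by $|A|=|A'|$, $|B|=|B'|$) are written out, entirely elementary.
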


\begin{proof}
    Let $E:\Sigma_q^n\to \Sigma_q^{n+m+(2k+1)r}$, $E(x):=(x,\Phi(x),\Rep_{2k+1}(\Phi_{1}(\Phi(x))))$ be the encoder for this code. Let $c=E(x)$ be the codeword encoding $x$. Suppose that $y$ is obtained from $c$ through at most $k$ edits. Let $\delta = \abs{y}-\abs{c}=\abs{y}-(n+m+(2k+1)r)$ be the \emph{net} number of insertions. Then $\hat x := y_{[1,n+\delta]}$ is obtained from $x$ through at most $k$ edits, and $\hat\Phi := y_{[n+1,n+m+\delta]}$ is obtained from $\Phi(x)$ through at most $k$ edits, and $y_{[n+m+1,\abs{y}]}$ is obtained from $\Rep_{2k+1}(\Phi_{1}(\Phi(x)))$ through at most $k$ edits. Since a repetition of $2k+1$ times can correct $k$ edits, we can get $\Phi_{1}(\Phi(x))$ from $y_{[n+m+1,\abs{y}]}$. Since $\hat\Phi$ is obtained from $\Phi(x)$ through at most $k$ edits, we can get $\Phi(x)$ from $\hat\Phi$ and $\Phi_{1}(\Phi(x))$. Since $\hat x$ is obtained from $x$ through at most $k$ edits, we can get $x$ from $\hat x$ and $\Phi(x)$. Hence, we can recover $x$ given $y$.
\end{proof}

The code and the proof are similar for list-decodable codes. We state the following lemma for completeness.
\begin{lemma}\label{lem:rep1l}
    Let $\cE$ be a channel applying $k$ edits to its input. Let $H_n$ and $G_n$ be the confusion hypergraph and graph over vertex set $\Sigma_q^n$ with respect to $\cE$, respectively. Similarly, define $H_m$ and $G_m$ over vertex set $\Sigma_q^m$. Suppose $\Phi:V(H_n)\to \Sigma_q^{m}$ is an $\ell$-labeling of $H_n$ and $\Phi_1:V(G_m)\to \Sigma_q^{r}$ is a coloring of $G_m$. Let $\Rep_{2k+1}$ be an encoder that repeats every symbol $2k+1$ times. Then 
    $$C:=(x,\Phi(x),\Rep_{2k+1}(\Phi_{1}(\Phi(x))))$$ 
    is a systematic code that can correct $k$ edits with list size $\ell$.
\end{lemma}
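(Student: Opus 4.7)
The plan is to mirror the proof of \Cref{lem:rep1} step-for-step, replacing the final unique-decoding step with an application of the $\ell$-decoder provided by the labeling $\Phi$. Given a received word $y$ obtained from the codeword $c(x) := (x, \Phi(x), \Rep_{2k+1}(\Phi_1(\Phi(x))))$ by at most $k$ edits, the decoder parses $y$ into three segments $(\hat x, \hat\Phi, \hat R)$ of lengths $n+\delta_1$, $m+\delta_2$, and $(2k+1)r+\delta_3$, where $\delta_1+\delta_2+\delta_3=|y|-(n+m+(2k+1)r)$ and each $|\delta_i|\le k$. Since at most $k$ edits are applied in total, there are at most $(2k+1)^2$ such candidate parsings to try.

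For each candidate parsing, the decoder proceeds in reverse. First, if $\hat R$ is within $k$ edits of some $\Rep_{2k+1}(\beta)$, the minimum-edit-distance property of $\Rep_{2k+1}$ recovers $\beta$ uniquely (otherwise the parsing is discarded). Second, any two preimages $\alpha,\alpha'\in\cE^{-1}(\hat\Phi)$ are adjacent in $G_m$, so the proper coloring $\Phi_1$ assigns them distinct colors; at most one such $\alpha$ satisfies $\Phi_1(\alpha)=\beta$, which pins down a unique candidate $\alpha$ for $\Phi(x)$. Third, $E:=\cE^{-1}(\hat x)$ is an edge of $H_n$, and the $\ell$-decoder of $\Phi$ applied to $(\alpha,E)$ returns a set of at most $\ell$ vertices in $\Sigma_q^n$. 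Correctness follows because the actual edit trace from $c(x)$ to $y$ induces one of the $(2k+1)^2$ candidate parsings, and at that parsing all three recovery steps succeed with the correct values, so $x$ lies in the corresponding $\ell$-decoder output.

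To sharpen the aggregate list-size bound from $(2k+1)^2 \ell$ down to the claimed $\ell$, the key lemma to establish is that any two $x_1,x_2\in\Sigma_q^n$ whose codewords are both within $k$ edits of $y$ must satisfy $\Phi_1(\Phi(x_1))=\Phi_1(\Phi(x_2))$, and consequently $\Phi(x_1)=\Phi(x_2)$ by the coloring argument above; this collapses the outputs of all candidate parsings to a single invocation of the $\ell$-decoder on one edge of $H_n$ with one label. The triangle inequality forces $d_E(c(x_1),c(x_2))\le 2k$, while the third components $\Rep_{2k+1}(\Phi_1(\Phi(x_i)))$ have pairwise edit distance at least $2k+1$ whenever their payloads differ.

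I expect this last step to be the main obstacle, since edit distance does not decompose cleanly across concatenated blocks: an alignment of the full codewords may straddle block boundaries, so the third-block edit distance induced by a short alignment could in principle exceed $2k$. A careful accounting, bounding the number of boundary-crossing matches by the net shift accumulated (which is at most $2k$ in either direction), should yield the required contradiction, perhaps at the mild cost of strengthening the repetition parameter from $2k+1$ to a small constant multiple. If this sharpening turns out to be too costly, the same approach still delivers a list-decodable code with list size $O_k(\ell)$, which suffices for the intended constant-$k$ regime.
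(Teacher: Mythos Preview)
Your multi-parsing approach is unnecessarily complicated, and the sharpening step you identify as the obstacle is genuinely problematic for proving the lemma \emph{as stated} (list size exactly $\ell$, repetition parameter exactly $2k+1$). The paper avoids all of this with a single deterministic parsing, exactly as in \Cref{lem:rep1}.

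The trick you are missing is that one need not guess the per-block shifts $\delta_1,\delta_2,\delta_3$. Instead, compute the global net shift $\delta=|y|-(n+m+(2k+1)r)$ and take the three \emph{overlapping} windows $\hat x:=y_{[1,n+\delta]}$, $\hat\Phi:=y_{[n+1,n+m+\delta]}$, and $\hat R:=y_{[n+m+1,|y|]}$. A short counting argument shows each window is within $k$ edits of the corresponding block: if the true block boundaries in $y$ sit at positions $n+\delta_A$ and $n+m+\delta_A+\delta_B$ (where $\delta_A,\delta_B,\delta_C$ are the per-block net shifts), then each chosen window differs from the true block only by boundary adjustments of size $|\delta_A|$, $|\delta_B|$, or $|\delta_C|$, and $k_A+k_B+k_C\le k$ absorbs everything. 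With this single parsing, steps 1 and 2 of your recovery already pin down $\Phi(x)$ uniquely, and a single call to the $\ell$-decoder on the edge $\cE^{-1}(\hat x)$ yields a list of size at most $\ell$ containing $x$. No union over parsings, no block-decomposition of edit distance, and no change to the repetition parameter is needed.
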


\begin{proof}
    The proof is similar to that of the previous lemma, except that in the last step, we recover a list of size $\ell$ that contains $x$ from $\hat x = y_{[1,n+\delta]}$ and $\Phi(x)$. 
\end{proof}

Now we apply Lemma~\ref{lem:rep1} to Corollary~\ref{cor:kedit} to get the following corollary on codes correcting $k$ edits.
\begin{corollary}
    There is an encoder $E:\Sigma_{q(n)}^n\to \Sigma_{q(n)}^{n+4k\log_{q(n)}n+4k+O(\log_{q(n)}\log n)}$ that can correct $k$ edits with redundancy $4k\log n+4k\log q(n)+O(\log\log n)$ bits.
\end{corollary}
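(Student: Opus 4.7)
The plan is to combine the coloring provided by Corollary~\ref{cor:kedit} with a second, much shorter coloring, and then invoke Lemma~\ref{lem:rep1} to obtain a standalone systematic encoder for the $k$-edit channel over $\Sigma_{q(n)}$.

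First, I would invoke Corollary~\ref{cor:kedit} to get a polynomial-time-computable coloring $\Phi:\Sigma_{q(n)}^n \to \Sigma_{q(n)}^m$ of the $k$-edit confusion graph $G_n$, where the output length in symbols is
\[
m = 4k\log_{q(n)} n + 4k + O(\log_{q(n)}\log n),
\]
or equivalently $m \log q(n) = 4k\log n + 4k\log q(n) + O(\log\log n)$ bits. This is the ``main'' syndrome that will sit alongside $x$ in the codeword.

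Next, I would apply the same construction ingredients (Lemma~\ref{lem:deltaG2} together with Theorem~\ref{thm:GV}) to the confusion graph $G_m$ over $\Sigma_{q(n)}^m$ with respect to the $k$-edit channel, producing a polynomial-time-computable coloring $\Phi_1:\Sigma_{q(n)}^m \to \Sigma_{q(n)}^r$. The hypotheses of Theorem~\ref{thm:GV} are inherited from the original setup: $G_m$ has $q(n)^m=\poly(n)$ vertices, maximum degree $\poly(n)$, and is $\poly(n)$-explicit, so $\Phi_1$ can be evaluated in polynomial time in $n$. The key point is that $\Phi(x)$ is already exponentially shorter than $x$ in bit-length, so the second coloring needs only $r\log q(n)=O(\log\log n)$ bits of output.

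Finally, plugging $\Phi$ and $\Phi_1$ into Lemma~\ref{lem:rep1} yields the systematic encoder
\[
E(x) = \bigl(x,\,\Phi(x),\,\Rep_{2k+1}(\Phi_1(\Phi(x)))\bigr),
\]
which corrects $k$ edits. Adding the lengths, the encoded length in symbols is $n + m + (2k+1)r = n + 4k\log_{q(n)} n + 4k + O(\log_{q(n)}\log n)$, and the redundancy in bits is
\[
m\log q(n) + (2k+1)r\log q(n) = 4k\log n + 4k\log q(n) + O(\log\log n),
\]
as claimed. The only delicate step is the parameter bookkeeping for $\Phi_1$: the repetition factor $2k+1$ is a constant, so as long as $\Phi_1$ itself produces only $O(\log\log n)$ bits, the total contribution of $\Rep_{2k+1}\circ\Phi_1$ is absorbed into the $O(\log\log n)$ slack already present in the output of $\Phi$. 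This is precisely why the two-level structure is used: repeating $\Phi(x)$ itself $2k+1$ times would multiply the leading $4k\log n+4k\log q(n)$ bits and spoil the redundancy.
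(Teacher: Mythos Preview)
Your proposal is correct and follows essentially the same approach as the paper: apply Corollary~\ref{cor:kedit} to obtain the main coloring $\Phi$ on $G_n$, apply the same machinery again at the shorter block length $m$ to obtain $\Phi_1$ on $G_m$, and then invoke Lemma~\ref{lem:rep1} to assemble the encoder $(x,\Phi(x),\Rep_{2k+1}(\Phi_1(\Phi(x))))$. The paper phrases the second step as a direct reuse of Corollary~\ref{cor:kedit} (with $n$ replaced by $m$), whereas you unwind it to Lemma~\ref{lem:deltaG2}/Theorem~\ref{thm:GV}; these are the same construction, and your length accounting matches the paper's.
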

\begin{proof}
We use the notation in Lemma~\ref{lem:rep1}.
    By Corollary~\ref{cor:kedit}, we can construct a coloring $\Phi:G_n\to \Sigma_{q(n)}^{4k\log_{q(n)}n+4k+O(\log_{q(n)}\log n)}$. Furthermore, by Corollary~\ref{cor:kedit}, we can choose $m=4k\log_{q(n)}n+4k+O(\log_{q(n)}\log n)$ and $r=O(\log_{q(n)}\log_{q(n)}n)$ in Lemma~\ref{lem:rep1}. So $m+(2k+1)r=4k\log_{q(n)}n+4k+O(\log_{q(n)}\log n)$. The redundancy is $4k\log n+4k\log q(n)+O(\log\log n)$ bits.
\end{proof}
We similarly have the following corollary on list-decodable codes correcting $k$ edits with list size $\ell$.
\begin{corollary}
    There is an encoder $E:\Sigma_{q(n)}^n\to \Sigma_{q(n)}^{n+(3+1/\ell)k\log_{q(n)}n+(3+1/\ell)k+O(\log_{q(n)}\log n)}$ that can correct $k$ edits with list size $\ell$ with redundancy $(3+1/\ell)k\log n+(3+1/\ell)k\log q(n)+O(\log\log n)$ bits.
\end{corollary}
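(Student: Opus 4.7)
The plan is to mirror the argument for the previous (unique-decoding) corollary, but replace Corollary~\ref{cor:kedit} with its list-decoding analogue Corollary~\ref{cor:keditl}, and Lemma~\ref{lem:rep1} with its list-decoding analogue Lemma~\ref{lem:rep1l}. In other words, we first obtain the short list-decoding syndrome $\Phi$ whose existence is guaranteed by Corollary~\ref{cor:keditl}, and then we protect $\Phi(x)$ against the same $k$ edits using a (uniquely decodable) inner coloring $\Phi_1$ composed with a $(2k+1)$-fold repetition, exactly as prescribed by Lemma~\ref{lem:rep1l}.

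First I would invoke Corollary~\ref{cor:keditl} over the alphabet $\Sigma_{q(n)}$. This yields an $\ell$-labeling
\[
\Phi:\Sigma_{q(n)}^n\to \Sigma_{q(n)}^{m},\qquad m=(3+1/\ell)k\log_{q(n)} n+(3+1/\ell)k+O(\log_{q(n)}\log n),
\]
of the confusion hypergraph $H_n$ for $k$ edits, computable and list-decodable in $\poly(n)$ time. Equivalently, $(3+1/\ell)k\log n+(3+1/\ell)k\log q(n)+O(\log\log n)$ bits of syndrome are enough to pin $x$ down to a list of size $\ell$ when $\Phi(x)$ is delivered error-free.

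Next, to encode $\Phi(x)$ itself, I would apply Corollary~\ref{cor:kedit} to the alphabet-$q(n)$ confusion graph $G_m$ of $k$-edit errors on strings of length $m$. Since $m=\poly(n)$ and $q(n)=\poly(n)=\poly(m)$, Corollary~\ref{cor:kedit} produces a coloring
\[
\Phi_1:\Sigma_{q(n)}^m\to\Sigma_{q(n)}^{r},\qquad r=O(\log_{q(n)} m)=O(\log_{q(n)}\log n),
\]
with polynomial-time encoding. Finally, following Lemma~\ref{lem:rep1l}, I would set
\[
E(x)=\bigl(x,\;\Phi(x),\;\Rep_{2k+1}(\Phi_1(\Phi(x)))\bigr).
\]
Lemma~\ref{lem:rep1l} directly guarantees that this yields a systematic $k$-edit list-decodable code with list size $\ell$, so correctness requires no new argument.

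All that is left is to tally lengths. The appended suffix has $m+(2k+1)r$ symbols, and $(2k+1)r=O(\log_{q(n)}\log n)$ is absorbed into the $O(\log_{q(n)}\log n)$ term of $m$. Hence the total redundancy in symbols is
\[
(3+1/\ell)k\log_{q(n)} n+(3+1/\ell)k+O(\log_{q(n)}\log n),
\]
which converts to $(3+1/\ell)k\log n+(3+1/\ell)k\log q(n)+O(\log\log n)$ bits, matching the statement. The only real subtlety is the bookkeeping between symbols and bits and making sure the nested application of Corollary~\ref{cor:kedit} to a block of length $m=\poly(n)$ still lies inside the regime $q(n)=\poly(\text{block length})$ required by that corollary; since $m\ge (3+1/\ell)k\log_{q(n)} n$ and $q(n)=\poly(n)$, this is immediate. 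No other step presents difficulty.
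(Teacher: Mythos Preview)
Your proposal is correct and follows essentially the same approach as the paper: invoke Corollary~\ref{cor:keditl} for the outer $\ell$-labeling $\Phi$, protect $\Phi(x)$ with an inner coloring $\Phi_1$ and $(2k+1)$-fold repetition via Lemma~\ref{lem:rep1l}, then tally the lengths. The only difference is cosmetic---the paper's proof cites Corollary~\ref{cor:keditl} again for the inner map (almost certainly a typo, since Lemma~\ref{lem:rep1l} requires $\Phi_1$ to be a coloring), whereas you correctly cite Corollary~\ref{cor:kedit} there.
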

\begin{proof}
We use the notation in Lemma~\ref{lem:rep1l}.
    By Corollary~\ref{cor:keditl}, we can construct an $\ell$-labeling $$\Phi:H_n\to \Sigma_{q(n)}^{(3+1/\ell)k\log_{q(n)}n+(3+1/\ell)k+O(\log_{q(n)}\log n)}.$$ Furthermore, by Corollary~\ref{cor:keditl}, we can choose $m=(3+1/\ell)k\log_{q(n)}n+(3+1/\ell)k+O(\log_{q(n)}\log n)$ and $r=O(\log_{q(n)}\log_{q(n)}n)$ in Lemma~\ref{lem:rep1l}. So $m+(2k+1)r=(3+1/\ell)k\log_{q(n)}n+(3+1/\ell)k+O(\log_{q(n)}\log n)$. The redundancy is $(3+1/\ell)k\log n+(3+1/\ell)k\log q(n)+O(\log\log n)$ bits.
\end{proof}

\ifhighlightchanges
\color{black}
\fi

\bibliographystyle{abbrv}
\bibliography{bibliofile}

\end{document}
